\newtheorem{theorem}{Theorem}[]
\newtheorem{lemma}{Lemma}[]
\newtheorem{proposition}{Proposition}[]
\DeclarePairedDelimiter\ceil{\lceil}{\rceil}
\DeclarePairedDelimiter\floor{\lfloor}{\rfloor}
\providecommand{\keywords}[1]
{
  \small	
  \textit{Keywords:} #1
}
\renewcommand{\section}{
	\@startsection
	{section}
	{1}
	{0pt}
	{1.1\baselineskip}
	{0.2\baselineskip}
	{\sc \centering}
}
\renewcommand{\subsection}{
	\@startsection
	{subsection}
	{1}
	{0pt}
	{1.1\baselineskip}
	{0.2\baselineskip}
	{\sc \centering}
}
\renewcommand{\subsubsection}{
	\@startsection
	{subsubsection}
	{1}
	{0pt}
	{1.1\baselineskip}
	{0.2\baselineskip}
	{\sc \centering}
}
\newcommand{\innerproduct}[2]{\langle #1, #2 \rangle}
\newcommand{\norm}[1]{\lVert #1 \rVert}
\newcommand{\bnorm}[1]{\bigg \lVert #1 \bigg \rVert}
\newcommand{\abs}[1]{\lvert #1 \rvert}
\newcommand{\babs}[1]{\bigg \lvert #1 \bigg \rvert}
\begin{document}
	
\title{\large\sc Multilevel Richardson-Romberg and Importance Sampling in Derivative Pricing}
\normalsize
\author{\sc{Devang Sinha} \thanks{Department of Mathematics, Indian Institute of Technology Guwahati, Guwahati-781039, Assam, India, e-mail: dsinha@iitg.ac.in}
\and \sc{Siddhartha P. Chakrabarty} \thanks{Department of Mathematics, Indian Institute of Technology Guwahati, Guwahati-781039, Assam, India, e-mail: pratim@iitg.ac.in,
Phone: +91-361-2582606, Fax: +91-361-2582649}}
\date{}
\maketitle

\begin{abstract}
In this paper, we propose and analyze a novel combination of multilevel Richardson-Romberg (ML2R) and importance sampling algorithm, with the aim of reducing the overall computational time, while achieving desired root-mean-squared error while pricing. We develop an idea to construct the Monte-Carlo estimator that deals with the parametric change of measure. We rely on the Robbins-Monro algorithm with projection, in order to approximate optimal change of measure parameter, for various levels of resolution in our multilevel algorithm. Furthermore, we propose incorporating discretization schemes with higher-order strong convergence, in order to simulate the underlying stochastic differential equations (SDEs) thereby achieving better accuracy. In order to do so, we study the Central Limit Theorem for the general multilevel algorithm. Further, we study the asymptotic behavior of our estimator, thereby proving the Strong Law of Large Numbers. Finally, we present numerical results to substantiate the efficacy of our developed algorithm. 
\end{abstract}

\keywords{Multilevel Monte-Carlo Estimator; Richardson-Romberg Extrapolation; Importance Sampling; Path-dependent Functional; Option Pricing}

\section{Introduction}
\label{Section_One_Introduction}

At the core of financial engineering lie the fundamental questions of pricing, portfolio management, and risk mitigation, which for all practical purposes, necessitates resorting to computational techniques, particularly, the Monte Carlo simulation approach. The emergence of this technique, as the key computational approach in the finance industry, can be attributed to the requirement of simulating high-dimensional stochastic models, which in turn is a result of the growth in computational complexity, pertaining to the dimension of the problem itself. More specifically, the main purpose of the usage of this procedure is to numerically approximate the expected value $E(Y)$, where $Y=P(X)$ is a functional of the random variable $X$, with (in case of the concerned applications) $X$ being driven by a stochastic differential equation (SDE).

The increasing interest and application of the Monte Carlo approach in finance industry, can also be attributed to the development of Multilevel Monte Carlo (MLMC) technique, due to Giles \cite{giles2008multilevel}, as it delivers remarkable improvement in computational complexity, over the standard Monte Carlo technique, in the biased framework. The interested readers may refer to the website of Giles ( \url{http://people.maths.ox.ac.uk/~gilesm/mlmc_community.html}) for an exhaustive and detailed listing of the flow of research concerning MLMC, both theoretical as well as application. The technique of MLMC draws its foundations from the technique of parametric integration, developed in \cite{heinrich2001multilevel}, which was extended in \cite{giles2008multilevel}, in order to construct a MLMC path method. While the control variate viewpoint used in \cite{giles2008multilevel} is similar to that of \cite{heinrich2001multilevel}, however, unlike in case of \cite{heinrich2001multilevel}, for the construction in \cite{giles2008multilevel}, the random variable is infinite dimensional, with respect to the Brownian path, and there is no parametric integration that is involved. MLMC, as the name suggests, makes use of various levels of resolution, from the coarsest to the finest, where, from the perspective of the control variate, the simulation executed using the coarsest path, is used as a control variate, while carrying out the finer path based estimation. The introduction to MLMC was accomplished through the Euler-Maruyama discretization, a thorough analysis of which suggested that in order to estimate the variance of MLMC, one has to examine the strong convergence properties.

In another 2008 article \cite{giles2008improved}, Giles presented a Milstein scheme for discretization of an SDE, an approach that led to improvement in the estimation of the variance. Further, it was demonstrated that, it could be more prudent to make use of different estimators for the coarser level and the finer level. The numerical results presented, as well as the thorough numerical analysis carried out in \cite{giles2019analysis}, demonstrates the efficacy of the method in case of option pricing problems. With the improvement in variance convergence, resulting from the usage of the Milstein scheme, the extension of the scheme in case of higher dimensional SDEs (involving the concept of Levy area) can be accomplished. However, in practice, there is no efficient way to simulate the Levy area, and the interested reader may refer to \cite {giles2013multilevel} for more discussion on this. In order to tackle this issue, the authors in \cite {giles2013antithetic}, introduced the antithetic MLMC estimator, based on the classical antithetic variance reduction technique, wherein the idea of antithetic MLMC exploits the flexibility of the general MLMC estimator.

Giles and Waterhouse \cite{giles2009multilevel}, published an extension of MLMC using Quasi-Monte Carlo (QMC) sample paths instead of Monte Carlo sample paths. The numerical results presented showed the effectiveness of QMC for SDE applications. However, the results presented in the paper were not supported by any theoretical development in this context. Nevertheless, recently, there has been some promising contribution towards the theoretical development of Quasi MLMC (QMLMC), for which one may refer to \cite{giles2015multilevel} and references therein. It has been shown that, under certain conditions, the QMLMC leads to multilevel methods, with a complexity which is $O(\epsilon^{-p})$ with $p<2$. In \cite{rhee2015unbiased}, Rhee and Glynn introduced a new approach of constructing an unbiased estimator, given a family of the biased estimators. The idea presented by them is closely related to that of MLMC, wherein the finest level of estimation is chosen, contingent on the level of accuracy.  The results presented in \cite{rhee2015unbiased} demonstrated a significant improvement in the computational cost over the standard MLMC. Also, they proved the square root convergence of the estimator, given that the strong order of convergence is greater than $\displaystyle{\frac{1}{2}}$ for the path functionals.

\section{Preliminaries}

In this section we discuss the necessary preliminaries and assumptions, as a prelude to the comprehensive study of the algorithm developed in this work. We present a brief outline of Multilevel Richardson-Romberg estimator, introduced in \cite{lemaire2017multilevel} and recall the importance sampling algorithm in the context of the standard Monte-Carlo algorithm.

\subsection{Multilevel Richardson-Romberg}
\label{section_one_ml2r}

In \cite{giles2008multilevel}, Giles explored the Richardson extrapolation in the context of both the MLMC and the standard Monte Carlo. The MLMC on its own was significantly better than the Richardson extrapolation. However, taken together, they worked even better. Lemaire and Pages  \cite{lemaire2017multilevel} took this approach and undertook further comprehensive error analysis. They combined the method developed in \cite{giles2008multilevel} and Multistep Richardson extrapolation in order to minimize the cost of simulation.

Suppose, we are interested in estimating the expected payoff value \textit{i.e.,} $\mathbb{E}\left(P(X_T)\right)$ in an option pricing problem, with finite time horizon $T>0$, where $(X_{t})_{0\leq t \leq T}$ is a process with values in $\mathbb{R}^{d}$, governed by the following SDE,
\begin{equation}
\label{oneeq:1}
dX_{t}=b(X_{t})dt+\sum\limits_{j=1}^{q} \sigma_{j}(X_{t})dW_{t}^{j},~X_{0}=x \in \mathbb{R}^{d}, 
\end{equation}
where, $W:=\begin{pmatrix}W_{1}&W_{2}&\dots&W_{q}\end{pmatrix}$ is a $q$-dimensional Brownian motion on a filtered probability space\\ $\left(\Omega,(\mathcal{F}_{t}\right)_{0 \leq t \leq T},\mathbb{P})$, with $b:\mathbb{R}^{d} \rightarrow \mathbb{R}^{d}$ and $\sigma_{j}:\mathbb{R}^{d} \rightarrow \mathbb{R}^{d}$ being the functions satisfying the following condition:
\begin{equation}
\label{onecond:A_1}
\forall x,y \in \mathbb{R}^{d},~\lvert b(x)-b(y)\rvert+\sum\limits_{j=1}^{q}\lvert \sigma_{j}(x)-\sigma_{j}(y)\rvert < K_{b,\sigma}\lvert x-y \rvert,~\text{where}~K_{b,\sigma}>0. 
\tag{A.1}
\end{equation}
The assumption (\ref{onecond:A_1}) ensures the existence and uniqueness \cite{kloeden1992stochastic} of the strong solution of \eqref{oneeq:1}. Now, in order to estimate $\mathbb{E}\left(P(X_T)\right)$ one should be able to simulate $(X_{t})_{0\leq t \leq T}$. However, except for a handful of cases, where one can devise an analytical or semi-analytical solution, we must resort to discretization schemes to perform the simulation. As discussed above, we use the Euler-Maruyama discretization scheme in order to simulate equation \eqref{oneeq:1}.

Let, $M \in \mathbb{N}$ be the refinement factor and let $n_{l}=M^{l-1}$, where $l=1,\dots,L$ is the level refinement. Let $\displaystyle{h_{l}=\frac{T}{n_{l}}}$ be the time step size on level $l$. For example, the Euler scheme on level $l$ is given by,
\[dX_{t}^{n_{l}}=b\left(X^{n_{l}}{\eta_{n_{l}(t)}}\right)dt+\sum\limits_{j=1}^{q}\sigma_{j}\left(X^{n_{l}}{\eta_{n_{l}(t)}}\right)dW_{t}^{j},~l=1,\dots,L,\]
where, $\eta_{n_{l}(t)}:=\floor{t/h_l}h_{l}$. We approximate $\mathbb{E}\left[P\left(X_{T}\right)\right]$ by 
$\mathbb{E}\left[P\left(X_{T}^{n_{L}}\right)\right]$, where $L$ is the finest level. In \cite{lemaire2017multilevel} the authors proposed a methodology combining the order bias cancellation of Multilevel Richardson Romberg (ML2R), with variance control of MLMC, to solve the problem of improving the computational complexity, along with determining the optimal parameters in order to achieve the desired root-mean-squared error, with minimum computational effort. The study was based on \textit{weak error} and \textit{strong error} assumptions, on the sequence $\left(P(X_{T}^{h})\right)_{h \in \mathcal{H}}$, where $\mathcal{H}=\left\{\frac{\mathbf{h}}{n}, n > 1\right\}$. The assumptions \cite{lemaire2017multilevel,kloeden1992stochastic,giorgi2017limit} are stated as follows:
\begin{enumerate}[(A)]
\item \textit{Weak Error}:
\begin{equation}
\label{oneeq:2}
\exists \alpha > 0, \Bar{L} \geq 1, (c_{l})_{1\leq l \leq \Bar{L}}, \quad \mathbb{E}[P(X^{h}_{T})]-\mathbb{E}[P(X^{0}_{T})]=\sum\limits_{k=1}^{\Bar{L}} c_{k}h^{\alpha k}+h^{\alpha \Bar{L}}\eta_{\Bar{L}}(h).  
\end{equation}
\item \textit{Strong Error}:
\begin{equation}
\label{oneeq:3}
\exists \beta>0, V_{1} \geq 1, \|P(X^{h}_{T})-P(X^{0}_{T})\|^{2}_{2}=\mathbb{E}\left[|P(X^{h}_{T})-P(X^{0}_{T})|^{2}\right] \leq V_{1}h^{\beta}.
\end{equation}
\end{enumerate}
With the consideration of the above assumptions, the ML2R estimator is defined as,
\begin{equation}
\label{oneeq:4}
\mathbf{J}^{N}_{\pi} \coloneqq \mathbb{E}\left[P\left(X_{T}^{n_{L}}\right)\right]= \frac{1}{N_{1}}\sum\limits_{k=1}^{N_{1}}P\left(X_{T}^{n_{1},k}\right)+\sum\limits_{l=2}^{L}\frac{\widetilde{W}_{l}}{N_{l}}\sum\limits_{k=1}^{N_{l}}\left(P\left(X_{T}^{n_{l},k}\right)-P\left(X_{T}^{n_{l-1},k}\right)\right),
\end{equation}
where $\pi=(h,\mu,L)$ are the optimal parameters obtained as the solution to,
\begin{equation}
\label{oneeq:5}
(\pi(\epsilon),N(\epsilon))={\arg\min}_{\|\mathbf{J}^{N}_{\pi}-\mathbf{J}_{0}\|_{2} \leq \epsilon}\text{Cost}(\mathbf{J}^{N}_{\pi}).
\end{equation}
where the cost function is given by \cite{giorgi2017limit},
\[\text{Cost}(\mathbf{J}^{N}_{\pi})=\frac{N}{h}\sum\limits_{l=1}^L\mu_{l}(n_{l-1}+n_{l}).
\]
Further, in the above equation, $N_{l}$ is the number of sample paths on level $l$, and $\widetilde{W}_{l}$ are the weights given by,
\begin{equation}
\label{oneeq:6}
\widetilde{W}_{l}=\sum\limits_{j=l}^{L}w_{l},~l=1,\dots,L,
\end{equation}
where $\textbf{w}=\left(w_{l}\right)_{1 \leq l \leq L}$ is the solution to the Vandermonde system $V \textbf{w}=e_{1}$, with the Vandermonde matrix being defined by,
\begin{equation}
\label{oneeq:7}
V=\begin{pmatrix}
1 & 1 & \dots & 1\\
1 & n_{2}^{-\alpha} & \dots & n_{L}^{-\alpha}\\
\vdots & \vdots & \dots & \vdots\\
1 & n_{2}^{-\alpha(L-1)} & \dots & n_{L}^{-\alpha(L-1)}\\
\end{pmatrix}
\end{equation}
Here, $\alpha$ is the weak error rate as defined above. The interested reader may refer to \cite{lemaire2017multilevel} for the construction of the optimal parameters, as the closed solution to equation \eqref{oneeq:5}. Here we tabulate the explicit values of these parameters required to achieve the root-mean-squared error $\epsilon$, with the following constants being used:
\begin{enumerate}[(A)]
\item $\displaystyle{\lambda =\sqrt{\frac{V_{1}}{\text{Var}\left(P(X^{0}_{T})\right)}}~\text{and}~\widetilde{c}_{\infty}=\lim_{L \rightarrow \infty}|c_{L}|^{1/\alpha} \in (0,\infty)}$.
\item $\displaystyle{\underline{C}_{M,\beta}=\frac{1+M^{\beta/2}}{\sqrt{1+M^{-1}}}~\text{and}~\overline{C}_{M,\beta}=\left(1+M^{\beta/2}\right)\sqrt{1+M^{-1}}}$.
\end{enumerate}

\begin{table}[H]
\begin{center}
\begin{tabular}{|c|c|}
\hline            
$L(\epsilon)$ & \\
&$\left\lceil \frac{1}{2}+\frac{\log(\widetilde{c}_{\infty}^{1/\alpha}\mathbf{h})}{\log(M)}+\sqrt{\left(\frac{1}{2}+\frac{\log(\widetilde{c}_{\infty}^{1/\alpha}\mathbf{h})}{\log(M)}\right)^{2}+ 2\frac{\log(A/\epsilon)}{\alpha\log(M)}}\right\rceil$\\ 
& \\\hline
$h(\epsilon)$ & $\mathbf{h}$ \\ \hline
$\mu(\epsilon)$ &\\
&$\mu_{1}=q^{*}(1+\lambda h^{\frac{\beta}{2}})$\\
&$\mu_{l}=q^{*}\lambda h^{\frac{\beta}{2}} \underline{C}_{M,\beta}|\widetilde{W}_l(L,M)|M^{-\frac{(1-\beta)}{2}(l-1)}, l=2,\dots ,L$,~$\sum\limits_{1\leq l \leq L}\mu_{l}=1$\\ 
& \\ \hline
$N(\epsilon)$ & \\
&$\left(1+\frac{1}{2 \alpha L}\right)\frac{\text{Var}\left(P(X^{0}_{T})\right)\left(1+ \lambda h^{\beta/2}+\lambda h^{\beta/2}\overline{C}_{M,\beta}\sum\limits_{l=2}^{L}|\widetilde{W}_l(L,M)|M^{\frac{1-\beta}{2}(j-1)} \right)}{\epsilon^{2}q^{*}}$ \\ 
&\\\hline
\end{tabular}
\end{center}
\caption{Optimal parameters for the ML2R estimator \label{T:Optimal Parameters}}
\end{table}
The above estimator was highly effective whenever there is a strong order of convergence \textit{i.e.,} $\beta \leq 1$, as it achieves $O\left(\epsilon^{-2}\log(1/\epsilon)\right)$ for $\beta =1$ and $O\left(\epsilon^{-2}e^{\frac{1-\beta}{\sqrt{\alpha}}\sqrt{2\log(1/\epsilon)\log(M)}}\right)$ for $\beta < 1$, contrary to $O\left(\epsilon^{-2}\log(1/\epsilon)^2\right)$ and $O\left(\epsilon^{-2-\frac{1-\beta}{\alpha}}\right)$, respectively, which is achieved by the standard MLMC.
\subsection{Importance Sampling}
Now following the idea of \cite{arouna2004adaptative}, we consider a parametric family of stochastic process $\left(X_{t}(\theta)\right)_{0 \leq t \leq T}$, with $\theta \in \mathbb{R}^{d}$, governed by the following SDE,
\begin{equation}
\label{oneeq:8}
dX_{t}(\theta)=(b(X_{t}(\theta))+\sigma(X_{t}(\theta))\theta)dt+\sum\limits_{j=1}^{q}\sigma_{j}(X_{t}(\theta))dW_{t}^{j},~\sigma(x)=\begin{pmatrix}\sigma_{1}(x)&\dots&\sigma_{q}(x)\end{pmatrix}.
\end{equation}
By the Girsanov's Theorem, we know that there exists a probability measure $\mathbb{P}_{\theta}$ equivalent to $\mathbb{P}$ such that,
\begin{equation}
\label{oneeq:9}
\frac{d\mathbb{P}_{\theta}}{{d\mathbb{P}|}_{\mathcal{F}_t}}=\exp{\left(-\innerproduct{\theta}{ W_{t}}-\frac{1}{2}\lVert\theta\rVert^{2}t\right)} \triangleq \mathcal{J}^-(W_t,\theta),
\end{equation}
under which, the process $\left(\theta t+W_{t}\right)_{0\leq t \leq T}$ is a Brownian motion. Therefore,
\begin{equation}
\label{oneeq:10}
\mathbb{E}_{\mathbb{P}}\left[P(X_{T})\right]=\mathbb{E}_{\mathbb{P}_{\theta}}\left[P(X_{T}(\theta))\right]=\mathbb{E}_{\mathbb{P}}\left[P(X_{T}(\theta))\mathcal{J}^-(W_T,\theta)\right].
\end{equation}

The efficiency of the Monte Carlo method is considerably dependent on how small the variance is in the process of estimation. Among the many variance reduction techniques, available in literature (for improvement of efficiency of the Monte Carlo method), importance sampling is one such variance reduction technique, which is known for its efficiency \cite{glasserman2004monte}. In general, the idea of parametric importance sampling is to introduce the parametric transformation such that $\forall \theta \in \mathbb{R}^{d}$,
\[\mathbb{E}\left[P(X_{T})\right]=\mathbb{E}\left[h(\theta, X_{T})\right].\]
More specifically, we consider the general problem of estimating $\mathbb{E}[P(X)]$, where $X$ is a $d$-dimensional random variable. Also if $f(x)$ is the multivariate density, then,
\[\mathbb{E}[P(X)]=\int P(x)f(x)dx=\int P(x+\theta)f(x+\theta)dx=\int h(\theta, x)f(x)dx,\] where $\displaystyle{h(\theta,x)=\frac{P(x+\theta)f(x+\theta)}{f(x)}}$.
Now the idea of importance sampling Monte Carlo method is to estimate $\mathbb{E}\left(P(X_{T})\right)$, where $\theta$ is given by,
\begin{equation}
\label{oneeq:11}
\theta^{*}=\arg\min_{\theta \in \mathbb{R}^{d}} \text{Var}~\left(P(X_{T}(\theta))\mathcal{J}^-{(W_T,\theta)}\right).  
\end{equation}

In order to solve the above problem one can use the so-called Robbins-Monro algorithm that deals with a sequence of random variable $\left(\theta_{i}\right)_{i\in\mathbb{N}}$, which approximates $\theta^{*}$ accurately. However, the convergence of this algorithm requires restrictive conditions known as the non explosion condition given by \cite{alaya2015importance},
\[\mathbb{E}[h^{2}\left(\theta, X_{T}\right)] \leq C\left(1+|\theta|^{2}\right)~\text{for all}~\theta \in \mathbb{R}^{d}.\] 
To overcome these restrictive conditions, a truncation based procedure was introduced by Chen in \cite{chen1987convergence,chen1986stochastic} which was furthered in \cite{andrieu2005stability,lelong2008almost}. This was then proposed in the context of importance sampling \cite{arouna2004adaptative}. An unconstrained procedure to approximate $\theta^{*}$, by using the regularity of the involved density extensively, was introduced in \cite{lemaire2010unconstrained} along with the proof of the convergence of the algorithm.

The authors in \cite{alaya2015importance} studied importance sampling in the context of statistical Romberg integration, which is a one-step generalization of the standard Monte Carlo integration. The paper used both constrained, as well as unconstrained optimization routines to approximate $\theta^{*}$. Further, they proved the almost sure convergence of both constrained and unconstrained versions of the optimization algorithm, towards the optimal parameter $\theta^{*}$, under discretized diffusion setting. The idea was then extended by \cite{alaya2016improved} and \cite{kebaier2018coupling} in the context of MLMC. The study carried out in \cite{alaya2016improved} is the extension of \cite{alaya2015importance} in the paradigm of MLMC. However \cite{kebaier2018coupling} used sample average approximation to approximate $\theta^{*}$. Also, their algorithm performed an approximation procedure on each level of resolution, thus further enhancing the overall performance of the adaptive algorithm.

In this paper, we develop a hybrid algorithm incorporating ML2R and adaptive importance sampling procedure developed in \cite{arouna2004adaptative} in the context of standard Monte Carlo in order to reduce the overall computational time, while achieving the desired accuracy. In \cite{alaya2015importance}, the authors studied the discretized version of the asymptotic variance obtained in the central limit theorem associated with the statistical Romberg method. The study was performed to determine the optimal parameter $\theta^{*}$, to minimize the variance of the Monte Carlo estimator. In \cite{kebaier2018coupling}, the authors extended the algorithm developed in \cite{alaya2015importance} in the context of MLMC, building upon the Central Limit Theorem for the Euler MLMC studied in \cite{alaya2015central}. As discussed in \cite{alaya2015central}, the optimal value of $\theta$ is obtained by using a constrained and unconstrained version of the stochastic Robbins-Monro algorithm, whereas in \cite{kebaier2018coupling} it is determined using deterministic Newton-Raphson's method. However, the algorithm was developed in the context of Euler discretization, and only European option pricing was studied, with no reference to path-dependent functionals. \textbf{In this paper, we develop upon the Central Limit Theorem, for the general multilevel approach studied in \cite{giorgi2017limit}, that incorporates the results for path-dependent functionals as well. We look at the stochastic optimization approach to approximate $\theta^*$. Also, we use the Milstein scheme, as well as Euler in order to discretize the underlying SDE.} In order to highlight the novelty and contribution of this work, we present a comparison of the work accomplished, with the existing literature, in Table     \ref{tab:comparison}

\begin{table}[!h]
\centering
\begin{adjustbox}{width=\columnwidth,center}
\begin{tabular}{cccccccc}
\textbf{ML2R} & \textbf{IS} & \textbf{Hybrid Algorithm: (IS+MC Version)} & \textbf{Euler Scheme} & \textbf{Milstein Scheme} & \textbf{Vanilla Option} & \textbf{Exotic Option} & \textbf{Reference} \\
Yes & No  & No                                         & Yes & No  & Yes & Yes & \cite{lemaire2017multilevel} \\
No  & Yes & Yes ( With Standard Monte carlo)           & Yes & No  & Yes & No  & \cite{arouna2004adaptative} \\
No  & Yes & Yes (With MLMC)                            & Yes & No  & Yes & No  & \cite{kebaier2018coupling} \\
No  & Yes & Yes (With Statistical Romberg Integration) & Yes & No  & Yes & No  & \cite{alaya2015importance}     \\
No  & Yes & Yes (Adaptive MLMC)                        & Yes & No  & Yes & No  & \cite{alaya2016improved} \\
Yes & Yes & Yes                                        & Yes & Yes & Yes & Yes & Present Work            
\end{tabular}
\end{adjustbox}
\caption{Comparison between present work and previously existed studies based on various factors}
\label{tab:comparison}
 \end{table}

\section{Algorithm}

In this paper, we follow the idea presented in \cite{kebaier2018coupling} to develop our coupling algorithm, which we will refer to as AISML2R:
\begin{eqnarray}
\label{oneeq:12}
\mathbb{E}\left[P\left(X_{T}^{n_{L}}\right)\right]
&=&\mathbb{E}\left[P(X_{T}^{n_{1}}(\theta_{1}))\mathcal{J}^{-}(W^1_T,\theta_{1})\right]\nonumber\\
&+&\sum\limits_{l=2}^{L} \widetilde{W}_{l}\mathbb{E}\left[(P\left(X_{T}^{n_{l}}(\theta_l)\right)-P\left(X_{T}^{n_{l-1}}(\theta_{l})\right)
\mathcal{J}^{-}(W^l_T,\theta_{l})\right].
\end{eqnarray}
By applying Monte Carlo method to each level $l$ with samples $N_{l}$ in equation \eqref{oneeq:12}, we get,
\begin{eqnarray}
\label{oneeq:13}
\mathbf{J}_{\pi}^{N,\theta}\left(\theta_{1}, \dots,\theta_{L}\right)&=& \frac{1}{N_{1}}\sum\limits_{k=1}^{N_{1}}P\left(X_{T,\theta_1^{k-1}}^{n_{1},k}\right)\mathcal{J}^{-}(W^{1,k}_{T},\theta_{1}^{k-1}) \nonumber\\
&+&\sum\limits_{l=2}^{L}\frac{\widetilde{W}_{l}}{N_{l}}\sum\limits_{k=1}^{N_{l}}\left(P\left(X_{T,\theta_l^{k-1}}^{n_{l},k}\right)-P\left(X_{T,\theta_l^{k-1}}^{n_{l-1},k})\right)\mathcal{J}^{-}(W^{l,k}_{T},\theta_{l}^{k-1}\right).
\end{eqnarray}
In the above estimator, the value of $\theta^{k}_l$ on level $l$, is calculated using the values of the payoff, and $\theta^{k-1}_l$ obtained in the previous path simulation, using the stochastic approximation algorithm studied in Section \ref{stochastic_algo} below, thus describing the adaptive nature of the algorithm.

\subsection{Optimization Problem}

In this subsection, we deal with the problem of finding the optimal value of $\theta_{l}$, to minimize the variance on level $l$. For the sake of brevity, we let,
\[Y(h) \coloneqq \left(\frac{h}{M}\right)^{\frac{-\beta}{2}}\left(P\left(X^{h/M}_{T}\right)-P\left(X^{h}_{T}\right)\right)~\text{and}~Y_{l}\coloneqq Y\left(\frac{\mathbf{h}}{M^{l-2}}\right).\] 
Further let, 
\[Z_{l}=P\left(X^{h/M^{l-1}}_{T}\right)-P\left(X^{h/M^{l-2}}_{T}\right)~\text{and}~Z_{1}=Y(\mathbf{h})=P(X_{T}^{n_1}).\]
We start our discussion by recalling the Central Limit Theorems for ML2R from \cite{giorgi2017limit},

\begin{theorem}[Central Limit Theorem, $\beta >1$]\label{Central Limit Theorem}
Assume \textbf{strong error} for $\beta>1$ and that $(Y(h))_{h\in\mathcal{H}}$ is $L^2-$uniformly integrable. We set:
\[\sigma_{1}^{2}=\frac{1}{\Sigma}\frac{Var(Y_{\pmb{h}})}{Var(Y_{0})\left(1+\lambda \pmb{h}^{\frac{\beta}{2}}\right)}~\textit{and}~\sigma_{2}^{2}=\frac{1}{\Sigma}\frac{\pmb{h}^{\frac{\beta}{2}} \sum\limits_{l\geq2}M^{\frac{1-\beta}{2}(l-1)}Var(Y_l)}{\sqrt{Var(Y_0)V_1}\underline{C}_{M,\beta}},\]
with,
\[\Sigma=\left[1+\lambda \pmb{h}^{\frac{\beta}{2}}\left(1+\overline{C}_{M,\beta}\frac{M^{\frac{1-\beta}{2}}}{1-M^{\frac{1-\beta}{2}}}\right)\right],~\underline{C}_{M,\beta}=\frac{1+M^{\frac{\beta}{2}}}{\sqrt{1+M^{-1}}}, ~\text{and}~\overline{C}_{M,\beta}=\left(1+M^{\frac{\beta}{2}}\right)\sqrt{1+M^{-1}}.\]
Assuming \textbf{weak error} for $\Bar{L} \geq 1$, we have, 
\[\frac{\mathbf{J}_{\pi}^N(\epsilon)-\mathbf{J}_0}{\epsilon} \xrightarrow{\mathcal{L}}\mathcal{N}(0,\sigma_{1}^{2}+\sigma_{2}^{2}),~\text{as} \epsilon \rightarrow 0.\]
\end{theorem}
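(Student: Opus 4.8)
The plan is to write $\mathbf{J}_\pi^N(\epsilon)-\mathbf{J}_0$ as the sum of a centred random fluctuation and a deterministic bias, to discard the bias after rescaling by $\epsilon$, and then to run a triangular--array central limit theorem on the fluctuation; a Slutsky argument delivers the stated convergence. Concretely, set $R(\epsilon):=\mathbf{J}_\pi^N(\epsilon)-\mathbb{E}\big[\mathbf{J}_\pi^N(\epsilon)\big]$ and $B(\epsilon):=\mathbb{E}\big[\mathbf{J}_\pi^N(\epsilon)\big]-\mathbf{J}_0$. Because the weights $\widetilde W_l$ are produced by the Vandermonde system $V\mathbf w=e_1$ of \eqref{oneeq:7}, the estimator annihilates the first $L-1$ terms $c_1\mathbf h^{\alpha}+\dots+c_{L-1}\mathbf h^{\alpha(L-1)}$ of the weak expansion \eqref{oneeq:2}; invoking the \textbf{weak error} assumption for $\bar L\ge1$ together with the explicit level count $L(\epsilon)$ of Table \ref{T:Optimal Parameters} (the half--integer shift and the ceiling are what buy the extra room) one obtains $B(\epsilon)=o(\epsilon)$. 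It therefore remains to prove $R(\epsilon)/\epsilon\xrightarrow{\mathcal L}\mathcal N(0,\sigma_1^2+\sigma_2^2)$.

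With $Z_1=P(X_T^{n_1})$, $Z_l=P(X_T^{n_l})-P(X_T^{n_{l-1}})$ for $l\ge2$, $\widetilde W_1=1$, and i.i.d.\ copies $\bar Z_l^k:=Z_l^k-\mathbb{E}[Z_l^k]$, we have
\[R(\epsilon)=\sum_{l=1}^{L(\epsilon)}\sum_{k=1}^{N_l(\epsilon)}\xi_{l,k},\qquad \xi_{l,k}:=\frac{\widetilde W_l}{N_l(\epsilon)}\,\bar Z_l^k ,\]
a row--wise independent, centred triangular array (independence across levels $l$, i.i.d.\ within a level), to which I would apply the Lindeberg--Feller theorem. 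This needs two ingredients: (i) $\epsilon^{-2}\operatorname{Var}\!\big(R(\epsilon)\big)\to\sigma_1^2+\sigma_2^2$, and (ii) the Lindeberg negligibility of the $\xi_{l,k}$. For (i), $\operatorname{Var}(R(\epsilon))=\sum_{l=1}^{L(\epsilon)}\widetilde W_l^2\operatorname{Var}(Z_l)/N_l(\epsilon)$; using $Z_l=(\mathbf h/M^{l-1})^{\beta/2}Y_l$ this equals $\mathbf h^{\beta}\sum_l \widetilde W_l^2 M^{-\beta(l-1)}\operatorname{Var}(Y_l)/N_l(\epsilon)$, and $L^2$--uniform integrability of $(Y(h))_{h\in\mathcal H}$ upgrades the convergence of $Y(h)$ to convergence of second moments, so $\operatorname{Var}(Y_l)\to\operatorname{Var}(Y_0)$. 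Since $\beta>1$, the geometric weight $M^{-\beta(l-1)}$ (against $N_l\propto\epsilon^{-2}$ and sub--exponential $\widetilde W_l$) makes the tail $l\ge2$ summable; substituting the explicit $N_l(\epsilon)$, $\mu_l$ and $\widetilde W_l$ from Table \ref{T:Optimal Parameters} and regrouping around the normaliser $\Sigma$, the $l=1$ term produces $\sigma_1^2$ and the $l\ge2$ tail produces $\sigma_2^2$. This part is bookkeeping only.

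Ingredient (ii) is the crux. For every $\eta>0$ one must show
\[\frac{1}{\epsilon^2}\sum_{l=1}^{L(\epsilon)}\frac{\widetilde W_l^2}{N_l(\epsilon)}\,\mathbb{E}\!\left[\bar Z_l^2\,\mathbf 1_{\{\abs{\widetilde W_l\,\bar Z_l}>\eta\,\epsilon\,N_l(\epsilon)\}}\right]\longrightarrow 0 .\]
Writing $\bar Z_l=(\mathbf h/M^{l-1})^{\beta/2}\bar Y_l$ with $\bar Y_l:=Y_l-\mathbb{E}[Y_l]$, the indicator becomes $\mathbf 1_{\{\abs{\bar Y_l}>\rho_l(\epsilon)\}}$ for the threshold $\rho_l(\epsilon):=\eta\,\epsilon\,N_l(\epsilon)\,\abs{\widetilde W_l}^{-1}(\mathbf h/M^{l-1})^{-\beta/2}$. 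If $\min_{1\le l\le L(\epsilon)}\rho_l(\epsilon)\to\infty$, then $L^2$--uniform integrability forces $\sup_l\mathbb{E}\!\left[\bar Y_l^2\,\mathbf 1_{\{\abs{\bar Y_l}>\rho_l(\epsilon)\}}\right]\to0$, and since the prefactor $\epsilon^{-2}\sum_l\widetilde W_l^2\operatorname{Var}(Z_l)/N_l(\epsilon)$ is bounded by (i), the whole sum vanishes. \emph{The main obstacle is establishing $\min_l\rho_l(\epsilon)\to\infty$:} with the cost--optimal allocation this reduces to $\rho_l(\epsilon)\gtrsim \epsilon^{-1}M^{-(l-1)/2}$, whose minimum over $l\le L(\epsilon)$ is attained at the finest level and behaves like $\epsilon^{-1}\exp\!\big(-c\sqrt{\log(1/\epsilon)\log M}\big)\to\infty$ thanks to $L(\epsilon)\sim c'\sqrt{\log(1/\epsilon)}$; making this rigorous uses the known growth bounds for the Richardson--Romberg weights $\widetilde W_l(L,M)$ from \cite{lemaire2017multilevel}, and requires care at the finest levels, where the ideal $N_l$ can be of order one so that the ceiling in its definition (and the constraint $N_l\ge1$) must be accounted for. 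Once (i) and (ii) hold, Lindeberg--Feller yields $R(\epsilon)/\epsilon\xrightarrow{\mathcal L}\mathcal N(0,\sigma_1^2+\sigma_2^2)$, and combining with $B(\epsilon)/\epsilon\to0$ through Slutsky's lemma completes the proof.
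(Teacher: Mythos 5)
First, a point of fact: the paper you are being compared against does not prove this theorem at all. Theorem \ref{Central Limit Theorem} is recalled verbatim from \cite{giorgi2017limit}, and the present authors only use it to motivate their optimization problem, so the only meaningful comparison is with the proof in that reference. Your plan follows essentially the same architecture as that proof: split $\mathbf{J}_{\pi}^{N}(\epsilon)-\mathbf{J}_{0}$ into a deterministic bias and a centred fluctuation, kill the bias through the Vandermonde structure of the weights (which cancels the coefficients $c_{1},\dots,c_{L-1}$ of the weak expansion) combined with the growth of $L(\epsilon)$, and apply a Lindeberg--Feller theorem to the row-wise independent triangular array of level contributions, the $L^{2}$-uniform integrability of $(Y(h))_{h\in\mathcal{H}}$ being precisely what feeds the Lindeberg condition, with the coarse level producing $\sigma_{1}^{2}$ and the geometric tail (convergent because $\beta>1$) producing $\sigma_{2}^{2}$. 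Your reduction of the Lindeberg threshold to $\rho_{l}(\epsilon)\gtrsim\epsilon^{-1}M^{-(l-1)/2}$ is correct for the cost-optimal allocation $\mu_{l}\propto|\widetilde{W}_{l}|M^{-(1+\beta)(l-1)/2}$ (Table \ref{T:Optimal Parameters} carries a sign slip in this exponent; the bound $|\widetilde{W}_{l}|/\mu_{l}\leq C\,M^{(\beta+1)(l-1)/2}$ used in the proof of Proposition \ref{oneprop:2} is the correct form), and since $L(\epsilon)\sim\sqrt{2\log(1/\epsilon)/(\alpha\log M)}$ the finest-level size $N\mu_{L(\epsilon)}\propto\epsilon^{-2}e^{-c\sqrt{\log(1/\epsilon)}}$ still diverges, so the order-one/ceiling issue you flag is milder than you fear.

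What keeps your text at the level of an outline rather than a proof are exactly the two steps you defer. The bias bound $B(\epsilon)=o(\epsilon)$ is asserted, but it needs the weak expansion at every order $\bar{L}$ (since $L(\epsilon)\to\infty$), the explicit constant in $L(\epsilon)$, and the uniform control of the residual against the weights; together with the weight estimates of Lemma \ref{onelemma:1} and the passage $\lim_{L}\sum_{l=2}^{L}|\widetilde{W}_{l}^{L}|M^{\frac{1-\beta}{2}(l-1)}\mathrm{Var}(Y_{l})=\sum_{l\geq2}M^{\frac{1-\beta}{2}(l-1)}\mathrm{Var}(Y_{l})$ (which uses $\widetilde{W}_{l}\to1$, not $\mathrm{Var}(Y_{l})\to\mathrm{Var}(Y_{0})$), this is where the technical content of \cite{giorgi2017limit} actually lies. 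Relatedly, note a small notational misreading: in the statement, $Y_{\pmb{h}}$ and $Y_{0}$ denote the coarse-level payoff $P(X_{T}^{n_{1}})$ and the limiting payoff $P(X_{T}^{0})$, not limits of the normalized increments $Y(h)$, so the asymptotic variance keeps $\mathrm{Var}(Y_{l})$ level by level rather than replacing it by a common limit. None of this invalidates your strategy; it simply means the proposal is a correct roadmap of the known proof rather than a self-contained argument.
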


\begin{theorem}[Central Limit Theorem, $0<\beta \leq 1$]\label{Central Limit Theorem_02}
Assume \textbf{strong error} for $\beta \in (0,1]$ and that $(Y(h))_{h\in\mathcal{H}}$ is $L^2-$uniformly integrable. Assume furthermore $\displaystyle{\lim_{h\rightarrow 0}\lVert Y(h) \rVert_{2}^{2}=v_{\infty}(M,\beta)}$, we set:
\[\sigma^{2}=
\begin{cases}
v_{\infty}(M,\beta)\left(1+M^{\beta/2}\right)^{-2}V_1^{-1},~\text{if}~2\alpha>\beta, \\
\left(v_{\infty}(M,\beta)-c_1^2(1-M^{\beta/2})^2\right)\left(1+M^{\beta/2}\right)^{-2}V_1^{-1},~\text{if}~2\alpha=\beta.
\end{cases}\]
Assuming \textbf{weak error} for $\Bar{L} \geq 1$, we have,
\[\frac{\mathbf{J}_{\pi}^N(\epsilon)-\mathbf{J}_{0}}{\epsilon} \xrightarrow{\mathcal{L}}\mathcal{N}(0,\sigma^{2}),~\text{as}~\epsilon \rightarrow 0.\]
\end{theorem}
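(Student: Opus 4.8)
The plan is to separate the bias from the statistical fluctuation and handle the latter by a Lindeberg--Feller central limit theorem for triangular arrays. Write
\[
\frac{\mathbf{J}_{\pi}^N(\epsilon)-\mathbf{J}_{0}}{\epsilon}
=\frac{\mathbf{J}_{\pi}^N(\epsilon)-\mathbb{E}\big[\mathbf{J}_{\pi}^N(\epsilon)\big]}{\epsilon}
+\frac{\mathbb{E}\big[\mathbf{J}_{\pi}^N(\epsilon)\big]-\mathbf{J}_{0}}{\epsilon}.
\]
Because the $L(\epsilon)$ levels are simulated with mutually independent Brownian batches and, within level $l$, the $N_l$ copies of $Z_l$ are i.i.d., the first term is a sum over $1\le l\le L(\epsilon)$ of independent centered variables $\widetilde{W}_l N_l^{-1}\sum_{k=1}^{N_l}(Z_l^{k}-\mathbb{E}Z_l)$ (with $\widetilde{W}_1=1$), i.e.\ a row of a triangular array indexed by $\epsilon\downarrow 0$, equivalently by $L\uparrow\infty$. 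For the second term I would invoke the weak-error expansion \eqref{oneeq:2}: the Richardson--Romberg weights solve the Vandermonde system \eqref{oneeq:7}, hence the aggregated bias annihilates $c_1,\dots,c_{L-1}$ and what remains is of order $\mathbf{h}^{\alpha L}$ up to a slowly varying factor. With $L=L(\epsilon)$ as in Table~\ref{T:Optimal Parameters}, this forces $|\mathbb{E}[\mathbf{J}_{\pi}^N(\epsilon)]-\mathbf{J}_0|\le \epsilon\,(1+2\alpha L(\epsilon))^{-1/2}\to 0$, so by Slutsky it suffices to prove the CLT for the statistical error.

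For the variance I would compute
\[
\mathrm{Var}\big(\mathbf{J}_{\pi}^N(\epsilon)\big)=\frac{\mathrm{Var}(Z_1)}{N_1}+\sum_{l=2}^{L}\frac{\widetilde{W}_l^{2}\,\mathrm{Var}(Z_l)}{N_l},
\]
substitute $N_l=N(\epsilon)\mu_l$ and the tabulated weights, use $Z_l=h_l^{\beta/2}Y_l$ so that $\mathrm{Var}(Z_l)=h_l^{\beta}\mathrm{Var}(Y_l)$, and pass to the limit using the refined strong-error hypothesis $\|Y(h)\|_2^2\to v_\infty(M,\beta)$; since $\beta\le 1$ the factors $M^{(1-\beta)(l-1)/2}$ do not decay, so the sum accumulates, with the tabulated allocation, into a finite limit governed by the common limit of $\|Y_l\|_2^2$, giving $\epsilon^{-2}\mathrm{Var}(\mathbf{J}_\pi^N(\epsilon))\to v_\infty(M,\beta)(1+M^{\beta/2})^{-2}V_1^{-1}$ when $2\alpha>\beta$. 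When $2\alpha=\beta$ the weak-error expansion gives $\mathbb{E}[Y(h)]\to c_1(1-M^{\beta/2})$, so that $\mathrm{Var}(Y(h))=\|Y(h)\|_2^2-(\mathbb{E}Y(h))^2\to v_\infty-c_1^2(1-M^{\beta/2})^2$, which produces the stated correction term.

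The remaining ingredient is the Lindeberg condition: for the summands $\xi_{l,k}=\epsilon^{-1}\widetilde{W}_l N_l^{-1}(Z_l^{k}-\mathbb{E}Z_l)$ one needs $\sum_{l=1}^{L}\sum_{k=1}^{N_l}\mathbb{E}\big[\xi_{l,k}^{2}\mathbf{1}_{\{|\xi_{l,k}|>\delta\}}\big]\to0$ for every $\delta>0$. Writing $Z_l-\mathbb{E}Z_l=h_l^{\beta/2}(Y_l-\mathbb{E}Y_l)$ and observing that the prefactor $\epsilon^{-1}\widetilde{W}_l N_l^{-1}h_l^{\beta/2}$ tends to $0$ uniformly in $l$ (this is exactly what the tabulated choice of $\mu_l$ and $N(\epsilon)$ secures), the $L^2$-uniform integrability of $(Y(h))_{h\in\mathcal{H}}$ makes each truncated second moment uniformly small, while the variance bound just obtained keeps the untruncated total bounded; a $\varepsilon/2$ split --- finitely many ``large'' levels controlled by uniform integrability, the tail of levels controlled by smallness of the prefactors --- then closes the estimate. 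The Lindeberg--Feller theorem applied to the array of the first step gives $\epsilon^{-1}(\mathbf{J}_\pi^N(\epsilon)-\mathbb{E}\mathbf{J}_\pi^N(\epsilon))\xrightarrow{\mathcal{L}}\mathcal{N}(0,\sigma^{2})$, and combining with the bias estimate finishes the proof.

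I expect the main obstacle to be the variance limit in the critical regime $2\alpha=\beta$: there the squared bias of the finest increment is of the same order as its variance, so one cannot simply replace $\mathrm{Var}(Y_l)$ by $\|Y_l\|_2^2$, and the limit $v_\infty$ must be split --- through \eqref{oneeq:2} --- into a mean part and a genuine variance part, with the correct constant $(1-M^{\beta/2})^{2}$; an error here directly corrupts the stated $\sigma^{2}$. A secondary difficulty is keeping the Lindeberg bound uniform as $L(\epsilon)\to\infty$: the number of summands diverges, so negligibility cannot be argued one level at a time and must instead be driven by the decay of $\epsilon^{-1}\widetilde{W}_l N_l^{-1}h_l^{\beta/2}$ across the tail of levels, which is precisely the content of the optimal allocation in Table~\ref{T:Optimal Parameters}.
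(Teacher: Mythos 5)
This theorem is stated in the paper without proof: it is recalled verbatim from \cite{giorgi2017limit}, so there is no in-paper argument to compare against line by line. Your proposal is, in substance, the proof given in that reference — bias annihilated by the Vandermonde weights and rendered $o(\epsilon)$ by the choice of $L(\epsilon)$, a Lindeberg--Feller triangular-array CLT for the centered part in which the $L^2$-uniform integrability of $(Y(h))_{h\in\mathcal{H}}$ is used precisely to verify the Lindeberg condition uniformly over the growing number of levels, the variance limit driven by $\lVert Y(h)\rVert_2^2\to v_\infty$ under the tabulated allocation, and the critical case $2\alpha=\beta$ handled through $\mathbb{E}[Y(h)]\to c_1(1-M^{\beta/2})$ — so it takes essentially the same approach and the stated constants come out correctly.
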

Based of the two theorems stated above we develop the optimization algorithm for $\beta>1$ and $\beta \in (0,1]$.

\subsubsection{$\text{Case I:} \beta > 1$}
In Theorem \ref{Central Limit Theorem}, let
\[\sigma_{1}^{2}=k_{1}\text{Var}(Y_{\mathbf{h}})~\text{and}~\sigma_{l}^{2}=k_{2} M^{\frac{1-\beta}{2}(l-1)}\text{Var}(Y_{l}),\]
where,
\[k_{1}=\frac{1}{\Sigma}\frac{1}{Var(Y_{0})(1+\lambda \mathbf{h}^{\frac{\beta}{2}})}~\text{and}~k_{2}=\frac{\mathbf{h}^{\frac{\beta}{2}}}{\Sigma \sqrt{Var(Y_{0})V_{1}}\underline{C}_{M,\beta}},
\] 
and therefore,
\[\sigma_{2}^{2}=\sum\limits_{l\geq2}k_{l}\text{Var}(Y_{l})=\sum\limits_{l\geq2} \sigma_{l}^{2}.\]
From the practical point of view, it is necessary to use the truncated version of the above summation. In the thorough study carried out in \cite{giorgi2017limit}, it was proven that for $\beta>1$,
\[\lim_{L(\epsilon)\rightarrow \infty}\sum\limits_{l=2}^{L(\epsilon)}|\widetilde{W}_{l}^{L(\epsilon)}|M^{\frac{1-\beta}{2}(l-1)}\text{Var}(Y_{l})=\sum\limits_{l=2}^{\infty}M^{\frac{1-\beta}{2}(l-1)}\text{Var}(Y_{l}).\] Therefore, owing to the above result and motivated by the analysis pertaining to the Central Limit Theorem carried out in \cite{giorgi2017limit}, we can formulate the problem for $l=1,\dots,L(\epsilon)$ as,
\begin{equation}
\label{oneeq:14}
\begin{aligned}
\theta_{l}^{*}&={\arg\min}_{\theta_{l} \in \mathbb{R}}\sigma_{l}^{2}\\
&={\arg\min}_{\theta_{l} \in \mathbb{R}} k_{2} M^{\frac{1-\beta}{2}(l-1)} \abs{\widetilde{W}_l^{L(\epsilon)}} \text{Var}\left[\left(Y^{\theta}_{l}\mathcal{J}^{-}(W^{l}_{T},\theta_l)\right)^{2}\right]\\
&= {\arg\min}_{\theta_{l} \in \mathbb{R}} k_{l} \mathbb{E}\left[\left(\left(P\left(X^{h/M^{l-1}}_{T,\theta}\right)-P\left(X^{h/M^{l-2}}_{T,\theta}\right)\right)\mathcal{J}^{-}(W^{l}_{T},\theta_l)\right)^{2}\right],
\end{aligned}
\end{equation}
where,
\begin{equation}
\label{oneeq:15}
k_{l}=k_{2} M^{\frac{1+\beta}{2}(l-1)}\abs{\widetilde{W}_{l}^{L(\epsilon)}} \mathbf{h}^{-\beta}.
\end{equation}
Using the Girsanov's theorem, we can see that,
\begin{equation}
\label{oneeq:16}
\mathbb{E}\left[\left(\left(P\left(X^{h/M^{l-1}}_{T,\theta}\right)-P\left(X^{h/M^{l-2}}_{T,\theta}\right)\right)\mathcal{J}^{-}(W^{l}_{T},\theta_l)\right)^{2}\right]= \mathbb{E}\left[\left(P\left(X^{h/M^{l-1}}_{T}\right)-P\left(X^{h/M^{l-2}}_{T}\right)\right)^{2}\mathcal{J}^{+}(W^{l}_{T},\theta_{l})\right],
\end{equation}
where,
\[\mathcal{J}^{+}(W^{l}_{T},\theta_l)=e^{-\innerproduct{W_{T}^{l}}{\theta_{l}}+\frac{1}{2}\lVert\theta_{l}\rVert^{2}T}.\]
Similarly for $l=1$, we have,
\begin{equation}
\label{oneeq:17}
\theta_{1}^{*}={\arg\min}_{\theta_{1} \in \mathbb{R}} k_{1}\mathbb{E}\left[\left(Y^{2}(\mathbf{h}) \mathcal{J}^{+}(W^{1}_{T},\theta_{1}) \right)\right].
\end{equation}

\subsubsection{Case II: $\beta \in (0,1]$}

Based on Theorem \ref{Central Limit Theorem_02}, we define $v_{\infty}^{l}$ be the level $l$ approximation of $v_{\infty}$,
where, 
\[v_{\infty}^{l}=\bigg\lVert Y\left(\frac{h}{M^{l-2}}\right) \bigg \rVert_{2}^{2}.\] 
Therefore, we have,
\begin{equation}
\sigma_{l}^{2}=
\begin{cases}
v_{\infty}^l(M,\beta)\left(1+M^{\beta/2}\right)^{-2}V_1^{-1},~\text{if}~2\alpha>\beta \\
\left(v_{\infty}^l(M,\beta)-c_1^2(1-M^{\beta/2})^2\right)\left(1+M^{\beta/2}\right)^{-2}V_1^{-1},~\text{if}~2\alpha=\beta. 
\label{oneeq:18}
\end{cases}
\end{equation}
We follow the similar line of development as for  $\beta>1$ to formulate the problem for $\beta \in (0,1]$. As one can observe from \eqref{oneeq:18} that, in order to minimize $\sigma^2_l$ on level $l$, we only need to minimize $v_{\infty}^l$. Therefore, we formulate the optimization problem as follows ,
\begin{equation}
\label{oneeq:19}
\theta_{l}^{*}=\arg\min_{\theta \in \mathbb{R}^q}\mathbb{E}\left[Y_{l}^{2}\mathcal{J}^+(W_{T}^{l},\theta_l)\right].
\end{equation}

\subsection{Stochastic Algorithm}
\label{stochastic_algo}

In this section we outline the stochastic approximation algorithm to estimate $\theta_l^*$. Based on the discussion in the previous section, we perform the discretization, such as Euler or Milstein, of the underlying SDE to approximate $Y_{l}$'s. Whereas for stochastic approximation we resort to the Robbins-Monro algorithm, to approximate the value of $\theta_{l}^{*},~l=1,\dots,L$. The aim is to construct a sequence of $(\theta_{l}^{n})_{n \in \mathbb{N}}$, such that, $\displaystyle{\lim_{n\rightarrow\infty}\theta_{l}^{n}=\theta_{l}^{*}}$.

Let $\Theta \subset \mathbb{R}^{q}$ be a compact convex subset such that $\theta \in \text{int}(\Theta)$. Then the sequence is constructed recursively as follows,
\begin{equation}
\label{oneeq:20}
\theta_{l}^{n+1}=\text{Proj}_{\Theta}\bigg[\theta_{l}^{n} - \gamma_{n+1}G_{l}(\theta_{l}^{n},Y_{l},W_{T,l})\bigg],
\end{equation}
where $\displaystyle{\text{Proj}_{\Theta}(\theta)=\min_{\theta \in \Theta}\abs{\theta-\theta_{0}}}$.
Also $(\gamma_n)_{n \geq 1}$ is a decreasing sequence of positive real numbers satisfying,
\begin{equation}
\label{oneeq:21}
\sum\limits_{n=1}^{\infty} \gamma_{n}=\infty~\text{and}~\sum\limits_{i=1}^{\infty}\gamma_{n}^{2} < \infty.
\end{equation}
where for $\beta > 1$,
\begin{equation}
\label{oneeq:22}
G_{l}(\theta_{l},Y_{l},W^{l}_{T})=
\bigg \{
\begin{array}{ l l }
\left(\theta_{l}T-W^{l}_{T}\right)k_{l}Z_{l}^2 \mathcal{J}^{+}(W_{T}^{l},\theta_l),&\text{for}~l = 2,\dots,L,\\
\left(\theta_{1}T-W^1_{T}\right)k_{1} Z_{1}^{2}(\mathbf{h})\mathcal{J}^{+}(W_T^1,\theta_1), &\text{for}~l=1,
\end{array}
\end{equation}
and for $\beta \in (0,1]$,
\begin{equation}
\label{oneeq:23}
G_{l}(\theta_{l},Y_{l},W^{l}_{T})=
\bigg \{
\begin{array}{ l l }
\left(\theta_l T-W^{l}_{T}\right)Y_{l}^2 \mathcal{J}^+(W_T^l,\theta_l),&\text{for}~l = 2,\dots,L,\\
\left(\theta_1 T-W^1_{T}\right)
Y^2(\mathbf{h})\mathcal{J}^+(W_T^1,\theta_1), &\text{for}~l = 1.
\end{array}
\end{equation}
It may be noted that in this paper, we use the constrained version of the Robbins-Monro algorithm in order to estimate the approximate value of $\theta_l^*$ for various level of resolutions.

\section{Analysis}
In this section we study our algorithm in detail. We start our discussion by establishing  the existence and uniqueness of the $\theta_l^*$ for various level of resolution. Further we carry out the asymptotic analysis of our AISML2R estimator, thereby establishing the Strong Law of Large Numbers.

\subsection{Existence and Uniqueness of $\theta_l^*$.}
\label{section_existanduni}

Before, establishing the existence and uniqueness of $\theta_{l}^{*}$, we recall results related to the weights $(\widetilde{W}_{l})_{l=1,\dots,L}$, from \cite{giorgi2017limit} necessary for our study.
Accordingly, let us define,
\[a_{l} \coloneqq \frac{1}{\prod_{1\leq k \leq (l-1)}\left(1-M^{-k\alpha}\right)},~l=1,\dots,L,\]
\[b_{l} \coloneqq (-1)^l\frac{M^{\frac{-\alpha l (l+1)}{2}}}{\prod_{1\leq k \leq (l-1)}\left(1-M^{-k\alpha}\right)},~l=0,\dots,L.\]
The following result was proved in \cite{giorgi2017limit},
\begin{lemma}
\label{onelemma:1}
Let $\alpha > 0$. and the associated weights $(\widetilde{W}_l)_{l = 1,\dots,L}$, be as given in 
\eqref{oneeq:6}.
\begin{enumerate}[(a)]
\item $\lim_{l \rightarrow+\infty}a_{l}=a_{\infty} < +\infty$ and $\displaystyle{\sum\limits_{l= 0}^{+\infty}\lvert b_{l} \rvert=\widetilde{B}_{\infty} < +\infty}$.
\item The weights $\widetilde{W}_{l}$ are uniformly bounded,
\[\forall L \in \mathbb{N}^{*}, \forall l \in \{1,\dots,L\},~\lvert \widetilde{W}_{l} \rvert \leq a_{\infty}\widetilde{B}_{\infty}.\]
\item For every $\gamma > 0$,
\[\lim_{L \rightarrow+\infty}\sum\limits_{l = 2}^{L}\lvert \widetilde{W}_{l}\rvert M^{-\gamma(l-1)}=\frac{1}{M^{\gamma}-1}.\]
\item Let ${v_{j}}_{j\geq1}$ be a bounded sequence of positive real numbers. Let $\gamma \in \mathbb{R}$ and assume that $\lim_{j \rightarrow+\infty}v_{j}=1$ when $\gamma > 0$. Then the following limits hold:
\[\sum\limits_{l=2}^{L}\lvert\widetilde{W}_{l}\rvert M^{\gamma(l-1)}v_{l} \sim
\begin{cases}
\sum\limits_{l\geq2}M^{\gamma(l-1)}v_l < +\infty,~\text{for}~\gamma<0   \\
R,~\text{for}~\gamma=0,~\text{as}~L \rightarrow +\infty\\
M^{\gamma L}a_{\infty}\sum\limits_{l\geq1}\lvert \sum\limits_{k=0}^{l-1}b_{k}\rvert M^{-\gamma l},
\textit{for}~\gamma>0.    
\end{cases}\]
\end{enumerate}
\end{lemma}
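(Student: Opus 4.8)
The plan is to go from the abstract description of the weights, through the Vandermonde system \eqref{oneeq:7}, to a closed form, and then to reduce all four assertions to elementary estimates on an infinite product and on series with super‑geometrically small tails. First I would obtain the closed form. The matrix $V$ in \eqref{oneeq:7} is the Vandermonde matrix at the nodes $x_k := n_k^{-\alpha} = M^{-\alpha(k-1)}$, $k=1,\dots,L$, and $V\mathbf{w}=e_1$ says precisely that $\sum_k w_k x_k^{\,i} = \mathbf{1}_{\{i=0\}}$ for $i=0,\dots,L-1$; evaluating the Lagrange interpolation identity at $x=0$ gives $w_j = \prod_{k\ne j}\frac{x_k}{x_k-x_j}$. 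Splitting this product according to $k<j$ and $k>j$ and using $x_k/x_j = M^{-\alpha(k-j)}$, the factors telescope into the finite products defining the $a_l$, and I would get $w_j = a_j\,a_{L-j+1}\,(-1)^{L-j}M^{-\alpha(L-j)(L-j+1)/2}$, valid for all $j\in\{1,\dots,L\}$. In view of the identity $b_i = (-1)^i M^{-\alpha i(i+1)/2}a_i$, each $w_j$ is then built from two $a$-factors and the sign/power pattern of $b_{L-j}$; summing as in \eqref{oneeq:6}, and using $\sum_{j=1}^L w_j = 1$ (the first row of $V\mathbf{w}=e_1$), gives both $\widetilde{W}_l = \sum_{j=l}^L w_j$ and $\widetilde{W}_l = 1-\sum_{j=1}^{l-1}w_j$ as explicit finite expressions in the $a$'s and $b$'s.

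For (a) I would note that $a_l^{-1}=\prod_{1\le k\le l-1}(1-M^{-k\alpha})$ is a decreasing sequence of positive numbers, and since $\sum_{k\ge1}M^{-k\alpha}<\infty$ the infinite product $\prod_{k\ge1}(1-M^{-k\alpha})$ converges to a strictly positive limit; hence $a_l\uparrow a_\infty\in(0,\infty)$. Then $|b_l| = a_l M^{-\alpha l(l+1)/2}\le a_\infty M^{-\alpha l(l+1)/2}$, and the quadratic exponent yields a super‑geometric tail, so $\sum_l|b_l|=\widetilde{B}_\infty<\infty$. For (b) I would substitute the closed form into $\widetilde{W}_l$ and bound every $a$-factor by $a_\infty$; what remains is a partial (alternating) sum of the $\pm b_k$, whence $|\widetilde{W}_l|\le a_\infty\sum_k|b_k| = a_\infty\widetilde{B}_\infty$, uniformly in $L$ and in $l\in\{1,\dots,L\}$.

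The key point for (c) and (d) will be that, for each fixed $l$, $\widetilde{W}_l^{(L)} = 1-\sum_{j=1}^{l-1}w_j^{(L)}\to 1$ as $L\to\infty$, since $|w_j^{(L)}|\le a_\infty^2\,M^{-\alpha(L-j)(L-j+1)/2}\to 0$ for each fixed $j$. Extending each summand by zero beyond $l=L$, the uniform bound from (b) supplies the summable dominating sequence $l\mapsto a_\infty\widetilde{B}_\infty M^{-\gamma(l-1)}$, and the dominated convergence theorem for series gives $\sum_{l=2}^{L}|\widetilde{W}_l^{(L)}|M^{-\gamma(l-1)}\to\sum_{l\ge2}M^{-\gamma(l-1)} = \tfrac{1}{M^\gamma-1}$, which is (c). For (d) I would run the same scheme with the factor $M^{\gamma(l-1)}v_l$: when $\gamma<0$ the dominating sequence $a_\infty\widetilde{B}_\infty\|v\|_\infty M^{\gamma(l-1)}$ is summable and the limit is $\sum_{l\ge2}M^{\gamma(l-1)}v_l$; when $\gamma=0$ I would write $|\widetilde{W}_l| = 1+(|\widetilde{W}_l|-1)$ and use the double‑sum rearrangement $\sum_{l=2}^{L}|\widetilde{W}_l^{(L)}-1|\le a_\infty^2\sum_{m\ge1}m\,M^{-\alpha m(m+1)/2}=O(1)$ to conclude $\sum_{l=2}^{L}|\widetilde{W}_l^{(L)}|v_l = \sum_{l=2}^{L}v_l + O(1)\sim R$; when $\gamma>0$ the sum is dominated by the terms with $l$ near $L$, so substituting $i=L-l$, writing $M^{\gamma(l-1)}=M^{\gamma L}M^{-\gamma(i+1)}$, using $v_{L-i}\to1$ and the limit of $\widetilde{W}_{L-i}^{(L)}$ (which by the closed form is $a_\infty$ times a partial sum of the $b_k$), and applying dominated convergence on the $i$-sum, I would obtain $M^{\gamma L}a_\infty\sum_{l\ge1}\big|\sum_{k=0}^{l-1}b_k\big|M^{-\gamma l}$.

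I expect the main obstacle to be the bookkeeping in the closed form: organizing the $a$- and $b$-factors so that (b) comes out with the sharp constant $a_\infty\widetilde{B}_\infty$ and so that the $\gamma>0$ limit in (d) displays exactly the partial sums $\sum_{k=0}^{l-1}b_k$. A secondary point needing care is the interchange of the limit $L\to\infty$ with the sum $\sum_{l=2}^{L}$, whose range itself depends on $L$; as indicated, this is handled by the zero‑extension‑plus‑dominated‑convergence device, with the dominating sequence furnished by (b).
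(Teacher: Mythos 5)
The paper contains no proof of this lemma to compare against: it is imported verbatim from \cite{giorgi2017limit} (``The following result was proved in...''), so your blind reconstruction can only be measured against that source, and it is essentially the same argument and is sound. Reading $V\mathbf{w}=e_{1}$ as Lagrange interpolation at the nodes $x_{k}=M^{-\alpha(k-1)}$ evaluated at $0$ does give $w_{j}=(-1)^{L-j}a_{j}\,a_{L-j+1}\,M^{-\alpha(L-j)(L-j+1)/2}$; part (a) then follows from the convergence of $\prod_{k\geq 1}(1-M^{-\alpha k})$ to a positive limit and the super-geometric decay, (b) from bounding the $a$-factors by $a_{\infty}$, and (c), (d) from the pointwise limits $\widetilde{W}^{(L)}_{l}\to 1$ (fixed $l$) and $\widetilde{W}^{(L)}_{L-i}\to a_{\infty}\sum_{k=0}^{i}b_{k}$ (fixed $i$) combined with the zero-extension/dominated-convergence device you describe; your estimate $\sum_{l=2}^{L}\lvert\widetilde{W}^{(L)}_{l}-1\rvert\leq a_{\infty}^{2}\sum_{m\geq1}m\,M^{-\alpha m(m+1)/2}=O(1)$ for $\gamma=0$ is also correct, and is the right way to read the garbled ``$R$'' in that case (the sum is $\sum_{l=2}^{L}v_{l}+O(1)$, hence $\sim L$ when $v_{l}\to1$). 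The one caveat is the bookkeeping you yourself flagged: with $b_{l}$ as transcribed in this paper (product over $1\leq k\leq l-1$), the second factor of $w_{j}$ is $(-1)^{L-j}a_{L-j+1}M^{-\alpha(L-j)(L-j+1)/2}$, which equals $b_{L-j}$ only if the product in the definition of $b_{l}$ runs to $l$ rather than $l-1$; as printed, your direct bound yields the slightly larger constant $a_{\infty}\sum_{i\geq0}a_{i+1}M^{-\alpha i(i+1)/2}$ in (b), and the $\gamma>0$ limit in (d) displays partial sums of the correspondingly modified $b$'s. This is a transcription slip in the present paper rather than a gap in your argument (with the source's normalization $w_{j}=a_{j}b_{L-j}$ is exact and your constants match the statement), but you should either adopt that normalization or state the marginally larger constant.
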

With above results at our disposal, we present a series of results establishing the existence and uniqueness of the optimal parameter $\theta_l^*$ on various level of resolution. For the most part the proof follows line of argument similar to that presented in \cite{kebaier2018coupling,alaya2015importance,arouna2004adaptative}. We start our discussion bu proving the following lemma, necessary for the existence and uniqueness results.

\begin{lemma}
\label{onelemma:2}
Let $p\geq2$ and assume the $\mathcal{L}^p$-strong error rate assumption, \textit{i.e.},
\begin{equation}
\label{oneeq:24}
\exists \beta > 0, V_{1}^{(p)}>0, \bigg\lVert P(X_{T}^{h})-P(X_{T})\bigg\rVert_{p}^{p}= \mathbb{E}\left[\bigg\lvert P(X_{T}^{h})-P(X_{T})\bigg\rvert^{p}\right]\leq V_{1}^{(p)}h^{\beta p/2}, h \in \mathcal{H}.
\end{equation}
Then, $\mathbb{E}\left[\sup_{\lvert \theta \rvert \leq c}\bigg \lvert G_{l}(\theta,Y_l,W_T) \bigg \rvert \right] < \infty$ for $l=2,\dots,L$ and some constant $c > 0 \in \mathbb{R}$. 
\end{lemma}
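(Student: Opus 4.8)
The plan is to bound $G_l(\theta,Y_l,W_T)$ pointwise by an integrable majorant that is uniform over $|\theta|\le c$, and then invoke Hölder's inequality together with the $\mathcal{L}^p$-strong error assumption \eqref{oneeq:24} and the fact that $W_T^l$ has Gaussian (hence all) moments. Recall from \eqref{oneeq:22} that for $l=2,\dots,L$,
\[
G_l(\theta,Y_l,W_T^l)=(\theta T-W_T^l)\,k_l\,Z_l^2\,\mathcal{J}^+(W_T^l,\theta),
\qquad
\mathcal{J}^+(W_T^l,\theta)=e^{-\innerproduct{W_T^l}{\theta}+\frac12\norm{\theta}^2T},
\]
where $Z_l=P(X_T^{h/M^{l-1}})-P(X_T^{h/M^{l-2}})$ and $k_l$ is the deterministic constant in \eqref{oneeq:15}. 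First I would control the exponential factor: for $|\theta|\le c$ one has $\tfrac12\norm{\theta}^2T\le\tfrac12 c^2 T$ and $-\innerproduct{W_T^l}{\theta}\le c\,|W_T^l|$, so
\[
\sup_{|\theta|\le c}\mathcal{J}^+(W_T^l,\theta)\le e^{\frac12 c^2 T}\,e^{c|W_T^l|}.
\]
Likewise $\sup_{|\theta|\le c}|\theta T-W_T^l|\le cT+|W_T^l|$. Hence
\[
\sup_{|\theta|\le c}\babs{G_l(\theta,Y_l,W_T^l)}
\le |k_l|\,e^{\frac12 c^2 T}\,(cT+|W_T^l|)\,e^{c|W_T^l|}\,Z_l^2,
\]
and it remains to show the right-hand side has finite expectation.

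Next I would apply Hölder's inequality with exponents $p/2$ and $p/(p-2)$ (taking the obvious modification, e.g. $L^\infty$ on the Gaussian factor, if $p=2$ is allowed — though the hypothesis $p\ge 2$ with a genuine exponent pairing works cleanly for $p>2$; for $p=2$ one simply notes $Z_l\in L^2$ and the Gaussian factor is deterministic-free only through $W_T^l$ which must then be handled by a direct split, so I would really use the $\mathcal{L}^p$ assumption for some fixed $p>2$). Writing $q'=p/(p-2)$,
\[
\mathbb{E}\Big[(cT+|W_T^l|)\,e^{c|W_T^l|}\,Z_l^2\Big]
\le \Big(\mathbb{E}\big[|Z_l|^{p}\big]\Big)^{2/p}\,
\Big(\mathbb{E}\big[(cT+|W_T^l|)^{q'}e^{c q'|W_T^l|}\big]\Big)^{1/q'}.
\]
The second factor is finite because $W_T^l\sim\mathcal{N}(0,T\,I_q)$ has a finite exponential moment $\mathbb{E}[e^{a|W_T^l|}]<\infty$ for every $a>0$, and polynomial-times-exponential moments of a Gaussian are finite. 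For the first factor, the triangle inequality in $L^p$ and \eqref{oneeq:24} give
\[
\norm{Z_l}_p\le\norm{P(X_T^{h/M^{l-1}})-P(X_T)}_p+\norm{P(X_T)-P(X_T^{h/M^{l-2}})}_p
\le \big(V_1^{(p)}\big)^{1/p}\Big(h^{\beta/2}M^{-\beta(l-1)/2}+h^{\beta/2}M^{-\beta(l-2)/2}\Big),
\]
which is finite for each fixed $l$. Multiplying by the finite constant $|k_l|e^{\frac12c^2T}$ then yields $\mathbb{E}[\sup_{|\theta|\le c}|G_l(\theta,Y_l,W_T^l)|]<\infty$, as desired. (For $l=1$ the estimator involves $Z_1=P(X_T^{n_1})$ in place of $Z_l$, which lies in $L^p$ by \eqref{oneeq:24} applied with the triangle inequality and the integrability of $P(X_T)$; the same argument applies, though the lemma as stated only claims $l\ge2$.)

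The only genuinely delicate point is the interaction between the Gaussian weight $\mathcal{J}^+$ and the payoff increment $Z_l$: one must make sure $\mathcal{J}^+$ does not carry so much mass that the product fails to be integrable. This is precisely why the restriction $|\theta|\le c$ (the local/compact bound, later guaranteed by the projection onto $\Theta$ in \eqref{oneeq:20}) is essential — it converts the otherwise unbounded exponential tilt into a fixed exponential-moment computation for a Gaussian, which is harmless. A secondary bookkeeping issue is choosing the Hölder exponents so that $Z_l$ is paired with an $L^p$-norm exactly matching the hypothesis \eqref{oneeq:24}; I would simply fix $p>2$ at the outset and let $q'=p/(p-2)$, absorbing all $l$-dependent but finite constants ($k_l$, $V_1^{(p)}$, $M$, $h$, $c$, $T$) into a single constant at the end.
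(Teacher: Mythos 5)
Your proposal is correct and follows essentially the same strategy as the paper: dominate $\sup_{\abs{\theta}\le c}\abs{G_l}$ pointwise by $(cT+\abs{W_T})k_lZ_l^2e^{c\abs{W_T}+\frac{c^2}{2}T}$, split the Gaussian factor from $Z_l^2$ by H\"older, and control $Z_l$ through the strong error assumption. The only substantive difference is the choice of H\"older exponents: the paper pairs $\frac{p}{p-1}$ with $p$, placing $k_lZ_l^2$ in $L^p$ and hence estimating $\norm{Z_l}_{2p}^2$ via the assumption at level $2p$ (this is why the constant in \eqref{oneeq:26} involves $V_1^{(2p)}$), whereas you pair $\frac{p}{p-2}$ with $\frac{p}{2}$, which uses the stated $\mathcal{L}^p$ hypothesis directly but degenerates at $p=2$ --- a limitation you correctly flag and circumvent by fixing $p>2$; neither choice matters in practice since the discretization schemes satisfy the strong rate for all moments. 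One further point: the paper does not stop at finiteness for each fixed $l$ but tracks the explicit bound $\norm{k_lZ_l^2}_p\le K(M,\beta,h,p)\,k_lM^{-\beta(l-1)}$ and invokes parts (b) and (c) of Lemma \ref{onelemma:1} to see that $k_lM^{-\beta(l-1)}$ is bounded uniformly in $l$; this uniformity (used as $\sup_l\mathbb{E}[\sup_{\abs{\theta}\le c}\abs{G_l}]<\infty$ in the proof of Proposition \ref{oneprop:1}) is not delivered by your argument as written, since you absorb $k_l$ into an $l$-dependent constant, though it is enough for the lemma as literally stated because $L$ is finite.
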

\begin{proof} 
We prove the above lemma for $\beta > 1$. The proof for $\beta \in (0,1]$ follows a similar line of argument and is relatively easy to prove. Consider,
\[\bigg \lvert G_l(\theta,Z_l,W_T) \bigg \rvert=\bigg\lvert(\theta T-W_{T})k_{l} Z_{l}^{2} e^{-\innerproduct{\theta}{W_{T}}+\frac{1}{2}\lvert \theta \rvert^{2}T} \bigg \rvert.\]
Then, for $c>0$,
\[\begin{aligned}
\sup_{\lvert \theta \rvert \leq c}\bigg \lvert G_{l}(\theta, Z_{l},W_{T}) \bigg \rvert &=\sup_{\lvert \theta \rvert \leq c}\bigg \lvert (\theta T-W_{T})k_{l}Z_{l}^{2} e^{-\innerproduct{\theta}{W_{T}}+\frac{1}{2}\lvert \theta \rvert^{2}T} \bigg \rvert,\\
&\leq (cT+\lvert W_{T}\rvert)k_{l} Z_{l}^{2} e^{c\lvert W_{T}\rvert+\frac{1}{2}\lvert c \rvert^{2}T}.
\end{aligned}\]
Taking expectation on both sides and applying H\"older's inequality for all $p\geq2$, we get,
\[\mathbb{E}\left[\sup_{\lvert \theta \rvert \leq c}\bigg \lvert G_{l}(\theta,Y_{l},W_{T}) \bigg \rvert \right] \leq e^{\frac{c^{2}}{2}T}\bigg \lVert e^{c\lvert W_{T}\rvert}(cT+\lvert W_{T} \rvert) \bigg\rVert_{\frac{p}{p-1}}\bigg \lVert k_{l} Z_{l}^{2} \bigg \rVert_{p}.\]
It is clear that $\bigg \lVert e^{c\lvert W_{T}\rvert}(cT+\lvert W_{T} \rvert) \bigg\rVert_{\frac{p}{p-1}}$ is bounded. As for $\lVert k_{l} Z_{l}^{2}\rVert_{p}$ we fallback to the $\mathcal{L}^{p}$-\textit{strong error} assumption. Therefore, we have,
\begin{equation}
\label{oneeq:25}
\lVert k_{l}Z_{l}^{2}\rVert_{p}=k_{l}\lVert Z_{l} \rVert_{2p}^2 \leq K(M,\beta,h,p)k_{l} M^{-\beta(l-1)},
\end{equation}
where,
\begin{equation}
\label{oneeq:26}
K(M,\beta,h,p)=\left(V_{1}^{(2p)}\right)^{\frac{1}{p}}\left(1+M^{\frac{\beta}{2}}\right)^{2}h^{\beta}.
\end{equation}
The boundedness of $k_{l} M^{-\beta(l-1)}$ can be derived from the definition of $k_{l}$ and from \textit{(b)} and \textit{(c)} of Lemma \ref{onelemma:1}. Hence, from the above analysis, we can conclude that,  $\mathbb{E}\left[\sup_{\lvert \theta \rvert \leq c}\bigg \lvert G_l(\theta,Y_{l},W_{T}) \bigg \rvert \right]$ is bounded for $l=2,\dots,L$.
\end{proof}

We state now, the result pertaining to the existence and uniqueness of optimal parameters on various level of resolutions.
 
\begin{proposition}
\label{oneprop:1}
Suppose $b$ and $\sigma$ satisfies assumption (\ref{onecond:A_1}). Let $P$ be such that,
\[\mathbb{P}(X_{T} \notin D_{P})=0,~\text{where}~D_{P}=\{x \in \mathbb{R}| P~\text{is differentiable at}~x\}.\] 
Further, if $Z_l$ satisfies the following conditions,
\begin{enumerate}[(1)]
\item $\mathbb{P}(Z_{l} \neq 0) > 0$. 
\item $\lVert Z_{l}^{2} \rVert_{p} < +\infty~\text{for some}~p >1$.
\end{enumerate}
Then the function $\theta \rightarrow \mathbf{v}_l(\theta)$ is $\mathcal{C}^2$ and strictly convex with $\nabla_{\theta} \mathbf{v}_l(\theta)=\mathbb{E}[G_{l}(\theta,Z_{l},W_{T})]$ for all $l \in \mathbb{N}$. Moreover, there exists an unique  $\theta^*\in \mathbb{R}^q$ such that $\min_{\theta \in \mathbb{R}^q}\mathbf{v}_l(\theta)=\mathbf{v}_l(\theta^{*})$.
\end{proposition}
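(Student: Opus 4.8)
Throughout write $\varphi_l(\theta,\omega) := k_l\, Z_l(\omega)^2\, \mathcal{J}^{+}(W^l_T(\omega),\theta) = k_l Z_l(\omega)^2\, e^{-\innerproduct{\theta}{W_T(\omega)}+\frac12\norm{\theta}^2 T}$, so that, by the change of variables \eqref{oneeq:16}--\eqref{oneeq:17} (and with the obvious replacement of the constant $k_l$ by a positive constant when $\beta\in(0,1]$, since $Y_l^2$ is a fixed multiple of $Z_l^2$),
\[
\mathbf{v}_l(\theta) = \mathbb{E}\big[\varphi_l(\theta,\omega)\big],\qquad l\ge 1 .
\]
The plan is to exploit the fact that in this representation the whole $\theta$-dependence sits in the smooth, sub-Gaussian weight $\mathcal{J}^{+}$ while $Z_l$ is a fixed random variable. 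First I would record the pointwise facts: for each $\omega$ the map $\theta\mapsto\varphi_l(\theta,\omega)$ is $\mathcal{C}^{\infty}$ with
\[
\nabla_\theta\varphi_l(\theta,\omega) = (\theta T - W_T)\,k_l Z_l^2\,\mathcal{J}^{+}(W^l_T,\theta) = G_l(\theta,Z_l,W_T),
\]
\[
\nabla^2_\theta\varphi_l(\theta,\omega) = k_l Z_l^2\Big((\theta T - W_T)(\theta T - W_T)^{\top} + T\,I_q\Big)\mathcal{J}^{+}(W^l_T,\theta)\ \succeq\ 0,
\]
so $\varphi_l(\cdot,\omega)$ is convex, and strictly convex on the event $\{Z_l\neq 0\}$.

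Next I would justify differentiating twice under the expectation. For $\norm{\theta}\le c$ one dominates $\abs{\nabla_\theta\varphi_l}$ and $\norm{\nabla^2_\theta\varphi_l}$ by $k_l Z_l^2$ times a polynomial in $\abs{W_T}$ times $e^{c\abs{W_T}+\frac12 c^2 T}$, exactly as in Lemma~\ref{onelemma:2} (the second-derivative bound being the natural quadratic-prefactor analogue of that lemma); Hölder's inequality with exponents $p$ and $p/(p-1)$, using hypothesis (2) that $\norm{Z_l^2}_p<\infty$ together with the fact that $W_T$ is Gaussian and hence has all exponential moments, shows these dominating functions are integrable, uniformly for $\norm{\theta}\le c$. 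Dominated convergence then gives $\mathbf{v}_l\in\mathcal{C}^2$ with $\nabla_\theta\mathbf{v}_l(\theta)=\mathbb{E}[G_l(\theta,Z_l,W_T)]$ and $\nabla^2_\theta\mathbf{v}_l(\theta)=\mathbb{E}[\nabla^2_\theta\varphi_l(\theta,\omega)]$. Since hypothesis (1), $\mathbb{P}(Z_l\neq 0)>0$, gives $\mathbb{E}[Z_l^2\,\mathcal{J}^+(W^l_T,\theta)]>0$, dropping the rank-one term in the Hessian yields $\nabla^2_\theta\mathbf{v}_l(\theta)\succeq k_l T\,\mathbb{E}[Z_l^2\,\mathcal{J}^+(W^l_T,\theta)]\,I_q\succ 0$ for every $\theta$, i.e.\ $\mathbf{v}_l$ is strictly convex.

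It then remains to produce the minimizer, for which continuity and strict convexity are already in hand and only coercivity is missing. Applying Jensen's inequality to $e^{-\innerproduct{\theta}{\cdot}}$ under the probability measure $Z_l^2\,d\mathbb{P}/\mathbb{E}[Z_l^2]$ I would obtain
\[
\mathbf{v}_l(\theta)\ \ge\ k_l\,\mathbb{E}[Z_l^2]\,\exp\!\Big(\tfrac{T}{2}\norm{\theta}^2-\innerproduct{\theta}{m_l}\Big),\qquad m_l:=\frac{\mathbb{E}[Z_l^2 W_T]}{\mathbb{E}[Z_l^2]},
\]
with $m_l$ finite by Hölder ($\norm{Z_l^2}_p\,\norm{W_T}_{p/(p-1)}<\infty$) and $0<\mathbb{E}[Z_l^2]<\infty$; since the exponent tends to $+\infty$ as $\norm{\theta}\to\infty$, $\mathbf{v}_l$ is coercive. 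A continuous, coercive, strictly convex function on $\mathbb{R}^q$ attains its infimum at a unique point $\theta^{*}$, necessarily the unique zero of $\nabla_\theta\mathbf{v}_l(\theta)=\mathbb{E}[G_l(\theta,Z_l,W_T)]$; the level $l=1$ case is identical with $Z_1=Y(\mathbf h)$. Here assumption \eqref{onecond:A_1} and $\mathbb{P}(X_T\notin D_P)=0$ are what make \eqref{oneeq:16} a valid identity and $Z_l$ (hence $\mathbf{v}_l$) well defined; in the pre-Girsanov formulation the a.s.\ differentiability of $P$ at $X_T$ is moreover exactly what is needed to differentiate $\theta\mapsto P(X_T(\theta))$ under the integral.

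I expect the main obstacle to be the locally-uniform domination required for the \emph{second} derivative — controlling $\sup_{\norm{\theta}\le c}\norm{\nabla^2_\theta\varphi_l(\theta,\cdot)}$ by an integrable function — because it forces one to combine the quadratic-in-$W_T$ prefactor, the sub-Gaussian weight $\mathcal{J}^+$, and the $\mathcal{L}^p$ bound on $Z_l^2$ through Hölder in a way that is uniform on compact sets of $\theta$. Once this estimate (a mild strengthening of Lemma~\ref{onelemma:2}) is secured, the $\mathcal{C}^2$-regularity, the gradient identity $\nabla_\theta\mathbf{v}_l=\mathbb{E}[G_l]$, strict convexity, and the coercivity bound all follow routinely.
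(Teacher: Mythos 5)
Your proposal is correct and follows essentially the same route as the paper: pointwise smoothness of $\theta\mapsto k_l Z_l^2\,\mathcal{J}^{+}(W_T,\theta)$, domination on compact $\theta$-sets via H\"older together with the integrability secured in Lemma~\ref{onelemma:2}, dominated convergence to obtain $\mathcal{C}^2$ regularity with $\nabla_\theta\mathbf{v}_l(\theta)=\mathbb{E}[G_l(\theta,Z_l,W_T)]$, the same explicit Hessian, and strict convexity from $\mathbb{P}(Z_l\neq 0)>0$. The only deviation is the coercivity step, where you derive an explicit lower bound by Jensen's inequality under the tilted measure $Z_l^2\,d\mathbb{P}/\mathbb{E}[Z_l^2]$, while the paper obtains $\lim_{\abs{\theta}\to+\infty}\mathbf{v}_l(\theta)=+\infty$ by Fatou's lemma; both arguments are valid and lead to the same conclusion.
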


\begin{proof}
To prove the proposition, we refer to the proof in \cite{alaya2015importance,kebaier2018coupling}, in our context. Here we discuss the proof for $l\geq 2$ and $\beta >1$. For $l=1$ and $\beta \in (0,1]$ the proofs are relatively easy and can be reproduced in the similar way. To begin with, one can observe that $\displaystyle{\theta \rightarrow k_{l} (Z_{l})^{2} e^{-\innerproduct{\theta}{W_{T}}+\frac{1}{2}\lvert \theta \rvert^{2}T}}$ is a continuously infinitely differentiable function with respect to $\theta$, and $\displaystyle{\frac{\partial}{\partial \theta^{j}}(k_{l}(Z_{l})^{2} e^{-\innerproduct{\theta}{W_T}+\frac{1}{2}\lvert \theta \rvert^{2}T})=k_{l}(Z_{l})^{2} (\theta^{j}T-W^{j}_{T}) e^{-\innerproduct{\theta}{W_{T}}+\frac{1}{2}\lvert \theta \rvert^{2}T}}$. Therefore, the first derivative of the map $\displaystyle{\theta \rightarrow k_{l} (Z_{l})^{2} e^{-\innerproduct{\theta}{W_{T}}+\frac{1}{2}\lvert \theta \rvert^{2}T}}$ is equal to $G_{l}(\theta,Z_{l},W_{T})$. As we have already seen in Lemma \ref{onelemma:2} that the $\sup_l\mathbb{E}[\sup_{\lvert \theta \rvert \leq c}\lvert G_{l}(\theta,Z_{l},W_{T})\rvert]$ is bounded, therefore by Lebesgue's theorem we can conclude that $\mathbf{v}_{l}(\theta)$ is $\mathcal{C}^{1}(\mathbb{R}^{q})$, with $\nabla_{\theta}\mathbf{v}_{l}(\theta)=\mathbb{E}[G_{l}(\theta,Z_{l},W_{T})]$ for all $l \in \mathbb{N}$. A similar line of argument also proves that $\mathbf{v}_{l}(\theta)$ is $\mathcal{C}^{2}(\mathbb{R}^{q})$. The Hessian of $\mathbf{v}_{l}(\theta)$ is given as follows,
\[Hess(\mathbf{v}_{l}(\theta))=\mathbb{E}\left[\left((\theta T-W_{T})(\theta T-W_{T})^{*}+ TI_{q}\right)k_{l}(Z_{l})^2e^{-\innerproduct{\theta}{W_{T}}+\frac{1}{2}\lvert \theta \rvert^{2}T}\right].\]
Since, $\mathbb{P}(Z_{l} \neq 0) >0$, therefore, for all $u \in \mathbb{R}^{q} \setminus \{0\}$, $u^{T} Hess(\mathbf{v}_{l}(\theta))u >0$. Hence, we can conclude that $\mathbf{v}_{l}(\theta)$ is strictly convex. As a consequence, there exists a minimum $\theta^{*} \in \mathbb{R}^{q}$, such that $\displaystyle{\min_{\theta \in \mathbb{R}^{q}}\mathbf{v}_{l}(\theta)=\mathbf{v}_{l}(\theta^{*})}$. Further,the unique minimum is attained for a finite value of $\theta$, it is enough to prove that $\displaystyle{\lim_{\lvert \theta \rvert \to +\infty}\mathbf{v}_l(\theta)=+\infty}$. This can be proved using Fatou's lemma as, 
\[+\infty=\mathbb{E}\left[\liminf_{\abs{\theta}\rightarrow +\infty}k_{l}(Z_l)^2e^{-\innerproduct{\theta}{W_{T}}+\frac{1}{2}\lvert \theta \rvert^{2}T}\right] \leq \liminf_{\abs{\theta}\rightarrow +\infty}\mathbb{E}\left[k_{l}(Z_{l})^2e^{-\innerproduct{\theta}{W_{T}}+\frac{1}{2}\lvert \theta \rvert^{2}T}\right].\]
This completes the proof.
\end{proof}

As the above proposition guarantees the existence and uniqueness of the $\theta_{l}^{*}$ for various level of resolutions, we now state the main result proving the convergence of the sequence $\theta_{l}^{k} \xrightarrow{a.s.} \theta_{l}^{*}$, as $k \rightarrow \infty$, under the assumptions of the Lemma \ref{onelemma:2} and Proposition \ref{oneprop:1}.

\begin{theorem}
\label{onetheorem:3}
If $\theta_{l}^{*}=\arg\min_{\theta_{l} \in \mathbb{R}^{q}} \mathbf{v}_{l}(\theta)$ is such that, $\nabla_{\theta}\mathbf{v}_l(\theta_{l}^{*})=0$ and $\theta_{l}^{*} \in \Theta$, then $\theta_{l}^{k} \xrightarrow{a.s.} \theta_{l}^{*}$, as $k \rightarrow \infty$.
\end{theorem}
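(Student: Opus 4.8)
The plan is to read the recursion~\eqref{oneeq:20} as a projected (constrained) Robbins--Monro scheme whose mean field is the gradient $\nabla_\theta\mathbf{v}_l$, and then to invoke the classical almost sure convergence theorem for such schemes, in the spirit of \cite{arouna2004adaptative,alaya2015importance,kebaier2018coupling,lelong2008almost}, after checking its hypotheses in the present setting. Concretely, three things must be verified: that $G_l$ is a conditionally unbiased observation of $\nabla_\theta\mathbf{v}_l$; that the conditional second moment of the associated noise is bounded uniformly in the iteration index; and that the strictly convex, coercive potential $\mathbf{v}_l$ furnished by Proposition~\ref{oneprop:1} has $\theta_l^{*}$ as its \emph{unique} critical point, with $\theta_l^{*}\in\Theta$ holding by hypothesis.

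First I would fix the filtration. Since the $k$-th term of~\eqref{oneeq:13} uses a fresh independent copy $(Y_l^{(k)},W_T^{l,(k)})$, set $\mathcal{F}_n:=\sigma\big(\theta_l^{0};\,(Y_l^{(i)},W_T^{l,(i)}),\ 1\le i\le n\big)$, so that $\theta_l^{n}$ is $\mathcal{F}_n$-measurable and, by Proposition~\ref{oneprop:1},
\[\mathbb{E}\big[G_l(\theta_l^{n},Y_l^{(n+1)},W_T^{l,(n+1)})\,\big|\,\mathcal{F}_n\big]=\nabla_\theta\mathbf{v}_l(\theta_l^{n}).\]
Writing $G_l(\theta_l^{n},\cdot)=\nabla_\theta\mathbf{v}_l(\theta_l^{n})+\delta M_{n+1}$ then exhibits $(\delta M_n)_{n\ge1}$ as an $(\mathcal{F}_n)$-martingale difference sequence. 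To bound its conditional variance I would rerun the estimate of Lemma~\ref{onelemma:2} with the integrand \emph{squared}: H\"older's inequality controls $\mathbb{E}\big[\sup_{\theta\in\Theta}\abs{G_l(\theta,\cdot)}^{2}\big]$ by $k_l^{2}\norm{Z_l}_{4p}^{4}$ times an exponential moment of $\abs{W_T^{l}}$, which is finite provided the $\mathcal{L}^{p}$-strong error rate~\eqref{oneeq:24} holds for $p$ large enough, while $k_l^{2}M^{-2\beta(l-1)}$ stays bounded in $l$ by parts \textit{(b)} and \textit{(c)} of Lemma~\ref{onelemma:1}. Hence $\sup_n\mathbb{E}\big[\abs{\delta M_{n+1}}^{2}\mid\mathcal{F}_n\big]\le C_l<\infty$.

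Next I would run the Lyapunov argument with $L(\theta):=\tfrac12\abs{\theta-\theta_l^{*}}^{2}$. Because $\Theta$ is closed and convex and $\theta_l^{*}\in\Theta$, the projection is non-expansive and satisfies $\abs{\mathrm{Proj}_\Theta(u)-\theta_l^{*}}\le\abs{u-\theta_l^{*}}$; expanding the square in~\eqref{oneeq:20} and taking $\mathbb{E}[\,\cdot\mid\mathcal{F}_n]$ gives
\[\mathbb{E}\big[L(\theta_l^{n+1})\mid\mathcal{F}_n\big]\ \le\ L(\theta_l^{n})-\gamma_{n+1}\,r_n+\tfrac12\gamma_{n+1}^{2}C_l,\qquad r_n:=\innerproduct{\nabla_\theta\mathbf{v}_l(\theta_l^{n})}{\theta_l^{n}-\theta_l^{*}}.\]
Since $\mathbf{v}_l$ is convex with $\nabla_\theta\mathbf{v}_l(\theta_l^{*})=0$ we have $r_n\ge0$, and since $\sum_n\gamma_n^{2}<\infty$ the Robbins--Siegmund lemma applies: $L(\theta_l^{n})$ converges a.s.\ to a finite limit and $\sum_n\gamma_{n+1}r_n<\infty$ a.s.

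Finally, because $\sum_n\gamma_n=\infty$, summability of $\gamma_{n+1}r_n$ forces $\liminf_n r_n=0$ a.s.; as $\theta\mapsto\innerproduct{\nabla_\theta\mathbf{v}_l(\theta)}{\theta-\theta_l^{*}}$ is continuous on the compact $\Theta$ and, by the strict convexity of $\mathbf{v}_l$ from Proposition~\ref{oneprop:1}, vanishes only at $\theta_l^{*}$, there is a (random) subsequence along which $\theta_l^{n_k}\to\theta_l^{*}$ a.s.; the already-established convergence of $L(\theta_l^{n})$ then forces its limit to be $L(\theta_l^{*})=0$, i.e.\ $\theta_l^{k}\xrightarrow{a.s.}\theta_l^{*}$. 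The argument is unchanged for $l=1$ and for $\beta\in(0,1]$, using the corresponding generators $G_l$ from~\eqref{oneeq:23}. The one genuinely technical step is the uniform-in-$\Theta$ $L^{2}$ control of $G_l$: one must verify that the exponential Girsanov weight $\mathcal{J}^{+}(W_T^{l},\theta_l)$ paired with the linear factor $(\theta_l T-W_T^{l})$ is square-integrable uniformly over the compact $\Theta$ — which is why an $\mathcal{L}^{p}$-strong error rate with $p$ strictly larger than $2$ is required — and that the level constant $k_l$ does not grow with $l$, which is exactly the role played here by Lemma~\ref{onelemma:1}. Everything else is the standard stochastic-approximation machinery.
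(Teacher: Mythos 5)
Your argument is correct, and it verifies exactly the two hypotheses that the paper's proof rests on: the attractivity condition $\innerproduct{\nabla_{\theta}\mathbf{v}_{l}(\theta)}{\theta-\theta_{l}^{*}}>0$ for $\theta\neq\theta_{l}^{*}$ (obtained, as you do, from the strict convexity and $\nabla_{\theta}\mathbf{v}_{l}(\theta_{l}^{*})=0$ furnished by Proposition \ref{oneprop:1}), and a second-moment bound on $G_{l}$ uniform over the compact set $\Theta$ (obtained, as you do, by Cauchy--Schwarz/H\"older together with the $\mathcal{L}^{p}$-strong error assumption and parts \textit{(b)}--\textit{(c)} of Lemma \ref{onelemma:1} to keep $k_{l}M^{-\beta(l-1)}$ bounded, exactly as in Lemma \ref{onelemma:2}). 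The difference is in how the conclusion is then drawn: the paper simply checks these two conditions and invokes the constrained Robbins--Monro convergence theorem (Theorem A.1 of \cite{laruelle2013optimal}) as a black box, the second condition appearing there as the non-explosion bound $\mathbb{E}[\abs{G_{l}(\theta,Z_{l},W_{T})}^{2}]\leq C(1+\abs{\theta}^{2})$, which your uniform-in-$\Theta$ constant $C_{l}$ trivially implies. You instead reprove that theorem in the present setting: the martingale-difference decomposition with respect to $\mathcal{F}_{n}$, the non-expansiveness of $\text{Proj}_{\Theta}$ (using $\theta_{l}^{*}\in\Theta$), the Lyapunov function $\tfrac12\abs{\theta-\theta_{l}^{*}}^{2}$, Robbins--Siegmund, $\sum_{n}\gamma_{n}=\infty$ forcing $\liminf_{n}r_{n}=0$, and the compactness/continuity argument identifying the limit. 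This makes your write-up self-contained and makes explicit where each structural hypothesis (conditional unbiasedness from Proposition \ref{oneprop:1}, step-size conditions \eqref{oneeq:21}, compact convex $\Theta$ containing $\theta_{l}^{*}$) enters, at the cost of being longer; the paper's citation-based route is shorter but hides the mechanism. The only caveat, which you flag yourself and which the paper shares, is that the uniform $L^{2}$ control of $G_{l}$ needs the $\mathcal{L}^{p}$-strong error assumption at a high enough moment (the paper's bound \eqref{oneeq:28} already uses eighth moments of $Z_{l}$), so strictly speaking both proofs require \eqref{oneeq:24} for $p$ larger than the $p\geq2$ stated in Lemma \ref{onelemma:2}.
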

\begin{proof}
To prove the above result we follow the assertion made in Theorem A.1 \cite{laruelle2013optimal} that suggest that in order to prove $\theta_{l}^{k} \rightarrow \theta_{l}^{*}$, where the sequence $(\theta_{l}^{k})_{k\geq1}$ is constructed through a constrained version of the Robbins Monro, we need to verify two conditions, namely,
\begin{enumerate}[(1)]
\item $\forall~\theta \neq \theta_{l}^{*},~ \innerproduct{\nabla_{\theta}\mathbf{v}_{l}(\theta)}{\theta-\theta_{l}^{*}} > 0$.
\item Non explosion condition: $\exists~C>0$ such that $\forall \theta \in \Theta,~ \mathbb{E}\left[\abs{G_{l}(\theta,Z_{l},W_{T})}^2\right] < C(1+\abs{\theta^2})$.
\end{enumerate}
As we know, $\displaystyle{\nabla_{\theta}\mathbf{v}_{l}(\theta_{l}^{*})=0}$ and in the previous proposition it was proven that $v_l$ is convex, therefore as a consequence, we prove that,
\begin{equation}
\label{oneeq:27}
\forall \theta \neq \theta_{l}^{*},~ \innerproduct{\nabla_{\theta}\mathbf{v}_{l}(\theta)}{\theta-\theta_{l}^{*}} > 0.
\end{equation}
For the non-explosion condition, we use the Cauchy-Schwarz inquality,
\begin{equation}
\label{oneeq:28}
\mathbb{E}\left[\abs{G_{l}(\theta,Z_{l},W_{T})}^2\right] \leq e^{\abs{\theta}^{2}T}\left(\mathbb{E}[k_{l}^{4} Z_{l}^{8}]\right)^{\frac{1}{2}}\left(\mathbb{E}[\abs{e^{-\innerproduct{\theta}{W_{T}}}(\theta T-W_{T})}^{4}]\right)^{\frac{1}{2}}.
\end{equation}
Under the assumption and following the similar line of argument of Lemma \ref{onelemma:2}, it is easy to prove that there exists a constant $C>0$, such that,  
\begin{equation}
\label{oneeq:29}
\mathbb{E}\left[\abs{G_{l}(\theta,Z_{l},W_{T})}^{2}\right] \leq e^{\abs{\theta}^{2}T}C\left(\mathbb{E}[\abs{e^{-\innerproduct{\theta}{W_{T}}}(\theta T-W_{T})}^4]\right)^{\frac{1}{2}}.
\end{equation}
Further using the fact the $\theta \in \Theta$, we can deduce that,
\begin{equation}
\label{oneeq:30}
\sup_{\theta \in \Theta}\mathbb{E}\left[\abs{G_{l}(\theta,Z_{l},W_{T})}^{2}\right] < \infty.
\end{equation} which in turns concludes the non explosion condition. This proves the almost sure convergence of $\theta_{l}^{k}$ to $\theta_{l}^{*}$ as $k \rightarrow \infty$.
\end{proof}

\subsection{Main Result}
\label{section_mainresult}

\begin{theorem}
\label{onetheore:4}
Let $(\theta_{l}^{k})_{k \geq 0} \subset \Theta $ be a family of sequence converging to $\theta_{l}^{*}\in \Theta $ as $k\rightarrow \infty$. Moreover for $p\geq2$ assume \textbf{weak error} for all $L\geq1$ and $P(X_{T}) \in \mathbf{L}^{p}$. Furthermore assume the  $\mathbf{L}^{p}$-\textbf {strong error} rate assumption,\textit{i.e.,}
\begin{equation}
\label{oneeq:31}
\exists \beta >0, V_{1}^{(p})>0, \bigg\lVert P(X_{T}^{h})-P(X_{T})\bigg\rVert_{p}^{p}=\mathbb{E}\left[\bigg\lvert P(X_{T}^{h})-P(X_{T})\bigg\rvert^{p}\right]\leq V_{1}^{(p)}h^{\beta p/2}, h \in \mathcal{H}.
\end{equation}
If $(\epsilon_{k})_{k\geq1}$ is a sequence of positive real numbers such that $\displaystyle{\sum\limits_{k\geq1}\epsilon^{p}_k < +\infty}$, then the AISML2R estimator satisfies,
\begin{equation}
\label{oneeq:32}
\mathbf{J}_{\pi}^{N,\theta} \xrightarrow{a.s.} \mathbf{J}_{0}.
\end{equation}
\end{theorem}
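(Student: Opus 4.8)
The plan is to show that each of the $L(\epsilon)$ level-terms in the estimator $\mathbf{J}_{\pi}^{N,\theta}$ converges almost surely to its mean, and that these means sum to $\mathbf{J}_0$, so that a Borel--Cantelli/Kronecker-type argument closes the gap. First I would fix a level $l$ and isolate the $l$-th block of the estimator, namely $\frac{\widetilde{W}_l}{N_l}\sum_{k=1}^{N_l}\bigl(P(X^{n_l,k}_{T,\theta_l^{k-1}})-P(X^{n_{l-1},k}_{T,\theta_l^{k-1}})\bigr)\mathcal{J}^{-}(W^{l,k}_T,\theta_l^{k-1})$ (and the analogous term for $l=1$). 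Because the parameters $\theta_l^{k-1}$ are themselves random and depend on the past, I would work with the filtration $\mathcal{G}_k$ generated by the first $k$ path-simulations and write the block's partial sums as $\sum_{k}\xi_k^{(l)}$ where $\xi_k^{(l)}:=\bigl(P(X^{n_l,k}_{T,\theta_l^{k-1}})-P(X^{n_{l-1},k}_{T,\theta_l^{k-1}})\bigr)\mathcal{J}^{-}(W^{l,k}_T,\theta_l^{k-1})-\mathbb{E}[\,\cdot\,\mid\mathcal{G}_{k-1}]$. By the Girsanov identity \eqref{oneeq:16} the conditional mean of the pre-centering quantity is $\mathbb{E}[P(X^{n_l}_T)-P(X^{n_{l-1}}_T)]$, which is deterministic, so the centered $\xi_k^{(l)}$ form a martingale-difference sequence.

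Next I would establish an $\mathbf{L}^p$ bound, uniform in $k$, on these increments: using Hölder together with the boundedness of exponential moments of $|W_T|$ (hence of $\mathcal{J}^{-}(W_T,\theta)$ uniformly for $\theta\in\Theta$, which is compact) and the $\mathbf{L}^p$-strong-error assumption \eqref{oneeq:31} in the form $\lVert P(X^{n_l}_T)-P(X^{n_{l-1}}_T)\rVert_p\le C\,h_l^{\beta/2}$, exactly mirroring the estimate \eqref{oneeq:25} in Lemma~\ref{onelemma:2}. This gives $\sup_k\mathbb{E}[|\xi_k^{(l)}|^p]<\infty$. Then the strong law of large numbers for $\mathbf{L}^p$-bounded martingale differences (e.g. via Chow's theorem, or Burkholder's inequality plus Kronecker's lemma, using $\sum_k k^{-p/2\wedge1}<\infty$ when $p\ge2$) yields $\frac1{N_l}\sum_{k=1}^{N_l}\xi_k^{(l)}\xrightarrow{a.s.}0$, hence $\frac{\widetilde{W}_l}{N_l}\sum_{k=1}^{N_l}(\cdots)\mathcal{J}^{-}\xrightarrow{a.s.}\widetilde{W}_l\,\mathbb{E}[P(X^{n_l}_T)-P(X^{n_{l-1}}_T)]$. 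Summing over $l=1,\dots,L(\epsilon)$ and using the telescoping property of the weights $\widetilde{W}_l$ (the defining Vandermonde relation $V\mathbf{w}=e_1$, which makes the weighted sum of the increments reproduce the extrapolated quantity up to the weak-error remainder) together with the weak-error assumption \eqref{oneeq:2} forcing $\mathbb{E}[P(X^{n_L}_T)]\to\mathbf{J}_0$ as $\epsilon\to0$, I would conclude $\mathbf{J}_{\pi}^{N,\theta}\xrightarrow{a.s.}\mathbf{J}_0$.

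The role of the hypothesis $\sum_k\epsilon_k^p<\infty$ deserves a remark: I read it as controlling the rate at which $\theta_l^{k}$ approaches $\theta_l^{*}$ (the $\epsilon_k$ being the stochastic-approximation errors), so that the conditional variances of $\xi_k^{(l)}$ are summable after normalization and a Borel--Cantelli argument removes the residual fluctuation caused by the adaptivity; this is where I would invoke Theorem~\ref{onetheorem:3} to guarantee $\theta_l^k\to\theta_l^*$ in the first place. The main obstacle I anticipate is precisely the interaction between the adaptive (path-dependent) parameters $\theta_l^{k-1}$ and the law of large numbers: one cannot simply cite the i.i.d. SLLN, and some care is needed to show that replacing $\theta_l^{k-1}$ by its limit $\theta_l^{*}$ introduces only an a.s.-negligible error — this requires either a continuity/uniform-integrability argument in $\theta$ over the compact set $\Theta$ (so that $\mathbb{E}[|\xi_k^{(l)}|\mid\mathcal{G}_{k-1}]$ is uniformly controlled regardless of which $\theta_l^{k-1}$ is realized) or a direct martingale treatment as sketched above. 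The remaining steps — the $\mathbf{L}^p$ moment bounds, the telescoping, and the weak-error limit — are routine given Lemmas~\ref{onelemma:1} and~\ref{onelemma:2}.
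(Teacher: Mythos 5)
There is a genuine gap, and it lies in how you read the hypothesis $\sum_{k\geq1}\epsilon_k^p<\infty$. In the paper the $\epsilon_k$ are \emph{not} stochastic-approximation errors for $\theta_l^k$; they are the prescribed root-mean-squared tolerances that index a whole sequence of AISML2R estimators, each built with its own parameters $L(\epsilon_k)$, $N(\epsilon_k)$, $\mu(\epsilon_k)$ and its own fresh samples, and the claimed almost sure convergence is along this sequence as $k\to\infty$ (equivalently $\epsilon_k\to0$). Your plan — fix a level $l$, view the $l$-th block as a martingale in the sample index, and apply an $\mathbf{L}^p$ martingale SLLN as $N_l\to\infty$ — proves a statement about a single estimator with growing sample size at fixed level structure. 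It does not touch the actual statement, because across the sequence indexed by $\epsilon_k$ the depth $L(\epsilon_k)$ grows, the weights $\widetilde{W}_l$ change, and the realizations at different $k$ are unrelated, so a ``within-estimator'' law of large numbers gives at best convergence in $L^2$ or in probability of each estimator to its mean, never almost sure convergence of the sequence of estimators. The summability of $\epsilon_k^p$ is precisely what closes this gap, and your proposal never uses it in a way that could do so.

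The paper's mechanism, which your sketch is missing, is a quantitative moment bound uniform in the adaptive parameters: after the decomposition $\mathbf{J}_{\pi}^{N,\theta}-\mathbf{J}_0=\widetilde{\mathbf{J}}^{1}_{\theta,\pi}+\widetilde{\mathbf{J}}^{2}_{\theta,\pi}+\mathbb{E}[P(X_T^{n_L})]-\mathbf{J}_0$, one proves (Lemma \ref{onelemma:3} plus Rosenthal's inequality for the martingale $\sum_k\widetilde{Y}^k_{l,\theta_l^{k-1}}$, then summing over levels with the bounds on $\widetilde{W}_l/\mu_l$ and the relation between $N(\epsilon)$ and $\epsilon$ from \cite{giorgi2017limit}) that $\mathbb{E}\bigl[\lvert\widetilde{\mathbf{J}}^{2,\epsilon}_{\theta,\pi}\rvert^{p}\bigr]\leq K_2\,\epsilon^{p}$; then $\sum_k\epsilon_k^p<\infty$ and Beppo–Levi (equivalently Markov plus Borel–Cantelli) give $\widetilde{\mathbf{J}}^{2,\epsilon_k}_{\theta,\pi}\to0$ a.s., the level-one term is handled separately (Rosenthal with $p=2$, compactness of $\Theta$, Theorem \ref{onetheorem:3} and Ces\`aro), and the bias term vanishes by the ML2R weak-error analysis as $\epsilon_k\to0$. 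Your intermediate ingredients — the martingale structure with respect to the filtration of past simulations, the Girsanov identity making the conditional means deterministic, the H\"older estimate with compact $\Theta$ and the $\mathbf{L}^p$ strong-error rate — are all correct and coincide with the paper's Lemma \ref{onelemma:2}/Lemma \ref{onelemma:3} machinery (and, incidentally, no replacement of $\theta_l^{k-1}$ by $\theta_l^*$ is needed for the levels $l\geq2$: uniform boundedness over $\Theta$ suffices). What is wrong is the concluding step: without a bound of order $\epsilon^p$ tied to the tolerance and the summability of $\epsilon_k^p$, the almost sure convergence of the AISML2R family cannot be reached by your route.
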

We will assume the following notation,
\[\widetilde{\mathbf{J}}_{\theta,\pi}^{1} \coloneqq \frac{1}{N_{1}}\sum\limits_{k=1}^{N_{1}}P(X_{T,\theta^{k-1}_{1}}^{n_{1},k})\mathcal{J}^{-}(W_T^{1,k},\theta^{1,k-1}_1)-\mathbb{E}\left[P(X_{T}^{n_1})\right]~\text{and}~\widetilde{\mathbf{J}}_{\theta,\pi}^{2}\coloneqq\sum\limits_{l=2}^L\frac{\widetilde{W}_l}{N_l}\sum\limits_{k=1}^{N_l}\widetilde{Y}_{l,\theta^{k-1}_l}^{k},\]
where we set,
\begin{equation}
\label{oneeq:33}
\widetilde{Y}_{l,\theta_l}=\left(P(X^{n_{l}}_{T,\theta_{l}})- P(X^{n_{l-1}}_{T,\theta_l})\right)\mathcal{J}^{-}(W_l,\theta_l)-\mathbb{E}\left[P(X^{n_{l}}_{T})-P(X^{n_{l-1}}_{T})\right],
\end{equation}
and 
\begin{equation}\
\label{oneeq:34}
\widetilde{Y}_{1,\theta_1}=\left(P(X^{n_{1}}_{T,\theta_1})\right)\mathcal{J}^{-}(W_1,\theta_{1})-\mathbb{E}\left[P(X^{n_{1}}_{T})\right].
\end{equation}
Therefore, we have,
\[\mathbf{J}_{\pi}^{N,\theta}-\mathbf{J}_{0}=\widetilde{\mathbf{J}}_{\theta,\pi}^{1} +\widetilde{\mathbf{J}}_{\theta,\pi}^{2}+\mathbb{E}[P(X_{T}^{L})]-\mathbf{J}_{0}.\]
A thorough analysis carried out in Section 4.2 of \cite{giorgi2017limit} shows that the last term in the equation converges to zero as $\epsilon \rightarrow \infty$. We start our discussion by proving the following lemma.

\begin{lemma}
\label{onelemma:3}
Let $p\geq2$ and $\lvert \theta \rvert \leq c$. Then there exist a positive constant $K_{1}(M,\beta,p,c)$ such that,
\[\lVert \widetilde{Y}_{l,\theta} \rVert_{p}^{p} \leq K_{1}(M,\beta,p,c)M^{-\beta p(l-1)/2},~l=2,\dots,L.\]
\end{lemma}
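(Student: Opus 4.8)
The plan is to bound $\lVert \widetilde{Y}_{l,\theta}\rVert_p$ by splitting $\widetilde{Y}_{l,\theta}$ into the ``random'' fluctuation term $\bigl(P(X^{n_l}_{T,\theta})-P(X^{n_{l-1}}_{T,\theta})\bigr)\mathcal{J}^{-}(W_l,\theta)$ and the deterministic centering constant $\mathbb{E}\bigl[P(X^{n_l}_{T})-P(X^{n_{l-1}}_{T})\bigr]$, then apply Minkowski's inequality in $L^p$. For the first term, the key move is to use Girsanov's theorem exactly as in \eqref{oneeq:16} (or rather its $L^p$ analogue): one removes the change of measure so that $\mathbb{E}\bigl[\lvert(P(X^{n_l}_{T,\theta})-P(X^{n_{l-1}}_{T,\theta}))\mathcal{J}^{-}(W_l,\theta)\rvert^p\bigr]$ is rewritten as an expectation under $\mathbb{P}$ of $\lvert P(X^{n_l}_{T})-P(X^{n_{l-1}}_{T})\rvert^p$ times a power of the likelihood ratio. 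Equivalently, apply H\"older with conjugate exponents to separate $\lVert P(X^{n_l}_{T,\theta})-P(X^{n_{l-1}}_{T,\theta})\rVert_{pr}$ from an exponential-moment factor $\lVert \mathcal{J}^{-}(W_l,\theta)\rVert_{pr'}$; since $\lvert\theta\rvert\le c$, the latter is bounded uniformly by a constant depending only on $c$, $T$, $p$ (Gaussian exponential moments), which I would fold into $K_1(M,\beta,p,c)$.

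Next I would control $\lVert P(X^{n_l}_{T})-P(X^{n_{l-1}}_{T})\rVert$ in the appropriate $L^{p}$ (or $L^{pr}$) norm using the $\mathbf{L}^{p}$-strong error assumption \eqref{oneeq:24}/\eqref{oneeq:31}. Writing $P(X^{n_l}_{T})-P(X^{n_{l-1}}_{T}) = \bigl(P(X^{n_l}_{T})-P(X_T)\bigr) - \bigl(P(X^{n_{l-1}}_{T})-P(X_T)\bigr)$ and applying Minkowski, each piece is bounded by $(V_1^{(p)})^{1/p}h_{l}^{\beta/2}$ and $(V_1^{(p)})^{1/p}h_{l-1}^{\beta/2}$ respectively, with $h_l = \mathbf{h} M^{-(l-1)}$ and $h_{l-1}=\mathbf{h}M^{-(l-2)} = M h_l$. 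Collecting these gives $\lVert P(X^{n_l}_{T})-P(X^{n_{l-1}}_{T})\rVert_{p} \le (V_1^{(p)})^{1/p}\bigl(1+M^{\beta/2}\bigr)(\mathbf{h}M^{-(l-1)})^{\beta/2}$, i.e. exactly the $M^{-\beta(l-1)/2}$ decay we want (this is the same computation underlying \eqref{oneeq:25}--\eqref{oneeq:26}). Raising to the $p$-th power yields the $M^{-\beta p(l-1)/2}$ factor.

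For the centering constant, by Jensen $\bigl\lvert\mathbb{E}[P(X^{n_l}_{T})-P(X^{n_{l-1}}_{T})]\bigr\rvert \le \mathbb{E}\bigl[\lvert P(X^{n_l}_{T})-P(X^{n_{l-1}}_{T})\rvert\bigr] \le \lVert P(X^{n_l}_{T})-P(X^{n_{l-1}}_{T})\rVert_p$, so it obeys the same $M^{-\beta(l-1)/2}$ bound and contributes only to the constant. Combining the two contributions via Minkowski and absorbing all the $M$-, $\mathbf{h}$-, $T$-, $c$- and $p$-dependent factors (including $a_\infty\widetilde{B}_\infty$-type bounds on weights if needed, though the statement is about $\widetilde Y_{l,\theta}$ itself and not weighted) into a single constant $K_1(M,\beta,p,c)$ finishes the argument. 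The main obstacle is the Girsanov/H\"older step: one must be careful that applying H\"older at exponent $p$ forces the strong-error norm to be taken at the inflated exponent $pr$, so the hypothesis must really be the $\mathbf{L}^{pr}$-strong error — in practice one invokes \eqref{oneeq:24} at a large enough exponent (as was done in Lemma~\ref{onelemma:2} where $\lVert Z_l\rVert_{2p}$ appears), and one checks the exponential moment $\mathbb{E}[e^{q\langle\theta,W_T\rangle}]=e^{q^2\lvert\theta\rvert^2 T/2}\le e^{q^2 c^2 T/2}$ is finite and uniform over $\lvert\theta\rvert\le c$; everything else is routine bookkeeping.
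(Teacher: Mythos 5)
Your proposal follows essentially the same route as the paper: Minkowski to separate the fluctuation term from the centering constant, Girsanov/H\"older to strip off $\mathcal{J}^{-}(W_l,\theta)$ at the cost of an exponential-moment factor bounded uniformly on $\lvert\theta\rvert\le c$, the $\mathbf{L}^{p}$-strong error assumption (via the telescoping around $P(X_T)$) to produce the $(1+M^{\beta/2})^{p}\mathbf{h}^{\beta p/2}M^{-\beta p(l-1)/2}$ decay, and Jensen for the expectation term. Your caveat about the inflated exponent is exactly right and is reflected in the paper's own proof, where the H\"older step forces the norm $\lVert P(X_T^{n_l})-P(X_T^{n_{l-1}})\rVert_{p^{2}}$ and the constant $K_1$ involves $V_1^{(p^{2})}$.
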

\begin{proof}
By Minkowski's Inequality, we have,
\begin{equation}
\label{oneeq:35}
\begin{aligned}
\left(\mathbb{E}\left[|\widetilde{Y}_{l,\theta})|^{p}\right]\right)^{1/{p}}&\leq \biggl\lVert \left(P(X^{n_{l}}_{T,\theta})-P(X^{n_{l-1}}_{T,\theta})\right)\mathcal{J}^{-}(W^{l}_{T},\theta)\biggr\rVert_{p}+ \biggl\lvert\mathbb{E}\left[P(X^{n_{l}}_{T})-P(X^{n_{l-1}}_{T})\right]\biggr\rvert,\\
&\leq \underbrace{\biggl\lVert \left(P(X^{n_{l}}_{T,\theta})- P(X^{n_{l-1}}_{T,\theta})\right)\mathcal{J}^{-}(W^{l}_{T},\theta)\biggr\rVert_{p}}_{\text{I}}+\underbrace{ \biggl\lVert\left[P(X^{n_{l}}_{T})-P(X^{n_{l-1}}_{T})\right]\biggr\rVert_{p}}_{\text{II}}.
\end{aligned}
\end{equation}
In order to bound $(\textit{I})$ of \eqref{oneeq:35}, we apply the Holder's Inequality,
\begin{equation}
\label{oneeq:36}
\begin{aligned}
\biggl\lVert \left(P(X^{n_{l}}_{T,\theta})-P(X^{n_{l-1}}_{T,\theta})\right)\mathcal{J}^{-}(W^{l}_{T},\theta)\biggr\rVert_{p}^{p} &= \mathbb{E}\left[\bigg\lvert \left(P(X_{T,\theta}^{n_{l}})- P(X_{T,\theta}^{n_{l-1}})\right)\mathcal{J}^{-}(W^{l}_{T},\theta)\bigg\rvert^{p}\right],\\ &=\mathbb{E}\left[\bigg\lvert \left(P(X_{T}^{n_{l}})-P(X_{T}^{n_{l-1}})\right)\bigg \rvert^{p}\left(\mathcal{J}^+(W^{l}_{T},\theta)\right)^{p-1}\right],\\
&\leq \left(\mathbb{E}\left[\bigg\lvert \left(P(X_T^{n_l})-P(X_T^{n_{l-1}})\right)\bigg \rvert^{p^2}\right]\right)^{\frac{1}{p}}\bigg\lVert
\left(\mathcal{J}^+(W^{l}_{T},\theta)\right)^{p-1}\bigg\rVert_{\frac{p}{p-1}},\\
&\leq e^{\frac{(p^{2}-1)}{2}c^{2}T}\bigg\lVert P(X_{T}^{n_{l}})-P(X_{T}^{n_{l-1}})\bigg \rVert_{p^{2}}^{p}.
\end{aligned} 
\end{equation}
Therefore, from above analysis, we have,
\begin{equation}
\label{oneeq:37}
\biggl\lVert \left(P(X^{n_{l}}_{T,\theta})-P(X^{n_{l-1}}_{T,\theta})\right)\mathcal{J}^{-}(W^{l}_{T},\theta)\biggr\rVert_{p}\leq e^{\frac{(p^{2}-1)}{2p}c^{2}T}\bigg\lVert P(X_{T}^{n_l})-P(X_{T}^{n_{l-1}})\bigg \rVert_{p^{2}}.    
\end{equation}
Further for $(II)$ of \eqref{oneeq:35}, the assumption \eqref{oneeq:31} yields,
\begin{equation}
\label{oneeq:38}
\biggl\lVert\left[P(X^{n_{l}}_{T})-P(X^{n_{l-1}}_{T})\right]\biggr\rVert_{p}^{p} \leq V_{1}^{(p)}(1+ M^{\beta/2})^{p}\mathbf{h}^{\beta p/2}M^{-\beta p(l-1)/2}.
\end{equation}
Now from \eqref{oneeq:35} and \eqref{oneeq:37}, we get,
\begin{equation}
\label{oneeq:39}
\left(\mathbb{E}\left[\lvert \widetilde{Y}_{l,\theta})\rvert^{p}\right]\right)^{1/{p}} \leq \left(1+ e^{\frac{(p^2-1)}{2p}c^2T}\right)\bigg\lVert P(X_{T}^{n_{l}})-P(X_{T}^{n_{l-1}})\bigg \rVert_{p^{2}}.
\end{equation}
Combining above inequality and \eqref{oneeq:38} yields,
\begin{equation}
\label{oneeq:40}
\lVert \widetilde{Y}_{l,\theta} \rVert_{p}^{p} \leq K_{1}(M,\beta,p,c)M^{-\beta p(l-1)/2},
\end{equation}
where,
\[K_{1}(M,\beta,p,c)=\left(1+e^{\frac{(p^2-1)}{2p}c^2T}\right)\left(V_{1}^{(p^{2})}\right)^{1/p}(1+M^{\beta/2})^{p}\mathbf{h}^{\beta p/2}.\]
\end{proof}

\begin{proposition}
\label{oneprop:2}
Let $p \geq 2$ and $\theta_l^k \in \Theta $ for $l=2,\dots,L$ and $k \in \mathbb{N}^{*}$. Then there exists a constant $K_{2}(M,\beta,p,c)$, such that,
\begin{equation}
\label{oneeq:41}
\mathbb{E}\left[\lvert \widetilde{\mathbf{J}}_{\theta,\pi}^{2,\epsilon} \rvert^{p} \right] \leq K_{2}(M,\beta,p,c)\epsilon^{p}~\text{for some constant}~c >0. 
\end{equation}
\end{proposition}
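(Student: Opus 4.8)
The plan is to bound $\mathbb{E}[\lvert \widetilde{\mathbf{J}}_{\theta,\pi}^{2,\epsilon}\rvert^p]$ by recognizing $\widetilde{\mathbf{J}}_{\theta,\pi}^{2}=\sum_{l=2}^L \frac{\widetilde{W}_l}{N_l}\sum_{k=1}^{N_l}\widetilde{Y}_{l,\theta_l^{k-1}}^{k}$ as a sum over levels of normalized sums of martingale-like increments, and to control each piece using Lemma \ref{onelemma:3}, the uniform bound on the weights from Lemma \ref{onelemma:1}(b), and the asymptotic size of the optimal parameters $N_l$, $\mu_l$, $L(\epsilon)$ recorded in Table \ref{T:Optimal Parameters}. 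First I would apply Minkowski's inequality in $L^p$ across the $L-1$ levels to get $\lVert \widetilde{\mathbf{J}}_{\theta,\pi}^{2}\rVert_p \le \sum_{l=2}^{L}\lvert\widetilde{W}_l\rvert\, \lVert \frac{1}{N_l}\sum_{k=1}^{N_l}\widetilde{Y}_{l,\theta_l^{k-1}}^{k}\rVert_p$. For each fixed $l$, the inner average is a sum of (conditionally) centered terms — note $\mathbb{E}[\widetilde{Y}_{l,\theta_l^{k-1}}^{k}\mid \mathcal{F}_{k-1}]=0$ because $\widetilde{Y}_{l,\theta}$ is defined with its own mean subtracted and the importance-sampling weight is mean-one under the Girsanov change — so I would invoke the Marcinkiewicz–Zygmund / Burkholder inequality for martingale increments to obtain $\lVert \frac{1}{N_l}\sum_{k=1}^{N_l}\widetilde{Y}_{l,\theta_l^{k-1}}^{k}\rVert_p^p \le C_p N_l^{-p/2}\,\sup_{\lvert\theta\rvert\le c}\lVert \widetilde{Y}_{l,\theta}\rVert_p^p$, where $c$ is the radius of the compact set $\Theta$ (which contains every $\theta_l^{k-1}$).

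Next I would substitute the bound $\lVert \widetilde{Y}_{l,\theta}\rVert_p^p \le K_1(M,\beta,p,c) M^{-\beta p(l-1)/2}$ from Lemma \ref{onelemma:3}, giving
\[
\lVert \widetilde{\mathbf{J}}_{\theta,\pi}^{2}\rVert_p \le C_p^{1/p} K_1^{1/p} \sum_{l=2}^{L}\lvert\widetilde{W}_l\rvert\, N_l^{-1/2} M^{-\beta(l-1)/2}.
\]
Now I would plug in $N_l = N\mu_l$ with $N$ and $\mu_l$ from Table \ref{T:Optimal Parameters}; using $\mu_l = q^*\lambda h^{\beta/2}\underline{C}_{M,\beta}\lvert\widetilde{W}_l\rvert M^{-\frac{1-\beta}{2}(l-1)}$ for $l\ge 2$, the summand becomes a constant multiple of $N^{-1/2}\lvert\widetilde{W}_l\rvert^{1/2} M^{\frac{1-\beta}{4}(l-1)} M^{-\beta(l-1)/2} = N^{-1/2}\lvert\widetilde{W}_l\rvert^{1/2} M^{-\frac{1+3\beta}{4}(l-1)}$, up to adjusting the exponent bookkeeping; the key point is that the resulting exponent of $M^{(l-1)}$ is strictly negative for $\beta>0$, so by Lemma \ref{onelemma:1}(b) (boundedness of $\lvert\widetilde{W}_l\rvert$) and a geometric-series estimate (or Lemma \ref{onelemma:1}(d) with $\gamma<0$) the sum over $l$ converges to a finite constant $K(M,\beta,p)$ uniformly in $L$. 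This leaves $\lVert \widetilde{\mathbf{J}}_{\theta,\pi}^{2}\rVert_p \le K' N^{-1/2}$, and finally, from the expression for $N(\epsilon)$ in Table \ref{T:Optimal Parameters} one has $N^{-1/2}$ of order $\epsilon$ (since $N(\epsilon) \asymp C/\epsil^2$ up to the $1+\frac{1}{2\alpha L}$ and bracket factors, all of which are bounded), yielding $\mathbb{E}[\lvert\widetilde{\mathbf{J}}_{\theta,\pi}^{2,\epsilon}\rvert^p] \le K_2(M,\beta,p,c)\epsilon^p$ as claimed.

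The main obstacle I anticipate is twofold. First, justifying the conditional-centering/martingale structure carefully: the $\theta_l^{k-1}$ are themselves random (outputs of the Robbins–Monro recursion run on the same sample stream), so one must set up the right filtration $(\mathcal{F}_{k})$ for which $\theta_l^{k-1}$ is measurable and $(X_T^{n_l,k}, W_T^{l,k})$ is independent of $\mathcal{F}_{k-1}$, then check that $\mathbb{E}[\widetilde{Y}_{l,\theta_l^{k-1}}^k \mid \mathcal{F}_{k-1}]=0$ genuinely holds — this relies on $\widetilde{Y}_{l,\theta}$ having mean zero for every fixed deterministic $\theta$, which is exactly how \eqref{oneeq:33} is constructed, so the argument goes through but needs to be stated. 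Second, the precise exponent accounting when combining $\lvert\widetilde{W}_l\rvert$, $N_l^{-1/2}$ and $M^{-\beta(l-1)/2}$ must be done for both regimes $\beta>1$ and $\beta\in(0,1]$ (with $\widetilde{W}_l$ bounded by $a_\infty\widetilde{B}_\infty$ in either case), verifying in each that the geometric rate is summable uniformly in $L(\epsilon)$; the case $\beta\in(0,1]$ is slightly more delicate because $L(\epsilon)\to\infty$, but Lemma \ref{onelemma:1}(d) is designed precisely to handle such level-indexed sums. The $l=1$ term, if one were to include it, is handled identically (and more easily) using $\widetilde{Y}_{1,\theta_1}$ and $N_1 = N\mu_1 \asymp N$.
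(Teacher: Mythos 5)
Your overall architecture matches the paper's: view $\sum_{k}\widetilde{Y}^{k}_{l,\theta_l^{k-1}}$ as a martingale for the filtration $\mathcal{F}_{T,k}$ (using that $\widetilde{Y}_{l,\theta}$ is centered for each fixed $\theta$ and that $\theta_l^{k-1}$ is $\mathcal{F}_{T,k-1}$-measurable), apply a Burkholder/Rosenthal-type inequality within each level, and invoke Lemma \ref{onelemma:3} uniformly over the compact set $\Theta$; that part is sound. The genuine gap is in the level-aggregation bookkeeping. With the allocation actually used in the paper's proof, $\lvert\widetilde{W}_l\rvert/\mu_l\leq C\,M^{\frac{(\beta+1)(l-1)}{2}}$, your summand is of order $\lvert\widetilde{W}_l\rvert^{1/2}N^{-1/2}M^{\frac{1-\beta}{4}(l-1)}$, and even taking Table \ref{T:Optimal Parameters} literally the exponent works out to $\frac{1-3\beta}{4}(l-1)$, not $-\frac{1+3\beta}{4}(l-1)$; in neither reading is it negative for all $\beta>0$. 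Hence the sum over $l$ is \emph{not} bounded uniformly in $L$ when $\beta\leq1$ (respectively $\beta\leq 1/3$): it grows like $L(\epsilon)$ or like a positive power of $M^{L(\epsilon)}$, with $L(\epsilon)\to\infty$. The companion claim that $N(\epsilon)\asymp\epsilon^{-2}$ ``up to bounded factors'' is likewise false in that regime, since the bracket in $N(\epsilon)$ contains $\sum_{l=2}^{L}\lvert\widetilde{W}_l\rvert M^{\frac{1-\beta}{2}(l-1)}$, which diverges with $L(\epsilon)$ for $\beta<1$ by Lemma \ref{onelemma:1}(d) (what you actually need, and what does hold, is only the lower bound $N\gtrsim\epsilon^{-2}$).

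This is exactly where the paper's proof does the work your outline skips: after reducing, via a second Rosenthal inequality exploiting independence of the levels (an $\ell^2$ aggregation, rather than your Minkowski/$\ell^1$ aggregation), to $\bigl(\sum_{l=2}^{L}\lvert\widetilde{W}_l\rvert^{2}N_l^{-1}M^{-\beta(l-1)}\bigr)^{p/2}$, it invokes the sharper estimates \eqref{oneeq:52}--\eqref{oneeq:53} (Lemma 4.5 of \cite{giorgi2017limit}), namely $1/N\leq(2/C_\beta)\epsilon^{2}$ times $1$, $L^{-1}$ or $M^{-\frac{1-\beta}{2}L}$ according to $\beta>1$, $\beta=1$, $\beta<1$, and these $L$-dependent decay factors exactly offset the growth of $\sum_{l\leq L}M^{\frac{1-\beta}{2}(l-1)}$. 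Your route can be completed in the same way — carry the $L$-dependence of the level sum explicitly and pair it with \eqref{oneeq:52} case by case — but the argument as stated (``the geometric sum converges for all $\beta>0$, then $N^{-1/2}\lesssim\epsilon$ finishes'') does not close for small $\beta$, which is precisely the regime where ML2R is of interest.
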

\begin{proof}
We start our discussion by observing that as $\theta_{l}^{k} \in \Theta$, there exists a positive constant $c$ such that $\abs{\theta_{l}^{k}} \leq c,~\forall k \in \mathbb{N}$. Now, we define the following filtration $\left(\mathcal{F}_{T,k}\right)_{k\geq1}$, where $\mathcal{F}_{T,k} \coloneqq \sigma(W_{t,j},~j\leq k, t \leq T)$. With this filtration, one can readily observe that $\displaystyle{\sum\limits_{k=1}^{j}\widetilde{Y}_{l,\theta_l^{k-1}}^{k}}$ is a martingale with respect to $\mathcal{F}_{T,j}$. Now consider the following definition,
\begin{equation}
\label{oneeq:42}
s_{l} \coloneqq \sum\limits_{k=1}^{N_{l}} \widetilde{Y}_{l,\theta_{l}^{k-1}}^{k},~\text{for}~l=2,\dots,L.
\end{equation}
Since $\displaystyle{\sum\limits_{k=1}^{j}\widetilde{Y}_{l,\theta_{l}^{k-1}}^{k}}$ is $\mathcal{F}_{T,j}$ martingale, therefore by Rosenthal's inequality \cite{hall2014martingale}, we have,
\begin{equation}
\label{oneeq:43}
\norm{s_{l}}_{p}^{p} \leq C_{p} \left\{\mathbb{E}\left[\sum\limits_{k=1}^{N_{l}} \mathbb{E}[(\widetilde{Y}_{l,\theta_{l}^{k-1}})^{2}|\mathcal{F}_{T,k-1}] \right]^{p/2}+ 
\sum\limits_{k=1}^{N_l}\mathbb{E}\left[\abs{\widetilde{Y}_{l,\theta_{l}^{k-1}}}^{p}\right]\right\}.
\end{equation}
Now, from the previous Lemma, one can easily conclude that,
\begin{equation}
\label{oneeq:44}
\sum\limits_{k=1}^{N_l}\mathbb{E}\left[\abs{\widetilde{Y}_{l,\theta_{l}^{k-1}}}^{p}\right]\leq N_{l}K_{1} M^{-\beta p (l-1)/2}.   
\end{equation}
As for the first term, we use Theorem A.8 of \cite{hall2014martingale}, to obtain,
\begin{equation}
\label{oneeq:45}
\begin{aligned}
\mathbb{E}\left[\sum\limits_{k=1}^{N_{l}} \mathbb{E}\left[(\widetilde{Y}_{l,\theta_{l}^{k-1}})^{2}| \mathcal{F}_{T,k-1}\right]\right]^{p/2}&\leq  A_{p} \mathbb{E}\left[\sum\limits_{k=1}^{N_{l}} (\widetilde{Y}_{l,\theta_{l}^{k-1}})^{2}\right]^{p/2}
=A_{p}\bnorm{\sum\limits_{k=1}^{N_l} (\widetilde{Y}_{l,\theta_{l}^{k-1}})^2}_{p/2}^{p/2}\\
&\leq A_{p} \left(\sum\limits_{k=1}^{N_l}\norm{(\widetilde{Y}_{l,\theta_{l}^{k-1}})^2}_{p/2}\right)^{p/2}
\leq A_{p}(N_{l})^{p/2} K_{1}M^{-\beta p (l-1)/2},
\end{aligned}
\end{equation}
where the last inequality is the consequence of the previous Lemma. Therefore, we have for $p \geq 2$,
\begin{equation}
\label{oneeq:46}
\begin{aligned}
\norm{s_l}^{p}_{p} &\leq C_{p} \left\{A_{p}(N_{l})^{p/2} K_{1}M^{-\beta p (l-1)/2}+N_{l}K_{1} M^{-\beta p (l-1)/2} \right\}
& \leq C_{p}\left\{2A_{p}(N_{l})^{p/2} K_{1}M^{-\beta p (l-1)/2}\right\}.
\end{aligned}
\end{equation}
Now, let $K_{p}\coloneqq C_{p}A_{p}K_{1}$. Therefore, we have,
\begin{equation}
\label{oneeq:47}
\mathbb{E}[\abs{s_{l}}^{p}]\leq \left(K_{p}(N_{l})^{p/2}M^{-\beta p (l-1)/2}\right).
\end{equation}
Now, consider the following,
\begin{equation}
\label{oneeq:48}
\mathbb{E}\left[\lvert \widetilde{\mathbf{J}}_{\theta,\pi}^{2} \rvert^{p}\right]=\mathbb{E}\left[\bigg \lvert \sum\limits_{l=2}^{L}\frac{\widetilde{W}_{l}}{N_{l}}\mathbf{s}_{l}\bigg \rvert^{p} \right].
\end{equation}
As one can observe that, $(s_l)_{l\geq2}$ are independent random variable in $l$, therefore, by a version of Rosenthal's inequality \cite{hall2014martingale}, we have,
\begin{equation}
\label{oneeq:49}
\begin{aligned}
\mathbb{E}\left[\bigg \lvert \sum\limits_{l=2}^{L} \frac{\widetilde{W}_{l}}{N_{l}} s_{l}\bigg \rvert^{p} \right] &\leq \delta_{p} \left\{\left(\sum\limits_{l=2}^{L}\mathbb{E}\left[\frac{\widetilde{W}_{l}}{N_{l}}s_{l}\right]^{2}\right)^{p/2}+ \sum\limits_{l=2}^{L}\mathbb{E}\left[\babs{\frac{\widetilde{W}_{l}}{N_{l}}s_{l}}^{p}\right] \right\}\\
&\leq 2\delta_{p} \left(\sum\limits_{l=2}^{L}\mathbb{E}\babs{\frac{\widetilde{W}_{l}}{N_{l}}s_{l}}^{2}\right)^{p/2}.
\end{aligned}
\end{equation}
For $p=2$, we have,
\[\mathbb{E}\left[\abs{s_{l}}^{2}\right]\leq K_{2}(N_{l})M^{-\beta(l-1)}.\]
Therefore,
\begin{equation}
\label{oneeq:50}
\begin{aligned}
\mathbb{E}\left[\bigg \lvert \sum\limits_{l=2}^{L}\frac{\widetilde{W}_{l}}{N_{l}}s_{l}\bigg \rvert^{p} \right] &\leq 2\delta_{p}\left (\sum\limits_{l=2}^{L} \frac{\abs{\widetilde{W}_{l}}^{2}}{N_{l}}K_{2}M^{-\beta(l-1)}\right)^{p/2}. 
\end{aligned}
\end{equation}
We know that, $N_{l}=\ceil{N \mu_{l}} \geq N \mu_{l}$, and as a result we have,
\[\frac{1}{N_{l}}\leq\frac{1}{N \mu_{l}},~l=1,\dots,L.\] 
Further, owing to the expression for $\mu_{l}$, we have,
\[\frac{\abs{\widetilde{W}_{l}}}{\mu_{l}} \leq \frac{1}{\lambda \mathbf{h}^{\frac{\beta}{2}}\underline{C}_{M,\beta}q^{*}}M^{\frac{(\beta+1)(l-1)}{2}},~l=2,\dots,L.\] 
Since, $\displaystyle{\sup_{l\in(1,\dots,L), L\geq 1}\abs{\widetilde{W}_{l}}\leq a_{\infty}\widetilde{B}_{\infty}}$ \cite{giorgi2017limit}, therefore, combining everything we get,
\begin{equation}
\label{oneeq:51}
\mathbb{E}\left[\bigg \lvert \sum\limits_{l=2}^{L} \frac{\widetilde{W}_{l}}{N_{l}} s_{l}\bigg \rvert^{p} \right] \leq \widetilde{K}\left(\frac{1}{N}\sum\limits_{l=2}^{L}M^{\frac{(1-\beta)(l-1)}{2}} \right)^{p/2},
\end{equation}
where, $\displaystyle{\widetilde{K}=2\delta_{p}\left(\frac{a_{\infty}\widetilde{B}_{\infty}K_{2}}{\lambda \mathbf{h}^{\frac{\beta}{2}}\underline{C}_{M,\beta}q^{*}} \right)^{\frac{p}{2}}}$. Now as a consequence of Lemma 4.5 in \cite{giorgi2017limit}, with $\overline{\epsilon} \rightarrow 0$, we have,
\begin{equation}
\label{oneeq:52}
\forall~\epsilon \in (0,\overline{\epsilon}],~\frac{1}{N}\leq \frac{2}{C_{\beta}}\epsilon^{2} 
\begin{cases}
1,&~\text{if}~\beta >1,\\
L^{-1},&~\text{if}~\beta = 1.\\ 
M^{-\frac{1-\beta}{2}L},&~\text{if}~\beta < 1.
\end{cases}
\end{equation}
Moreover, it is easy to prove that,
\begin{equation}
\label{oneeq:53}
\sum\limits_{l=2}^{L}M^{\frac{(1-\beta)(l-1)}{2}} \leq 
\begin{cases}
\frac{1}{1-M^{\frac{1-\beta}{2}}},&~\text{if}~\beta >1,\\
L,&~\text{if}~\beta =1,\\
\frac{M^{\frac{1-\beta}{2}L}}{M^{\frac{1-\beta}{2}}-1},&~\text{if}~\beta < 1.
\end{cases}
\end{equation}
With all the preceding discussion, we have,
\begin{equation}
\label{oneeq:54}
\mathbb{E}\babs{\sum\limits_{l=2}^{L}\frac{\widetilde{W}_{l}}{N_{l}}s_{l}}^{p} \leq K_{2}\left(M,\beta,p,c\right)\epsilon^{p},
\end{equation}
where,
\begin{equation}
\label{oneeq:55}
K_{2}\left(M,\beta,p,c\right)=\widetilde{K}\left(\frac{2}{C_{\beta}}\right)^{p/2}
\begin{cases}
(1-M^{\frac{1-\beta}{2}})^{-p/2},&~\text{if}~\beta >1,\\
1,&~\text{if}~\beta = 1,\\
(M^{\frac{(1-\beta)}{2}}-1)^{-p/2}, &~\text{if}~\beta <1.
 \end{cases}
\end{equation}
\end{proof}
With above results in our hand we are ready to prove the Strong Law of Large Numbers.

\begin{proof}[Proof of Theorem \ref{onetheore:4}] 
We start our proof by proving, $\displaystyle{\mathbf{\widetilde{J}}_{\theta,\pi}^{2,\epsilon_k} \xrightarrow{a.s.} 0}$ as $k \rightarrow \infty$. Clearly, as a consequence of our assumption on $(\epsilon_k)_{k\geq1}$, we have,
\[\sum\limits_{k\geq1}\mathbb{E}\left[\abs{\mathbf{\widetilde{J}}_{\theta,\pi}^{2,\epsilon_k}}^{p}\right] < +\infty.\]
Hence, by Beppo-Levi's Theorem, $\displaystyle{\sum\limits_{k\geq1}\abs{\mathbf{\widetilde{J}}_{\theta,\pi}^{2,\epsilon_k}}^{p} < +\infty ~ a.s.}$, and as an implication we have $\displaystyle{\mathbf{\widetilde{J}}_{\theta,\pi}^{2,\epsilon_k} \xrightarrow{a.s.} 0}$ as $k \rightarrow +\infty$. We now turn our attention to proving $\displaystyle{\mathbf{\widetilde{J}}_{\theta,\pi}^{1,\epsilon_k} \xrightarrow{a.s.} 0}$ as $k \rightarrow \infty$. It is clear that as $k \rightarrow \infty$, $\epsilon_k \rightarrow 0$, which in turns implies $N_{1} \rightarrow \infty$. As one can observe that, $\displaystyle{\sum\limits_{k=1}^{j} \widetilde{Y}^{1}_{k,\theta_{1}^{k-1}}}$ is $\mathcal{F}_{T,j}$ martingale. Therefore, as an application of Rosenthal's inequality for $p=2$, we have,
\begin{equation}
\label{oneeq:56}
\begin{aligned}
\mathbb{E}\left[\abs{\mathbf{\widetilde{J}}_{\theta,\pi}^{1,\epsilon_k}}^2\right] &\leq \frac{C_2}{N_1^2} \left\{\mathbb{E}\left[\sum\limits_{k=1}^{N_{1}}\mathbb{E}\left[\left(\widetilde{Y}^{1}_{k,\theta_{1}^{k-1}}\right)^{2}\bigg \lvert \mathcal{F}_{T,k-1}\right]\right]\right\},~\text{where}~C_{2}~\text{is a constant}.\\
&=\frac{C_{2}}{N_{1}^{2}} \left\{\sum\limits_{k=1}^{N_{1}}\left(\mathbb{E}\left[P(X_{T}^{n_{1}})^{2} \mathcal{J}^+(\theta_{1}^{k-1},W_{T})\right]-\left[\mathbb{E}[P(X_{T}^{n_{1}})]\right]^{2}\right)\right\},
\end{aligned}
\end{equation}
where the last equality is the consequence of $X_{T,k}^{1}$ being independent of $\mathcal{F}_{T,k-1}$ and $\theta_{k-1}^{1}$ being $\mathcal{F}_{T,k-1}$ measurable. Further, due to to Grisanov's theorem, we introduce a couple of random variable $X_{T}^{1}$ and $W_{T}$ independent of $\displaystyle{\mathcal{F}_{T}=\cup_{k\geq1}\mathcal{F}_{T,k}}$, justifying the last equality.
Further, as $\theta_{1}^{k-1} \in \Theta$, therefore there exists a $c>0$ such that $\abs{\theta_{1}^{k-1}} \leq c$ for $k\in \mathbb{N}$. As a consequence, $\displaystyle{\sup_{k\in \mathbb{N}} \abs{P(X_{T}^{n_1})^{2} \mathcal{J}^{+}(\theta_{1}^{k-1},W_{T})} \leq P(X_{T}^{n_1})^{2} e^{c\abs{W_{T}} + \frac{c^{2}}{2}T}}$. Now for $p\geq 2$,
\[\mathbb{E}\left[P(X_{T}^{n1})^{2} e^{c\abs{W_{T}}+\frac{c^{2}}{2}T}\right]\leq \bnorm{P(X_{T}^{n_{1}})^{2}}_{p} \bnorm{e^{c\abs{W_{T}}+\frac{c^{2}}{2}T}}_{\frac{p}{p-1}} < +\infty.\]
Therefore, as consequence of Theorem \ref{onetheorem:3} and Lebesgue theorem, we obtain that,
\[\lim_{k\rightarrow\infty}\mathbb{E}[P(X_{T}^{n_{1}})^{2} \mathcal{J}^{+}(\theta_{1}^{k-1},W_{T})]= \mathbb{E}[P(X_{T}^{n_{1}})^{2} \mathcal{J}^{+}(\theta_{1}^{*},W_{T})].\]
Now as the application of Cesaro's lemma in equation \eqref{oneeq:55} one can easily conclude that, $\displaystyle{\mathbf{\widetilde{J}}_{\theta,\pi}^{1,\epsilon_k} \xrightarrow{a.s.} 0}$ as $k \rightarrow +\infty$.
\end{proof}

\section{Numerical Results}
\label{section_numerical}

In this section we will illustrate the efficacy of the algorithm introduced, through a couple of examples. In order to keep the things simple we restrict ourselves to one dimensional problems, arising in the realm of mathematical finance. More specifically, we look at the results in the case of European call and lookback call options. We start our computation by determining the optimal parameters required to perform the simulations. In order to do so, we perform a pre-simulation to approximate the value of $V_{1}$, $\lambda$ and $\text{Var}(P(X_{T}^{0}))$, necessary for the computation of the optimal parameters. For computing $V_{1}$ we refer to the formula presented in \cite{lemaire2017multilevel}, \textit{i.e.,}
\begin{equation}
\label{oneeq:57}
V_{1}=\left(1+M_{\max}^{-\beta/2}\right)^{-2}\mathbf{h}^{-\beta}\norm{P\left(X_{T}^{h}\right)- P\left(X^{h/M_{\max}}_{T}\right)}_{2}^{2}.
\end{equation}
Here we set $M_{\max}=10$. As for $\text{Var}(P(X_T^0))$, we perform a small prior simulations and empirically calculate the variance. Further, the value of $\lambda$ is calculated as $\displaystyle{\lambda=\sqrt{\frac{V_{1}}{\text{Var}(P(X_{T}^{0}))}}}$. We use the above parameters to perform the simulation of ML2R. As for performing the adaptive simulation, we calculate $V_{1}^{\theta}$, $\text{Var}(P(X_{T}^{0,\theta})\mathcal{J}^{-}(W_{T},\theta)$ and $\lambda_{\theta}$, to compute the necessary parameters. To compute these structural parameters, we perform a dummy optimization procedure based on the Robbins-Monro algorithm, to calculate the value of $\theta$. We calculate the value of $V_{1}^{\theta}$ using the same formula used for the calculation of $V_{1}$. We again run a small simulation as before, to calculate the value of $\text{Var}(P(X_{T}^{0,\theta})\mathcal{J}^{-}(W_{T},\theta)$, further calculating $\displaystyle{\lambda_{\theta}= \sqrt{\frac{V_{1}^{\theta}}{\text{Var}(P(X_{T}^{0,\theta})\mathcal{J}^{-}(W_{T},\theta)}}}$. Based on these structural parameters, and using the formulas from Table \ref{T:Optimal Parameters}, we calculate the parameters to perform the adaptive simulation. In order to perform the simulation for both the adaptive and the non-adaptive algorithm, we use the Milstein scheme, unlike in the original study carried out in \cite{kebaier2018coupling}, which uses Euler-Maruyama scheme, to discretize the underlying SDE. The choice of the scheme is due to the establishment of the Central Limit Theorem in \cite{giorgi2017limit} for $\beta>1$, which provides us with the freedom to use discretization scheme of higher order. In order to determine these structural parameters, we input the value of $\epsilon$, the desired root-mean-squared error. For, the purpose of our numerical experiments, we set $\epsilon=2^{-k}$, where $k=3,\dots,9$.

It may be noted that in order to simulate in the probability space $\left(\Omega,\{\mathcal{F}_{t}^{\theta}\}_{t \geq 0},\mathbb{P}_{\theta}\right)$ \textit{i.e.,} for adaptive simulation, the calculation of $L(\epsilon)$ and $N(\epsilon)$ are rather sub-optimal. This is because the formulas associated with these calculations do not consider the number of iterations required to estimate $\theta_{l}^{*}$, on various level of resolutions. However, while using Milstein discretization in the parametric probability space, it was observed that the values for $L(\epsilon)$ and $N(\epsilon)$, calculated using the procedure described above is sufficient to achieve an accuracy comparable to that achieved by ML2R, whereas, under the Euler discretization, twice the number of paths generated through the formula are used. As we will see, even by increasing the number of sample paths the adaptive algorithm outperforms standard ML2R, thereby achieving the desired level of accuracy. We now briefly describe the numerical scheme used to perform simulation in either probability space. Consider the general one-dimensional SDE,
\begin{equation}
\label{oneeq:58}
 dX_{t}=b(X_{t},t)dt+\sigma(X_{t},t)dW_{t}.
\end{equation}
The Milstein discretization of the above equation is given as,
\begin{equation}
\label{oneeq:59}
X_{n+1}=X_{n}+b_{n}h+\sigma_{n}\Delta W_{n}+\frac{1}{2}\sigma_{n}'\sigma_{n}\left((\Delta W_{n})^{2}-h\right).
\end{equation}
In the above equation, $h$ is the uniform time-step, $b_{n}=b(X_{n},t_{n})$, $\sigma_{n}=\sigma(X_{n},t_{n})$ and $\sigma_{n}'=\sigma'(X_{n},t_{n})$, with $t_{n}\coloneqq nh$. However, under the parametric change of measure, with $\theta$ as the parameter, we have the following SDE,
\begin{equation}
\label{oneeq:60}
dX_{t}(\theta)=b(X_{t}(\theta),t)dt+\sigma(X_{t}(\theta),t)dB_{t},    
\end{equation}
where, $B_{t}\coloneqq W_{t}+\theta t$. Therefore, we have the following discretization:
\begin{eqnarray*}
X^{\theta}_{n+1} &=& X^{\theta}_{n}+b(X^{\theta}_{n},t_{n})h+\sigma(X^{\theta}_{n},t_{n})\Delta B_{n}+ \frac{1}{2}\sigma'(X^{\theta}_{n},t_{n}) \sigma(X^{\theta}_{n},t_{n})\left((\Delta B_{n})^{2}-h\right),\\ 
&=&X^{\theta}_{n}+\left(b(X^{\theta}_{n},t_{n})+\theta \sigma(X^{\theta}_{n},t_{n})\right)h+  \sigma(X^{\theta}_{n},t_{n})\Delta W_{n},\\
&+&\frac{1}{2}\sigma'(X^{\theta}_{n},t_{n}) \sigma(X^{\theta}_{n},t_{n}) \left((\Delta W_{n} +\theta h)^{2}-h\right).
\end{eqnarray*}
As for the Euler scheme, we have,
\begin{equation}
\label{oneeq:62}
X^{\theta}_{n+1}=X^{\theta}_{n}+\left(b(X^{\theta}_{n},t_{n})+\theta \sigma(X^{\theta}_{n},t_{n})\right)h+  \sigma(X^{\theta}_{n},t_{n})\Delta W_{n}.
\end{equation}
We use the above discretization schemes, in order to simulate the SDE in the probability space $\left(\Omega,\{\mathcal{F}_{t}^{\theta}\}_{t \geq 0},\mathbb{P}_{\theta}\right)$. Further, for the purpose of our numerical experiments, we use geometric Brownian motion $(X_{t})_{t \in [0,T]}$ to determine the value of the asset at time $t \in [0,T]$. As the process is the solution to the SDE,
\[dX_{t}=X_{t}\left(r dt+\sigma dW_{t}\right),~X_{0}=x_{0}>0,\] 
therefore, we can use the Euler or the Milstein scheme described above, in order to perform the Monte-Carlo simulations. In the above process, $r$ denotes the risk-less interest rate, $\sigma$ denotes the volatility and $W=(W_{t})_{t \in [0,T]}$ is a standard Brownian motion defined on the probability space $\left(\Omega,\{\mathcal{F}_{t}\}_{t \geq 0},\mathbb{P}\right)$.

For the purpose of estimating optimal $\theta_{l}^{*}$ for each level of resolutions, we use the algorithm described in Section \ref{stochastic_algo}. In all the numerical experiments carried out below, we consider $\displaystyle{\gamma_{n}= \frac{1}{(n+1)}}$ and $\Theta \coloneqq [0,1] \subset \mathbb{R}$. For the practical purpose, we stop the stochastic approximation procedure after finite iterations. Further we use Rupert and Poliak method for stabilizing the convergence of the described algorithm, \textit{i.e.,} instead of using $\theta_{l}^{k}$, we use $\displaystyle{\widetilde{\theta_{l}^{k}}= \frac{1}{k+1}\sum\limits_{i=0}^{k}\theta_{l}^{i}}$, on level $l$. In order to compare the results generated through ML2R and AISML2R, we define the following improvement factor ($\text{if}$) as follows,
\begin{equation}
\text{if}_{k}=\frac{\text{variance}_{\text{ml2r}} \times \text{time}_{\text{ml2r}}}{\text{variance}_{\text{aisml2r}} \times \text{time}_{\text{aisml2r}}},~k=3,\dots,9.
\end{equation}

\subsection{European option}
The payoff function in the case of European call option is described as below,
\begin{equation}
\label{oneeq:63}
P(X_{T})=e^{-rT}(X_{T}-K)_{+}.
\end{equation}
For the purpose of the practical implementation, we have considered $X_{0}=100$, $r=0.06$, $\sigma=0.4$, $T=1$ and $K=80$ \cite{lemaire2017multilevel}. With these parameters, the price obtained through closed form solution is, $\mathbf{J}_{0}= 29.4987$. Also, we have chosen $M=8$ for the Milstein scheme and $M=6$ for the Euler scheme. Further, under the Milstein discretization, we stop the stochastic algorithm after $1000$ iterations, whereas, we under go only $500$ iteration under the Euler scheme.

Table \ref{tab:table_3} and Table \ref{tab:table_4} demonstrate potency of the AISML2R over ML2R, under Milstein and Euler discretization, respectively. Figure \ref{fig:fig1} and Figure \ref{fig:fig2} graphically represents the performance of both the estimators under Milstein and Euler schemes, respectively. It is quite evident from the tabulated results, as well as the graphical representation, that the effectiveness of the AISML2R over ML2R increases with the requirement of greater accuracy. The value of $\text{if}_k$ goes from $\text{if}_{3}=0.829$, for the worst case to $\text{if}_{9}=6.78$ in the best case, while simulating under the Milstein scheme. On the other hand, under the Euler scheme, the value of $\text{if}_k$ goes from $\text{if}_{3}=0.508$ in the best case to $\text{if}_{8}=2.09$ in the best case. These values suggest that AISML2R can achieve desired root-mean-squared error much faster as compared to ML2R.   

\begin{figure}[!h]
\centering
\begin{subfigure}{0.49\textwidth}
\includegraphics[width=9cm]{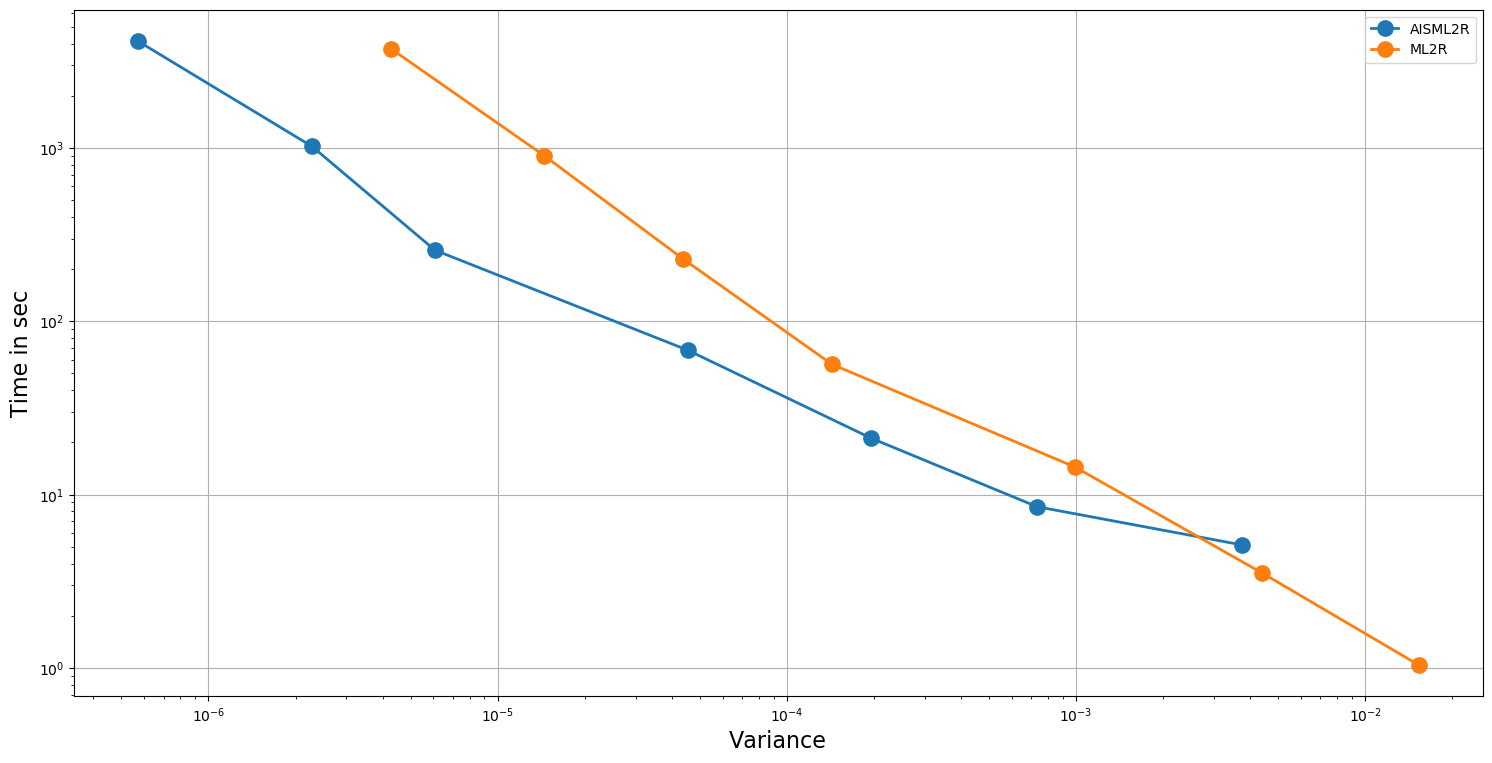}
\caption{Time (y-axis, log scale) as function of variance (x-axis, log scale)}
\end{subfigure}
\hfill
\begin{subfigure}{0.49\textwidth}
\includegraphics[width=9cm]{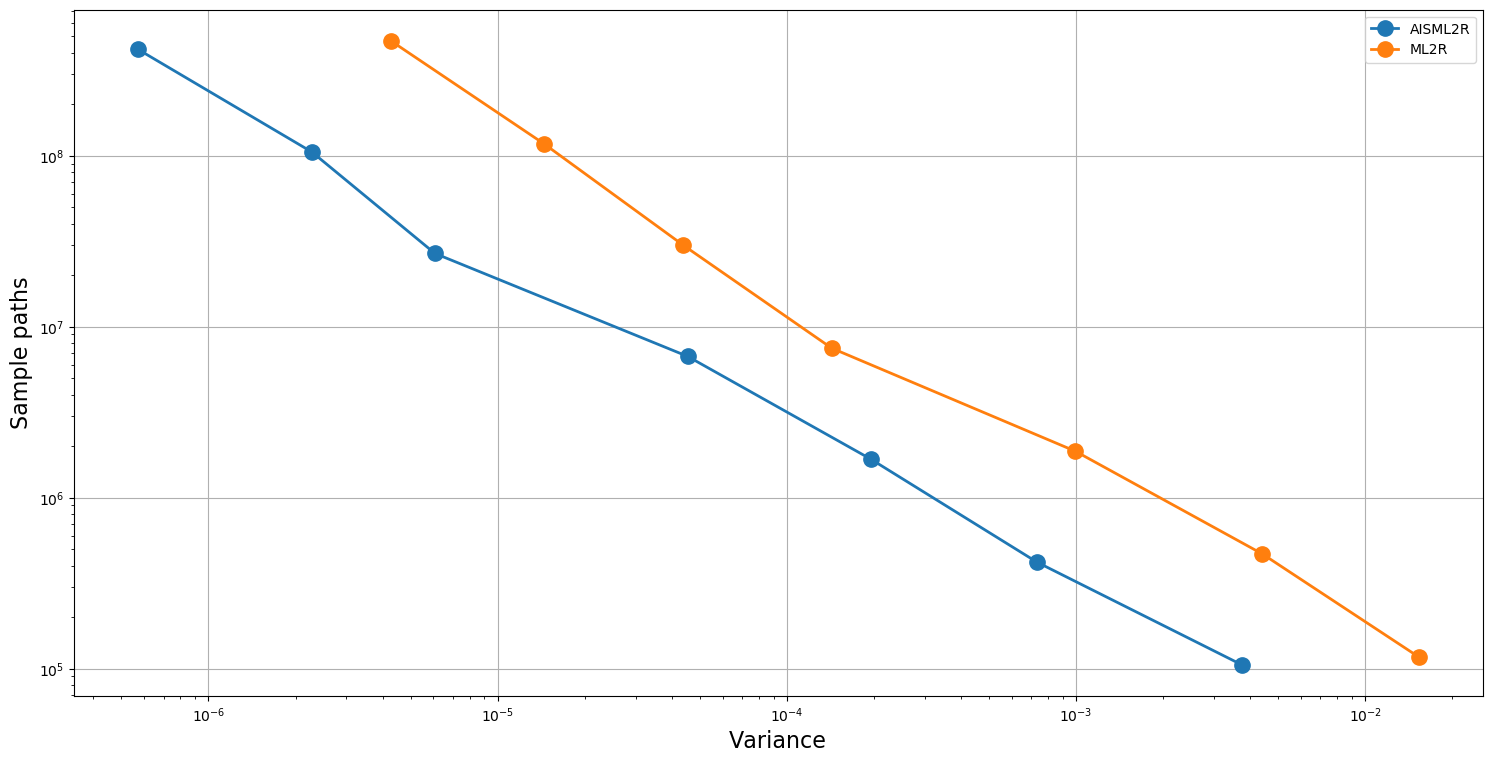}
\caption{Sample Paths (y-axis, log scale) as function of variance (x-axis, log scale)}
\end{subfigure}
\caption{European Call option using Milstein Scheme}
\label{fig:fig1}
\end{figure}

\begin{table}[!h]
\centering
\begin{tabular}{|cccccc|}
\hline
\multicolumn{1}{|c|}{}           & \multicolumn{5}{c|}{\textbf{AISML2R}}   \\ \hline
\multicolumn{1}{|c|}{\textbf{k}} & \multicolumn{1}{c|}{\textbf{N}} & \multicolumn{1}{c|}{\textbf{variance}} & \multicolumn{1}{c|}{\textbf{bias}} & \multicolumn{1}{c|}{\textbf{rmse}} & \textbf{time} \\ \hline
\multicolumn{1}{|c|}{\textbf{3}} & \multicolumn{1}{c|}{1.05E+05}   & \multicolumn{1}{c|}{3.75E-03}          & \multicolumn{1}{c|}{4.93E-02}      & \multicolumn{1}{c|}{6.18E-02}      & 5.12E+00      \\ \hline
\multicolumn{1}{|c|}{\textbf{4}} & \multicolumn{1}{c|}{4.19E+05}   & \multicolumn{1}{c|}{7.33E-04}          & \multicolumn{1}{c|}{2.21E-02}      & \multicolumn{1}{c|}{2.72E-02}      & 8.51E+00      \\ \hline
\multicolumn{1}{|c|}{\textbf{5}} & \multicolumn{1}{c|}{1.68E+06}   & \multicolumn{1}{c|}{1.95E-04}          & \multicolumn{1}{c|}{1.17E-02}      & \multicolumn{1}{c|}{1.43E-02}      & 2.12E+01      \\ \hline
\multicolumn{1}{|c|}{\textbf{6}} & \multicolumn{1}{c|}{6.70E+06}   & \multicolumn{1}{c|}{4.54E-05}          & \multicolumn{1}{c|}{5.68E-03}      & \multicolumn{1}{c|}{7.01E-03}      & 6.81E+01      \\ \hline
\multicolumn{1}{|c|}{\textbf{7}} & \multicolumn{1}{c|}{2.68E+07}   & \multicolumn{1}{c|}{6.07E-06}          & \multicolumn{1}{c|}{2.14E-03}      & \multicolumn{1}{c|}{2.48E-03}      & 2.57E+02      \\ \hline
\multicolumn{1}{|c|}{\textbf{8}} & \multicolumn{1}{c|}{1.05E+08}   & \multicolumn{1}{c|}{2.28E-06}          & \multicolumn{1}{c|}{1.14E-03}      & \multicolumn{1}{c|}{1.51E-03}      & 1.02E+03      \\ \hline
\multicolumn{1}{|c|}{\textbf{9}} & \multicolumn{1}{c|}{4.20E+08}   & \multicolumn{1}{c|}{5.69E-07}          & \multicolumn{1}{c|}{5.99E-04}      & \multicolumn{1}{c|}{7.55E-04}      & 4.14E+03      \\ \hline
\multicolumn{6}{|c|}{\textbf{ML2R}}        \\ \hline
\multicolumn{1}{|c|}{\textbf{k}}  & \multicolumn{1}{c|}{\textbf{N}} & \multicolumn{1}{c|}{\textbf{variance}} & \multicolumn{1}{c|}{\textbf{bias}} & \multicolumn{1}{c|}{\textbf{rmse}} & \textbf{time} \\ \hline
\multicolumn{1}{|c|}{\textbf{3}} & \multicolumn{1}{c|}{1.17E+05}   & \multicolumn{1}{c|}{1.53E-02}          & \multicolumn{1}{c|}{1.07E-01}      & \multicolumn{1}{c|}{1.24E-01}      & 1.04E+00      \\ \hline
\multicolumn{1}{|c|}{\textbf{4}} & \multicolumn{1}{c|}{4.68E+05}   & \multicolumn{1}{c|}{4.41E-03}          & \multicolumn{1}{c|}{5.47E-02}      & \multicolumn{1}{c|}{6.66E-02}      & 3.52E+00      \\ \hline
\multicolumn{1}{|c|}{\textbf{5}} & \multicolumn{1}{c|}{1.87E+06}   & \multicolumn{1}{c|}{9.93E-04}          & \multicolumn{1}{c|}{2.60E-02}      & \multicolumn{1}{c|}{3.17E-02}      & 1.44E+01      \\ \hline
\multicolumn{1}{|c|}{\textbf{6}} & \multicolumn{1}{c|}{7.48E+06}   & \multicolumn{1}{c|}{1.43E-04}          & \multicolumn{1}{c|}{1.05E-02}      & \multicolumn{1}{c|}{1.26E-02}      & 5.65E+01      \\ \hline
\multicolumn{1}{|c|}{\textbf{7}} & \multicolumn{1}{c|}{2.99E+07}   & \multicolumn{1}{c|}{4.39E-05}          & \multicolumn{1}{c|}{5.29E-03}      & \multicolumn{1}{c|}{6.62E-03}      & 2.28E+02      \\ \hline
\multicolumn{1}{|c|}{\textbf{8}} & \multicolumn{1}{c|}{1.17E+08}   & \multicolumn{1}{c|}{1.45E-05}          & \multicolumn{1}{c|}{2.96E-03}      & \multicolumn{1}{c|}{3.82E-03}      & 9.01E+02      \\ \hline
\multicolumn{1}{|c|}{\textbf{9}} & \multicolumn{1}{c|}{4.69E+08}   & \multicolumn{1}{c|}{4.28E-06}          & \multicolumn{1}{c|}{1.61E-03}      & \multicolumn{1}{c|}{2.14E-03}      & 3.73E+03      \\ \hline
\end{tabular}
\caption{Pricing European Option using Milstein Scheme}
\label{tab:table_3}
\end{table}

\begin{figure}[!h]
\centering
\begin{subfigure}{0.49\textwidth}
\includegraphics[width=9cm]{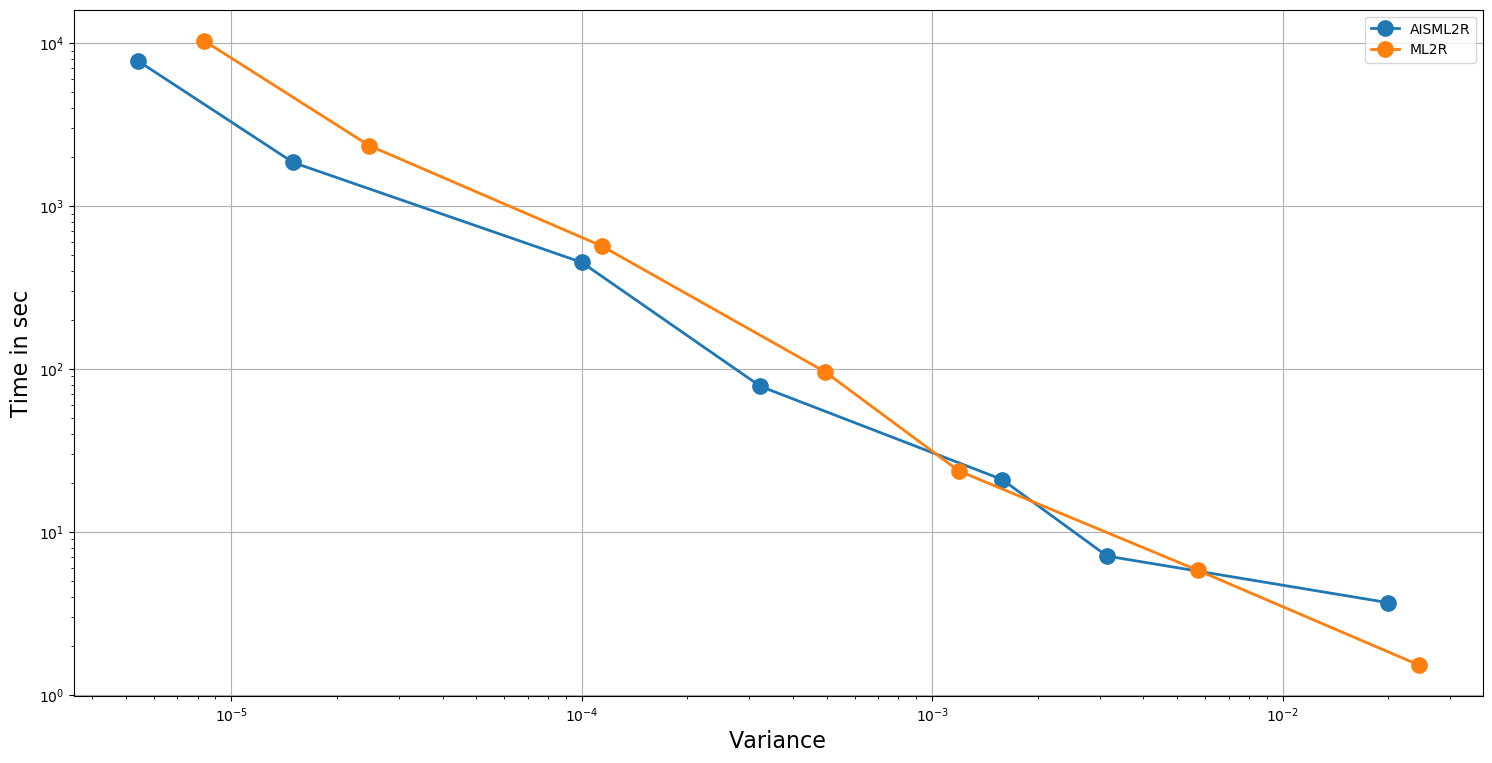}
\caption{Time (y-axis, log scale) as function of variance (x-axis, log scale)}
\end{subfigure}
\hfill
\begin{subfigure}{0.49\textwidth}
\includegraphics[width=9cm]{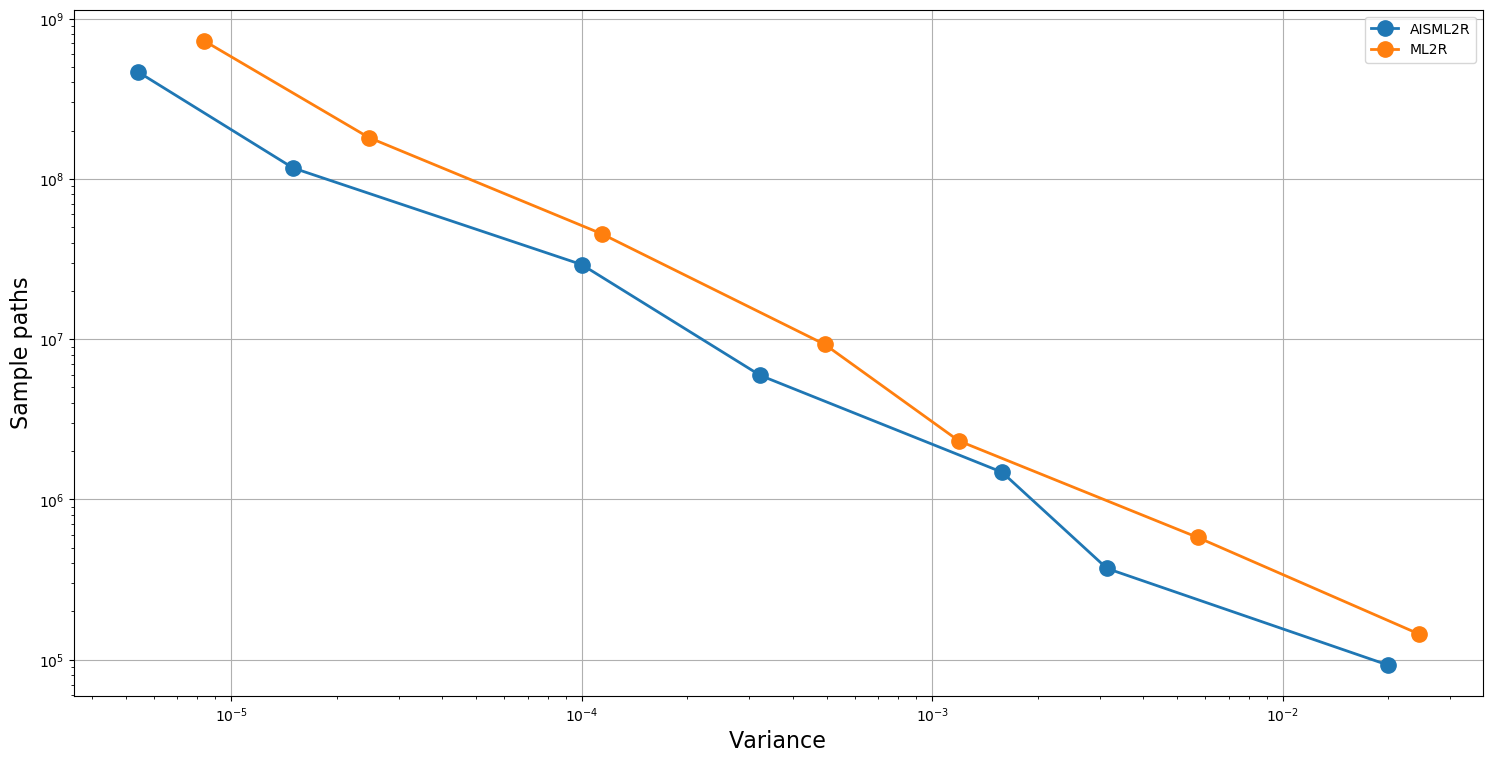}
\caption{Sample Paths (y-axis, log scale) as function of variance (x-axis, log scale)}
\end{subfigure}
\caption{European Call option using Euler Scheme}
\label{fig:fig2}
\end{figure}

\begin{table}[!h]
\centering
\begin{tabular}{|cccccc|}
\hline
\multicolumn{1}{|c|}{}           & \multicolumn{5}{c|}{\textbf{AISML2R}}      \\ \hline
\multicolumn{1}{|l|}{\textbf{k}} & \multicolumn{1}{l|}{\textbf{N}} & \multicolumn{1}{l|}{\textbf{variance}} & \multicolumn{1}{l|}{\textbf{bias}} & \multicolumn{1}{l|}{\textbf{rmse}} & \multicolumn{1}{l|}{\textbf{time}} \\ \hline
\multicolumn{1}{|r|}{\textbf{3}} & \multicolumn{1}{r|}{9.28E+04}   & \multicolumn{1}{r|}{1.99E-02}          & \multicolumn{1}{r|}{4.39E-02}      & \multicolumn{1}{r|}{1.48E-01}      & 3.69E+00                           \\ \hline
\multicolumn{1}{|r|}{\textbf{4}} & \multicolumn{1}{r|}{3.71E+05}   & \multicolumn{1}{r|}{3.15E-03}          & \multicolumn{1}{r|}{1.25E-02}      & \multicolumn{1}{r|}{5.75E-02}      & 7.12E+00                           \\ \hline
\multicolumn{1}{|r|}{\textbf{5}} & \multicolumn{1}{r|}{1.48E+06}   & \multicolumn{1}{r|}{1.58E-03}          & \multicolumn{1}{r|}{2.39E-03}      & \multicolumn{1}{r|}{3.98E-02}      & 2.10E+01                           \\ \hline
\multicolumn{1}{|r|}{\textbf{6}} & \multicolumn{1}{r|}{5.94E+06}   & \multicolumn{1}{r|}{3.23E-04}          & \multicolumn{1}{r|}{6.25E-03}      & \multicolumn{1}{r|}{1.90E-02}      & 7.83E+01                           \\ \hline
\multicolumn{1}{|r|}{\textbf{7}} & \multicolumn{1}{r|}{2.92E+07}   & \multicolumn{1}{r|}{9.99E-05}          & \multicolumn{1}{r|}{4.76E-04}      & \multicolumn{1}{r|}{1.00E-02}      & 4.51E+02                           \\ \hline
\multicolumn{1}{|r|}{\textbf{8}} & \multicolumn{1}{r|}{1.17E+08}   & \multicolumn{1}{r|}{1.50E-05}          & \multicolumn{1}{r|}{6.49E-05}      & \multicolumn{1}{r|}{3.87E-03}      & 1.85E+03                           \\ \hline
\multicolumn{1}{|r|}{\textbf{9}} & \multicolumn{1}{r|}{4.67E+08}   & \multicolumn{1}{r|}{5.40E-06}          & \multicolumn{1}{r|}{4.10E-04}      & \multicolumn{1}{r|}{2.36E-03}      & 7.79E+03                           \\ \hline
\multicolumn{6}{|c|}{\textbf{ML2R}}              \\ \hline
\multicolumn{1}{|c|}{\textbf{}}  & \multicolumn{1}{c|}{\textbf{N}} & \multicolumn{1}{c|}{\textbf{variance}} & \multicolumn{1}{c|}{\textbf{bias}} & \multicolumn{1}{c|}{\textbf{rmse}} & \multicolumn{1}{c|}{\textbf{time}} \\ \hline
\multicolumn{1}{|c|}{\textbf{3}} & \multicolumn{1}{r|}{1.45E+05}   & \multicolumn{1}{r|}{2.44E-02}          & \multicolumn{1}{r|}{5.25E-03}      & \multicolumn{1}{r|}{1.56E-01}      & 1.53E+00                           \\ \hline
\multicolumn{1}{|c|}{\textbf{4}} & \multicolumn{1}{r|}{5.80E+05}   & \multicolumn{1}{r|}{5.71E-03}          & \multicolumn{1}{r|}{5.18E-03}      & \multicolumn{1}{r|}{7.57E-02}      & 5.83E+00                           \\ \hline
\multicolumn{1}{|c|}{\textbf{5}} & \multicolumn{1}{r|}{2.32E+06}   & \multicolumn{1}{r|}{1.19E-03}          & \multicolumn{1}{r|}{8.79E-03}      & \multicolumn{1}{r|}{3.55E-02}      & 2.37E+01                           \\ \hline
\multicolumn{1}{|c|}{\textbf{6}} & \multicolumn{1}{r|}{9.28E+06}   & \multicolumn{1}{r|}{4.93E-04}          & \multicolumn{1}{r|}{2.27E-03}      & \multicolumn{1}{r|}{2.23E-02}      & 9.64E+01                           \\ \hline
\multicolumn{1}{|c|}{\textbf{7}} & \multicolumn{1}{r|}{4.54E+07}   & \multicolumn{1}{r|}{1.14E-04}          & \multicolumn{1}{r|}{1.52E-03}      & \multicolumn{1}{r|}{1.08E-02}      & 5.68E+02                           \\ \hline
\multicolumn{1}{|c|}{\textbf{8}} & \multicolumn{1}{r|}{1.81E+08}   & \multicolumn{1}{r|}{2.47E-05}          & \multicolumn{1}{r|}{8.38E-05}      & \multicolumn{1}{r|}{4.97E-03}      & 2.35E+03                           \\ \hline
\multicolumn{1}{|c|}{\textbf{9}} & \multicolumn{1}{r|}{7.26E+08}   & \multicolumn{1}{r|}{8.37E-06}          & \multicolumn{1}{r|}{6.35E-04}      & \multicolumn{1}{r|}{2.96E-03}      & 1.03E+04                           \\ \hline
\end{tabular}
\caption{Pricing European Option using Euler Scheme}
\label{tab:table_4}
\end{table}

\subsection{Lookback Option}

In this section we consider a partial lookback call option, whose payoff is defined as,
\begin{equation}
P(X_{T})=e^{-rT}(X_{T}-\zeta \min_{t\in[0,T]}X(t))_{+},~\text{where}~\zeta \geq 1.
\end{equation}
The parameters for the purpose of the practical implementation, taken from \cite{lemaire2017multilevel}, are $X_{0}=100$, $r=0.15$, $\sigma=0.1$ and $T=1$. Also, the value of $\zeta$ is set as $\zeta=1.1$. Further, the refinement factor $M$ used to perform the simulation at various level of dicretization, is set as $M=8$, for either case. With these parameters, the price obtained through closed form solution is $\mathbf{J}_{0}=8.89343$. The simulation is carried out using both Milstein and Euler scheme, as described above. In order to exploit the full essence of Milstein discretization, we use the formula described below in order to simulate the value of $\min_{t \in [0,T]}X(t)$ on various refinement level, \cite{giles2008improved}.
\begin{equation}
X^{n_l}_{n,\min}=\frac{1}{2}\left(X^{n_{l}}_{n}+X^{n_{l}}_{n+1}-\sqrt{\left(X^{n_{l}}_{n+1}-X^{n_{l}}_{n}\right)^{2}-2 \left(\sigma X^{n_{l}}_{n}\right)^{2}h_{l}\log U_{n}}\right),~ U_{n}\sim \text{Uinf}(0,1).
\end{equation} 
Further, we set $X^{n_{l-1}}_{n,\min}=X^{n_l}_{n,\min}$. In this case we stop the stochastic algorithm after $200$ iterations for either case.

The results summarized in Table \ref{tab:table_5} and Table \ref{tab:table_6} shows the efficacy of AISML2R over ML2R in the case of both Euler and Milstein discretization scheme, respectively. Figure \ref{fig:fig3} and Figure \ref{fig:fig4} graphically represents the performance of both estimators under Milstein and Euler schemes, respectively. Similar to the case of European Option, we can easily observe that the efficiency of AISML2R over ML2R increases as more accuracy is required. The value of $\text{if}$ goes from $\text{if}_{3}=0.365$ in the worst case to $\text{if}_{8}=3.81$ in the best case, while simulating under Milstein scheme. Under the Euler scheme, the value goes from $\text{if}_{3}=0.333$ in the worst case to $\text{if}_{6}=2.58$ in the best case.   
\begin{table}[!h]
\centering
\begin{tabular}{|cccccc|}
\hline
\multicolumn{1}{|c|}{}           & \multicolumn{5}{c|}{\textbf{AISML2R}}   \\ \hline
\multicolumn{1}{|l|}{\textbf{k}} & \multicolumn{1}{l|}{\textbf{N}} & \multicolumn{1}{l|}{\textbf{variance}} & \multicolumn{1}{l|}{\textbf{bias}} & \multicolumn{1}{l|}{\textbf{rmse}} & \multicolumn{1}{l|}{\textbf{time}} \\ \hline
\multicolumn{1}{|r|}{\textbf{3}} & \multicolumn{1}{r|}{2.86E+03}   & \multicolumn{1}{r|}{1.44E-02}          & \multicolumn{1}{r|}{2.41E-02}      & \multicolumn{1}{r|}{1.22E-01}      & 7.99E-01                           \\ \hline
\multicolumn{1}{|r|}{\textbf{4}} & \multicolumn{1}{r|}{1.14E+04}   & \multicolumn{1}{r|}{3.51E-03}          & \multicolumn{1}{r|}{2.13E-03}      & \multicolumn{1}{r|}{5.93E-02}      & 1.03E+00                           \\ \hline
\multicolumn{1}{|r|}{\textbf{5}} & \multicolumn{1}{r|}{4.58E+04}   & \multicolumn{1}{r|}{7.14E-04}          & \multicolumn{1}{r|}{8.31E-03}      & \multicolumn{1}{r|}{2.80E-02}      & 1.67E+00                           \\ \hline
\multicolumn{1}{|r|}{\textbf{6}} & \multicolumn{1}{r|}{1.83E+05}   & \multicolumn{1}{r|}{1.85E-04}          & \multicolumn{1}{r|}{9.92E-03}      & \multicolumn{1}{r|}{1.68E-02}      & 3.78E+00                           \\ \hline
\multicolumn{1}{|r|}{\textbf{7}} & \multicolumn{1}{r|}{7.33E+05}   & \multicolumn{1}{r|}{6.73E-05}          & \multicolumn{1}{r|}{9.88E-03}      & \multicolumn{1}{r|}{1.28E-02}      & 1.32E+01                           \\ \hline
\multicolumn{1}{|r|}{\textbf{8}} & \multicolumn{1}{r|}{3.39E+06}   & \multicolumn{1}{r|}{1.45E-05}          & \multicolumn{1}{r|}{8.88E-03}      & \multicolumn{1}{r|}{9.66E-03}      & 7.14E+01                           \\ \hline
\multicolumn{1}{|r|}{\textbf{9}} & \multicolumn{1}{r|}{1.36E+07}   & \multicolumn{1}{r|}{5.06E-06}          & \multicolumn{1}{r|}{8.81E-03}      & \multicolumn{1}{r|}{9.09E-03}      & 2.77E+02                           \\ \hline
\multicolumn{6}{|c|}{\textbf{ML2R}}                  \\ \hline
\multicolumn{1}{|c|}{\textbf{k}} & \multicolumn{1}{c|}{\textbf{N}} & \multicolumn{1}{c|}{\textbf{variance}} & \multicolumn{1}{c|}{\textbf{bias}} & \multicolumn{1}{c|}{\textbf{rmse}} & \multicolumn{1}{c|}{\textbf{time}} \\ \hline
\multicolumn{1}{|l|}{\textbf{3}} & \multicolumn{1}{r|}{6.43E+03}   & \multicolumn{1}{r|}{1.29E-02}          & \multicolumn{1}{r|}{1.45E-02}      & \multicolumn{1}{r|}{1.15E-01}      & 2.97E-01                           \\ \hline
\multicolumn{1}{|l|}{\textbf{5}} & \multicolumn{1}{r|}{2.57E+04}   & \multicolumn{1}{r|}{4.37E-03}          & \multicolumn{1}{r|}{1.17E-03}      & \multicolumn{1}{r|}{6.61E-02}      & 6.39E-01                           \\ \hline
\multicolumn{1}{|c|}{\textbf{5}} & \multicolumn{1}{r|}{1.03E+05}   & \multicolumn{1}{r|}{8.24E-04}          & \multicolumn{1}{r|}{1.19E-02}      & \multicolumn{1}{r|}{3.11E-02}      & 1.91E+00                           \\ \hline
\multicolumn{1}{|c|}{\textbf{6}} & \multicolumn{1}{r|}{4.12E+05}   & \multicolumn{1}{r|}{2.43E-04}          & \multicolumn{1}{r|}{8.33E-03}      & \multicolumn{1}{r|}{1.77E-02}      & 7.42E+00                           \\ \hline
\multicolumn{1}{|c|}{\textbf{7}} & \multicolumn{1}{r|}{1.65E+06}   & \multicolumn{1}{r|}{6.59E-05}          & \multicolumn{1}{r|}{9.35E-03}      & \multicolumn{1}{r|}{1.24E-02}      & 3.02E+01                           \\ \hline
\multicolumn{1}{|c|}{\textbf{8}} & \multicolumn{1}{r|}{7.93E+06}   & \multicolumn{1}{r|}{1.39E-05}          & \multicolumn{1}{r|}{9.05E-03}      & \multicolumn{1}{r|}{9.79E-03}      & 1.76E+02                           \\ \hline
\multicolumn{1}{|c|}{\textbf{9}} & \multicolumn{1}{r|}{3.17E+07}   & \multicolumn{1}{r|}{2.62E-06}          & \multicolumn{1}{r|}{9.08E-03}      & \multicolumn{1}{r|}{9.22E-03}      & 7.10E+02                           \\ \hline
\end{tabular}
\caption{Pricing Lookback Option using Euler Scheme}
\label{tab:table_5}
\end{table}

\begin{figure}[!h]
\centering
\begin{subfigure}{0.49\textwidth}
\includegraphics[width=9cm]{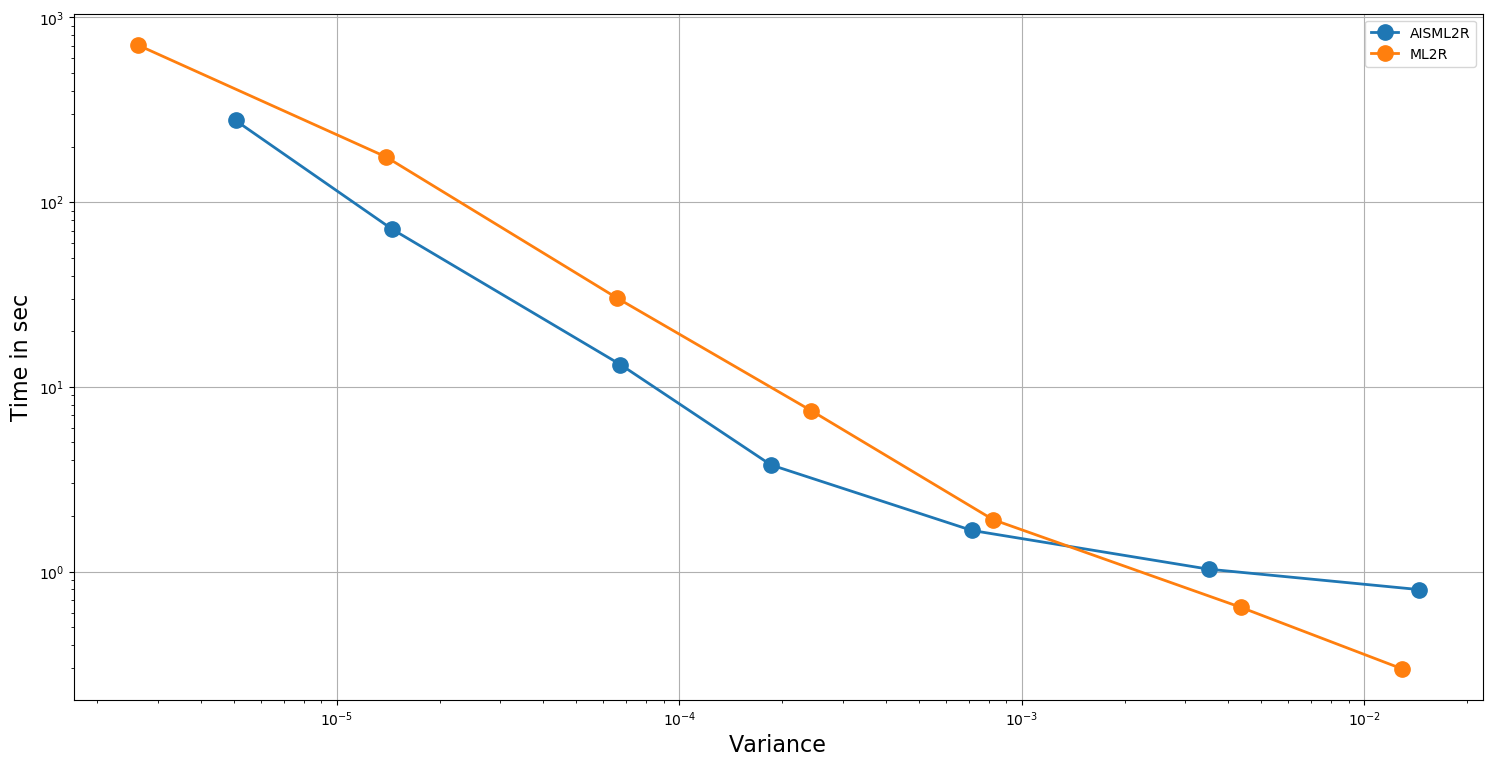}
\caption{Time (y-axis, log scale) as function of variance (x-axis, log scale)}
\end{subfigure}
\hfill
\begin{subfigure}{0.49\textwidth}
\includegraphics[width=9cm]{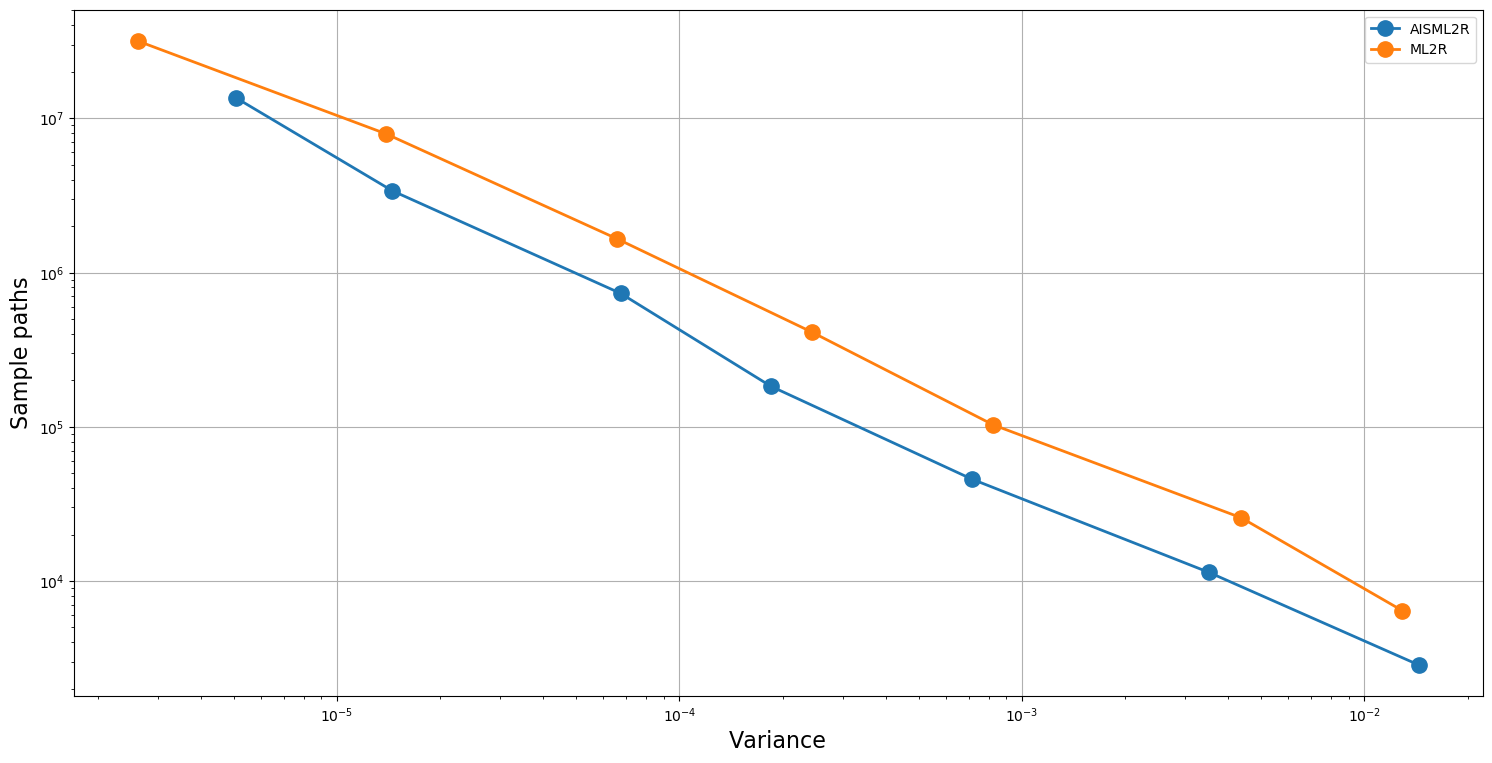}
\caption{Sample Paths (y-axis, log scale) as function of variance (x-axis, log scale)}
\end{subfigure}
\caption{Pricing Lookback Call option using Euler Scheme}
\label{fig:fig3}
\end{figure}

\begin{table}[]
\centering
\begin{tabular}{|cccccc|}
\hline
\multicolumn{1}{|c|}{}           & \multicolumn{5}{c|}{\textbf{AISML2R}}    \\ \hline
\multicolumn{1}{|l|}{\textbf{k}} & \multicolumn{1}{l|}{\textbf{N}} & \multicolumn{1}{l|}{\textbf{variance}} & \multicolumn{1}{l|}{\textbf{bias}} & \multicolumn{1}{l|}{\textbf{rmse}} & \multicolumn{1}{l|}{\textbf{time}} \\ \hline
\multicolumn{1}{|r|}{\textbf{3}} & \multicolumn{1}{r|}{1.59E+03}   & \multicolumn{1}{r|}{1.29E-02}          & \multicolumn{1}{r|}{7.52E-02}      & \multicolumn{1}{r|}{1.36E-01}      & 7.71E-01                           \\ \hline
\multicolumn{1}{|r|}{\textbf{4}} & \multicolumn{1}{r|}{6.38E+03}   & \multicolumn{1}{r|}{4.31E-03}          & \multicolumn{1}{r|}{2.92E-02}      & \multicolumn{1}{r|}{7.19E-02}      & 9.21E-01                           \\ \hline
\multicolumn{1}{|r|}{\textbf{5}} & \multicolumn{1}{r|}{2.55E+04}   & \multicolumn{1}{r|}{1.01E-03}          & \multicolumn{1}{r|}{2.99E-02}      & \multicolumn{1}{r|}{4.36E-02}      & 1.30E+00                           \\ \hline
\multicolumn{1}{|r|}{\textbf{6}} & \multicolumn{1}{r|}{1.02E+05}   & \multicolumn{1}{r|}{2.27E-04}          & \multicolumn{1}{r|}{2.64E-02}      & \multicolumn{1}{r|}{3.04E-02}      & 2.52E+00                           \\ \hline
\multicolumn{1}{|r|}{\textbf{7}} & \multicolumn{1}{r|}{4.08E+05}   & \multicolumn{1}{r|}{7.20E-05}          & \multicolumn{1}{r|}{2.43E-02}      & \multicolumn{1}{r|}{2.57E-02}      & 7.40E+00                           \\ \hline
\multicolumn{1}{|r|}{\textbf{8}} & \multicolumn{1}{r|}{1.61E+06}   & \multicolumn{1}{r|}{1.14E-05}          & \multicolumn{1}{r|}{2.36E-02}      & \multicolumn{1}{r|}{2.38E-02}      & 3.27E+01                           \\ \hline
\multicolumn{1}{|r|}{\textbf{9}} & \multicolumn{1}{r|}{6.44E+06}   & \multicolumn{1}{r|}{4.40E-06}          & \multicolumn{1}{r|}{2.27E-02}      & \multicolumn{1}{r|}{2.28E-02}      & 1.17E+02                           \\ \hline
\multicolumn{6}{|c|}{\textbf{ML2R}}              \\ \hline
\multicolumn{1}{|c|}{\textbf{k}} & \multicolumn{1}{c|}{\textbf{N}} & \multicolumn{1}{c|}{\textbf{variance}} & \multicolumn{1}{c|}{\textbf{bias}} & \multicolumn{1}{c|}{\textbf{rmse}} & \multicolumn{1}{c|}{\textbf{time}} \\ \hline
\multicolumn{1}{|l|}{\textbf{3}} & \multicolumn{1}{r|}{5.16E+03}   & \multicolumn{1}{r|}{1.24E-02}          & \multicolumn{1}{r|}{2.65E-02}      & \multicolumn{1}{r|}{1.14E-01}      & 2.93E-01                           \\ \hline
\multicolumn{1}{|l|}{\textbf{5}} & \multicolumn{1}{r|}{2.06E+04}   & \multicolumn{1}{r|}{3.00E-03}          & \multicolumn{1}{r|}{2.76E-02}      & \multicolumn{1}{r|}{6.13E-02}      & 5.02E-01                           \\ \hline
\multicolumn{1}{|c|}{\textbf{5}} & \multicolumn{1}{r|}{8.25E+04}   & \multicolumn{1}{r|}{8.71E-04}          & \multicolumn{1}{r|}{2.90E-02}      & \multicolumn{1}{r|}{4.14E-02}      & 1.33E+00                           \\ \hline
\multicolumn{1}{|c|}{\textbf{6}} & \multicolumn{1}{r|}{3.30E+05}   & \multicolumn{1}{r|}{4.09E-04}          & \multicolumn{1}{r|}{2.34E-02}      & \multicolumn{1}{r|}{3.09E-02}      & 4.93E+00                           \\ \hline
\multicolumn{1}{|c|}{\textbf{7}} & \multicolumn{1}{r|}{1.32E+06}   & \multicolumn{1}{r|}{7.61E-05}          & \multicolumn{1}{r|}{2.48E-02}      & \multicolumn{1}{r|}{2.63E-02}      & 1.99E+01                           \\ \hline
\multicolumn{1}{|c|}{\textbf{8}} & \multicolumn{1}{r|}{5.21E+06}   & \multicolumn{1}{r|}{1.71E-05}          & \multicolumn{1}{r|}{2.39E-02}      & \multicolumn{1}{r|}{2.42E-02}      & 8.31E+01                           \\ \hline
\multicolumn{1}{|c|}{\textbf{9}} & \multicolumn{1}{r|}{2.08E+07}   & \multicolumn{1}{r|}{4.02E-06}          & \multicolumn{1}{r|}{2.26E-02}      & \multicolumn{1}{r|}{2.27E-02}      & 3.30E+02                           \\ \hline
\end{tabular}
\caption{Pricing Lookback Option using Milstein Scheme}
\label{tab:table_6}
\end{table}

\begin{figure}[!h]
\centering
\begin{subfigure}{0.49\textwidth}
\includegraphics[width=9cm]{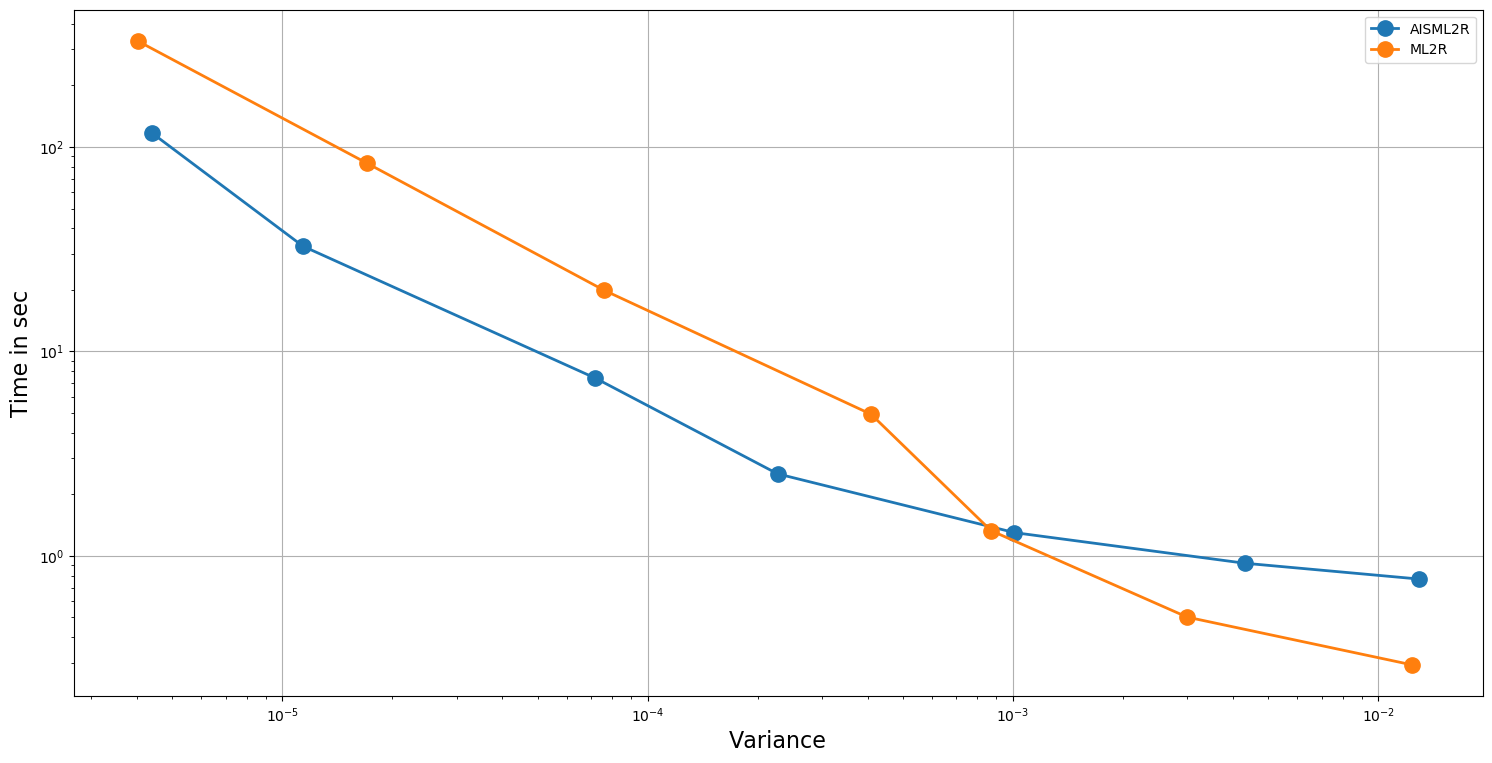}
\caption{Time (y-axis, log scale) as function of variance (x-axis, log scale)}
\end{subfigure}
\hfill
\begin{subfigure}{0.49\textwidth}
\includegraphics[width=9cm]{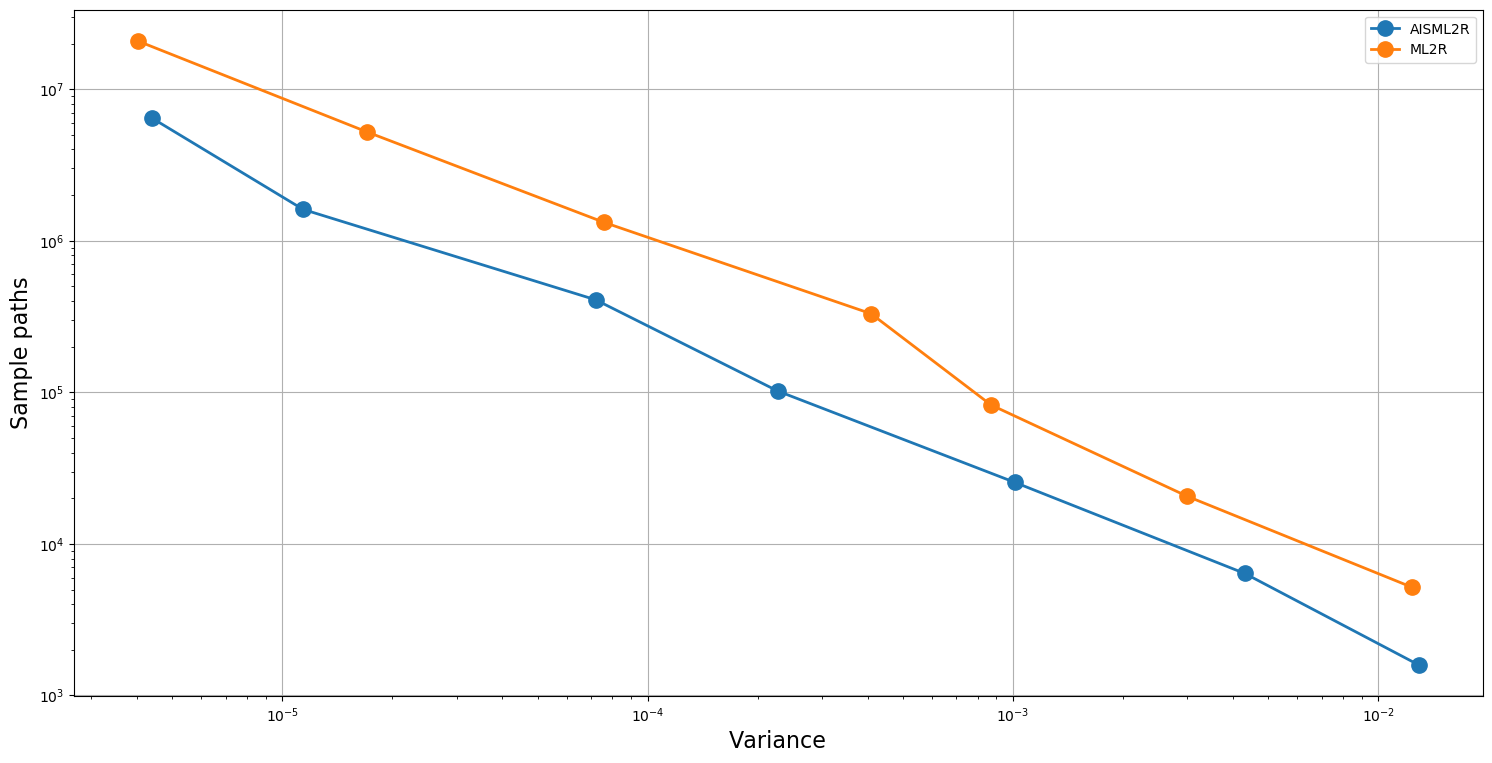}
\caption{Sample Paths (y-axis, log scale) as function of variance(x-axis, log scale)}
\end{subfigure}
\caption{Pricing Lookback Call option using Milstein Scheme}
\label{fig:fig4}
\end{figure}

\newpage

\section{Conclusion}
This paper presents a novel combination of ML2R and importance sampling algorithm developed upon the studies carried out in \cite{lemaire2017multilevel,arouna2004adaptative,alaya2015importance,alaya2016improved,kebaier2018coupling}. As evident from the numerical results presented in Section \ref{section_numerical}, the adaptive estimator outperforms AISML2R whenever a high degree of accuracy is required. One of the critical features of this, presented in the paper, is the use of a higher order scheme for discretizing the underlying SDE. Therefore, accommodating the antithetic feature developed by Giles \cite{giles2013antithetic} to simulate higher dimensional SDEs, using the Milstein scheme, in the realm of the above-developed algorithm, can substantially affect the overall computational time. Further, the completely automated nature of the algorithm gives its edge over the already existing estimators. However, it may be pointed out that a small amount of fine-tuning may be required while dealing with the Euler discretization scheme.
In this paper, besides establishing the existence and uniqueness of the optimal parameter on various levels of resolutions (Section \ref{section_existanduni}), we have also proved the Strong Law of Large Numbers in Section \ref{section_mainresult}, proving the convergence of the adaptive estimator to the standard expectation. However, proving the Central Limit theorem still requires further detailed study of our estimator.

\newpage

\bibliographystyle{ieeetr}

\bibliography{BIBLIO_01}

\end{document}